\documentclass[acmsmall,screen,nonacm]{acmart}

\AtBeginDocument{%
  }

\usepackage{microtype}
\usepackage{booktabs}
\usepackage{longtable}
\usepackage{subcaption}
\usepackage{empheq}
\usepackage{balance}
\usepackage{pifont}
\usepackage{wrapfig}
\usepackage{hyperref}
\usepackage{fancyvrb}
\usepackage{xspace}
\usepackage{algorithmicx}
\usepackage[noend]{algpseudocode}
\usepackage{algorithm}
\algrenewcommand{\algorithmiccomment}[1]{\hfill{\color{gray}\(\triangleright\) #1}}
\usepackage{amsfonts}
\usepackage{amsmath}
\usepackage{amsthm}
\usepackage{semantic}
\usepackage{listings}
\usepackage{array}
\usepackage{fancyvrb}
\usepackage{mathpartir}
\usepackage{stmaryrd}
\usepackage{graphicx}
\usepackage{caption}
\usepackage{syntax, etoolbox}
\usepackage{tikz}
\usepackage{proof}
\usepackage{color}
\usepackage{cleveref}
\usepackage{mathtools}
\usepackage{adjustbox}
\usepackage{bbding}
\usepackage{soul}
\usepackage{multirow}
\usepackage{placeins}
\usepackage[abbreviations]{glossaries-extra}
\makeglossaries
\usepackage{bcprules}
\usepackage{caption}
\usepackage{wasysym}

\definecolor{ForestGreen}{RGB}{34,139,34}

\usepackage[mathscr]{euscript}

\usepackage{enumitem}

\usepackage{comment}
\usepackage{wrapfig}

\theoremstyle{definition}
\newtheorem{definition}{Definition}[section]

\DeclareMathOperator*{\argmin}{arg\,min}

\usepackage[utf8x]{inputenc}
\lstdefinelanguage{eggccir}{
  keywords={fn, let, if, with, then, else, do, while, print,store,load,alloc,ctx,dowhile},
  keywordstyle=\color{blue}\bfseries,
  morekeywords={[2]},
  keywordstyle={[2]\color{teal}},
  morekeywords={[3]InLoop, InIf,
  },
  keywordstyle={[3]\color{purple}},
  morecomment=[l]{//},
  morecomment=[s]{/*}{*/},
  commentstyle=\color{gray},
  morestring=[b]{"},
  stringstyle=\color{red},
  sensitive=true,
}

\lstdefinelanguage{Rust}{
  keywords={print, do, fn, let, mut, match, if, else, while, for, loop, in, break, continue, return, struct, enum, impl, trait, pub, mod, use, crate, super, self, Self, ref, move, as, const, static, type, where, async, await, dyn, unsafe, extern},
  keywordstyle=\color{blue}\bfseries,
  morekeywords={[2]String, Vec, Option, Result, Some, None, Ok, Err, Box, Rc, Arc, Cell, RefCell, HashMap, HashSet},
  keywordstyle={[2]\color{teal}},
  morekeywords={[3]println, format, dbg},
  keywordstyle={[3]\color{orange}},
  morecomment=[l]{//},
  morecomment=[s]{/*}{*/},
  commentstyle=\color{gray},
  morestring=[b]{"},
  stringstyle=\color{red},
  sensitive=true
}

\lstdefinelanguage{egglog}{
  sensitive,
  morecomment=[l]{;},
  moredelim=[s][{\color[rgb]{0,0,0.75}}]{\#[}{]},
  morestring=[b]{"},
  alsodigit={},
  alsoother={},
  alsoletter={!},
  alsoletter={:},
  alsoletter={-},
  otherkeywords={=>},
  otherkeywords={:-},
  otherkeywords={|},
  otherkeywords={.},
  morekeywords={assert, function, relation, datatype, rewrite,
                rule, union, merge, default, when,
                check, set, run, sort, define, extract,
                old, new},
  otherkeywords={:merge, :default, :when}
}%

\lstdefinestyle{eggcc}{
  basicstyle=\small\ttfamily,
}
\lstdefinestyle{eggccterm}{
  style=eggcc,
  numbers=none,
  xleftmargin=0pt,
  mathescape=false,
  aboveskip=0pt,
  belowskip=0pt,
  columns=fullflexible,
  showstringspaces=false
}

\lstnewenvironment{eggccterm}[1][]{\lstset{style=eggccterm,#1}}{}

\newcommand{\rvsdg}{RVSDG\xspace}
\newcommand{\rvsdgs}{RVSDGs\xspace}

\newcommand{\ir}{RVSDG$_{\text{EQ}}$\xspace}

\newcommand{\gurobi}{Gurobi\xspace}
\newcommand{\cbc}{CBC\xspace}
\newcommand{\eggcc}{\textsc{eggcc}\xspace}
\newcommand{\eggccT}{EGGCC}
\newcommand{\egraph}{e-graph\xspace}
\newcommand{\eclass}{e-class\xspace}
\newcommand{\eclasses}{e-classes\xspace}

\newcommand{\Egraph}{E-graph\xspace}
\newcommand{\egraphs}{e-graphs\xspace}
\newcommand{\Egraphs}{E-graphs\xspace}
\newcommand{\classof}[2]{\ensuremath{\mathbb{C}_{#2}({#1})}}
\newcommand{\termsc}[2]{\ensuremath{\mathbb{T}_{#2}({#1})}}

\newcommand{\bril}{Bril\xspace}
\newcommand{\polybench}{PolyBench\xspace}
\newcommand{\peggy}{Peggy\xspace}
\newcommand{\llvm}{LLVM\xspace}
\newcommand{\eqsat}{equality saturation\xspace}
\newcommand{\Eqsat}{Equality saturation\xspace}

\newcommand{\tighten}{\looseness=-1}

\newcommand{\todo}[1]{{\color{red}[TODO: #1]}}
\newcommand{\rt}[1]{{\color{red}[RT: #1]}}
\newcommand{\statepath}{state path\xspace}
\newcommand{\effectunsafe}{effect-unsafe\xspace}

\makeatletter
\DeclareRobustCommand{\effectsafe}{%
  \@ifnextchar( {\effectsafe@paren}{\effectsafe@plain}%
}
\newcommand{\effectsafe@plain}{effect-safe\xspace}
\def\effectsafe@paren(#1){%
  \ensuremath{\mathsf{EffSafe}(#1)}%
}
\makeatother
\newcommand{\Effectsafe}{Effect-safe\xspace}
\newcommand{\effectsafety}{effect-safety\xspace}
\newcommand{\effectsafeext}{effect-safe extraction\xspace}
\newcommand{\EffectSafeExt}{Effect-Safe Extraction\xspace}
\newcommand{\opteffectsafeext}{optimal effect-safe extraction\xspace}
\newcommand{\statewalk}{statewalk\xspace}

\newcommand{\statewalkwidth}{statewalk width\xspace}
\newcommand{\statewalks}{statewalks\xspace}

\newcommand{\extractionset}{extractable set\xspace}
\newcommand{\extractionsets}{extractable sets\xspace}

\newcommand{\effectful}{effectful\xspace}

\newcommand{\ES}{\mathsf{es}}

\newcommand{\statewalkequiv}{\mathrel{\equiv_{\mathsf{SW}}}}

\newcommand{\sw}{\ensuremath{\boldsymbol{\omega}}}

\newcommand{\eclasssws}[1]{\ensuremath{\mathsf{SW}(#1)}}
\newcommand{\children}[1]{\ensuremath{\mathsf{Children}(#1)}}

\newcommand{\subterms}[1]{\ensuremath{\mathsf{Subterms}(#1)}}

\newcommand{\swwidth}[1]{\ensuremath{\operatorname{sww}(#1)}}

\newcommand{\tiger}{\textsc{statewalk DP}\xspace}

\newcommand{\egglog}{\textsc{Egglog}\xspace}

\newcommand{\M}[1]{\ensuremath{#1}}
\newcommand{\N}{\M{\mathbb{N}}}

\newcommand{\uop}[2]{\texttt{#1}\ensuremath{({#2})}}
\newcommand{\bop}[3]{\texttt{#1}\ensuremath{(#2,~ #3)}}
\newcommand{\trop}[4]{\texttt{#1}\ensuremath{(#2,~ #3,~ #4)}}

\newcommand{\Arg}{\ensuremath{\text{Arg}}}
\newcommand{\argt}{\texttt{Arg}\xspace}

\newcommand{\kwdload}{\text{\lstinline|load|}}
\newcommand{\kwdstore}{\text{\lstinline|store|}}

\newcommand{\val}{\ensuremath{v}}

\newcommand{\stateval}{\ensuremath{\sigma}}
\newcommand{\statevals}{\ensuremath{\Sigma}}

\newcommand{\argval}{a}
\newcommand{\baseval}{\ensuremath{\phi}}

\newabbreviation{cfg}{CFG}{control-flow graph}
\newabbreviation{peg}{PEG}{program expression graph}
\newabbreviation{rvsdg}{RVSDG}{Regionalized Value-State Dependence Graph}
\newabbreviation{ir}{IR}{intermediate representation}
\newabbreviation{ssa}{SSA}{static single-assignment form}
\newabbreviation{dag}{DAG}{directed acyclic graph}
\newabbreviation{stg}{STG}{Spineless Tagless G-Machine}
\newabbreviation{ghc}{GHC}{Glasgow Haskell Compiler}

\newcommand{\C}[1]{\texttt{#1}\xspace}
\newcommand{\F}[1]{\scriptstyle{\lstset{basicstyle=\scriptsize\ttfamily}#1}\xspace}

\DeclareMathSymbol{\mhyphen}{\mathord}{AMSa}{"39}

\newcommand{\bigstep}[3]{\ensuremath{#1 \vdash #2 \Downarrow #3}}

\newcommand{\myparagraph}[1]{\textbf{#1.}\ \;}

\newcommand{\overalleggcctigerILPNOMINOzeroOzeroEggccExtractionTime}{30.589\xspace}

\newcommand{\overalleggcctigerILPOzeroOzeroEggccExtractionTime}{26.211\xspace}

\newcommand{\overalleggcctigerOzeroOzeroEggccNonExtractionTime}{7.925\xspace}

\newcommand{\overalleggcctigerOzeroOzeroEggccExtractionTime}{1.536\xspace}

\newcommand{\overallllvmOthreeOzeroLLVMCompileTime}{0.834\xspace}

\newcommand{\polybenchOveralleggcctigerOzeroOzeroNormalizedMean}{0.571\xspace}

\newcommand{\polybenchOverallllvmOthreeOzeroNormalizedMean}{0.326\xspace}

\newcommand{\polybenchOverallllvmOthreeOthreeNormalizedMean}{0.312\xspace}

\newcommand{\brilOveralleggcctigerOzeroOzeroNormalizedMean}{0.876\xspace}

\newcommand{\brilOverallllvmOthreeOzeroNormalizedMean}{0.695\xspace}

\newcommand{\brilOverallllvmOthreeOthreeNormalizedMean}{0.542\xspace}

\newcommand{\raytraceOveralleggcctigerOzeroOzeroEggccNonExtractionTime}{171.605\xspace}

\newcommand{\raytraceOveralleggcctigerOzeroOzeroEggccExtractionTime}{22.603\xspace}

\newcommand{\AvgPolybenchRegionalizedEgraphsPerBenchmark}{209\xspace}
\newcommand{\NumpolybenchBenchmarks}{30\xspace}

\newcommand{\NumbrilBenchmarks}{64\xspace}

\newcommand{\NumBenchmarksAllSuites}{96\xspace}

\newcommand{\NumeggcctigerILPGurobiRegionTimeoutBenchmarksBril}{17\xspace}
\newcommand{\NumeggcctigerILPNOMINRegionTimeoutBenchmarks}{4\xspace}

\newcommand{\AvgEggcctigerOZeroOZeroExtractionTimeSecsOnILPGurobiSolvedBenchmarks}{0.360\xspace}

\newcommand{\NumRaytraceRegionalizedEgraphs}{724\xspace}

\newcommand{\MaxRaytraceTigerExtractionTimeSecs}{0.018141\xspace}
\newcommand{\NumRaytraceILPRegionalizedEgraphTimeouts}{35\xspace}
\newcommand{\NumRegionalizedEgraphs}{9610\xspace}
\newcommand{\NumILPGurobiRegionTimeouts}{234\xspace}

\newcommand{\AvgILPGurobiRegionExtractTimeSecs}{0.965\xspace}
\newcommand{\AvgTigerLiveOnSatelliteOnRegionExtractTimeSecs}{0.000426\xspace}
\newcommand{\MaxTigerLiveOnSatelliteOnRegionExtractTimeSecs}{0.0619\xspace}
\newcommand{\MeanStatewalkWidthAllRegions}{33.15\xspace}
\newcommand{\MaxStatewalkWidthAllBenchmarks}{106,259\xspace}

\newcommand{\PercentRegionsStatewalkWidthUnderTwo}{94.7\%\xspace}

\newcommand{\PercentRegionsStatewalkWidthUnderSix}{98.4\%\xspace}

\newcommand{\StatewalkWidthHeatThreeDLiveOnSatelliteOn}{24\xspace}
\newcommand{\GeometricMeanTigerSpeedupVsGurobiWithTimeouts}{520\xspace}
\newcommand{\GeometricMeanTigerSpeedupVsGurobiWithTimeoutsX}{520$\times$\xspace}
\newcommand{\GeometricMeanILPEncodingVarsPerEgraphSize}{4.55\xspace}
\newcommand{\MaxILPEncodingVarsPerEgraphSize}{17.63\xspace}

\newcommand{\FenwickMeanCycleSpeedupEggcctigerWITHCTXOZeroOZeroVsllvmOThreeOThree}{6.52\xspace}
\newcommand{\FenwickMeanCycleSpeedupEggcctigerWITHCTXOZeroOZeroVsllvmOThreeOZero}{23.01\xspace}

\newcommand{\ArticlePages}{28}
\ifdefined\ArticleOnly
  
\fi
\AtEndDocument{%
  \ifnum\getpagerefnumber{mainend}>0
    \ifnum\getpagerefnumber{mainend}=\ArticlePages\relax\else
      \ClassWarningNoLine{acmart}{Article body ends on page
        \getpagerefnumber{mainend}, but \string\ArticlePages\space says
        \ArticlePages. Update \string\ArticlePages\space in main.tex to match,
        and split the PDF after that page}%
    \fi
  \fi
}

\begin{document}

\title{Efficient Extraction for Effectful E-graphs}



\author{Oliver Flatt}
\orcid{0000-0002-0656-235X}
\affiliation{%
  \institution{University of Washington}
  \city{Seattle}
  \state{Washington}
  \country{USA}
}
\email{oflatt@cs.washington.edu}

\author{Anjali Pal}
\orcid{0009-0006-0692-0707}
\affiliation{%
  \institution{University of Washington}
  \city{Seattle}
  \state{Washington}
  \country{USA}
}
\email{anjalip@cs.washington.edu}

\author{Yihong Zhang}
\orcid{0009-0006-5928-4396}
\affiliation{%
  \institution{University of Washington}
  \city{Seattle}
  \state{Washington}
  \country{USA}
}
\email{yz489@cs.washington.edu}

\author{Ryan Tjoa}
\orcid{0009-0003-0731-5398}
\affiliation{%
  \institution{University of Washington}
  \city{Seattle}
  \state{Washington}
  \country{USA}
}
\email{rtjoa@cs.washington.edu}

\author{Kirsten Graham}
\orcid{0009-0002-7245-184X}
\affiliation{%
  \institution{University of Washington}
  \city{Seattle}
  \state{Washington}
  \country{USA}
}
\email{kmgraham@cs.washington.edu}

\author{Alex Fischman}
\orcid{0009-0000-0112-6711}
\affiliation{%
  \institution{University of Washington}
  \city{Seattle}
  \state{Washington}
  \country{USA}
}
\email{adfisch@cs.washington.edu}

\author{Chandrakana Nandi}
\orcid{0000-0001-8633-8413}
\affiliation{%
  \institution{Certora Inc.}
  \city{Seattle}
  \state{Washington}
  \country{USA}
}
\affiliation{%
  \institution{University of Washington}
  \city{Seattle}
  \state{Washington}
  \country{USA}
}
\email{cnandi@cs.washington.edu}

\author{Eli Rosenthal}
\orcid{0009-0008-9386-1614}
\affiliation{%
  \institution{Google}
  \city{Oakland}
  \state{California}
  \country{USA}
}
\email{ezrosenthal@gmail.com}

\author{Zachary Tatlock}
\orcid{0000-0002-4731-0124}
\affiliation{%
  \institution{University of Washington}
  \city{Seattle}
  \state{Washington}
  \country{USA}
}
\email{ztatlock@cs.washington.edu}

\author{Haobin Ni}
\orcid{0000-0002-7718-7905}
\affiliation{%
  \institution{University of Washington}
  \city{Seattle}
  \state{Washington}
  \country{USA}
}
\email{hn42@cs.washington.edu}

\renewcommand{\shortauthors}{Flatt et al.}

\begin{abstract}
  \Egraphs have enabled recent advances in
  program optimization, synthesis, and verification,
  yet remain difficult to apply to effectful programs
  whose memory and I/O operations must respect execution order.
Existing effect-aware extraction algorithms
  rely on integer linear programming (ILP)
  and dominate total runtime.\tighten

We introduce \textbf{\tiger},
  a new extraction algorithm that
  enforces effect ordering efficiently without external solvers.
We prove that finding \textit{any} effect-safe extraction is NP-complete,
  but show that \tiger is tractable in \textit{\statewalkwidth},
  a parameter that measures the complexity of
  dataflow interactions among effects.
In practice, \statewalkwidth generally remains small,
  enabling \tiger to achieve
  order-of-magnitude speedups over ILP extraction
  while producing programs comparable to LLVM
  across our benchmarks.
We implement the algorithm in \eggcc,
  a prototype \egraph-based compiler for imperative Bril programs,
  and demonstrate that effect-aware extraction
  is no longer a bottleneck.\tighten


\end{abstract}

\begin{CCSXML}
<ccs2012>
   <concept>
       <concept_id>10011007.10011006.10011041</concept_id>
       <concept_desc>Software and its engineering~Compilers</concept_desc>
       <concept_significance>500</concept_significance>
       </concept>
   <concept>
       <concept_id>10011007.10010940.10010992.10010998.10011000</concept_id>
       <concept_desc>Software and its engineering~Automated static analysis</concept_desc>
       <concept_significance>300</concept_significance>
       </concept>
 </ccs2012>
\end{CCSXML}

\ccsdesc[500]{Software and its engineering~Compilers}
\ccsdesc[300]{Software and its engineering~Automated static analysis}

\keywords{equality saturation, e-graphs, extraction, effects, compiler
  optimization, term rewriting}

\maketitle

\AtBeginEnvironment{grammar}{\small}
\section{Introduction}
\label{sec:intro}

\Eqsat and \egraphs~\cite{denali, nelson, egg, egglog}
  power new state-of-the-art program optimizers, synthesizers, and verifiers
  across a variety of domains~\cite{
    chassis, peggy, szalinski, spores, tensat, felix, mlir-egg, aegraph, caviar,
    reducing-overparameterization, optimizing-tensor,
    recto2023compilerarrayprogramsvectorized, megalibm}.
These tools follow a common workflow:
  initialize an \egraph $\mathcal{G}$ with input program $P$;
  expand $\mathcal{G}$ by
  applying semantics-preserving rules
  to build a large space of programs equivalent to $P$;
  then \textit{extract} an optimized
    version of $P$ from $\mathcal{G}$
    guided by a cost function (e.g., program size).
Decoupling the
  exploration of equivalent program fragments from
  extraction mitigates phase ordering~\cite{phaseordering},
  freeing engineers to focus on semantics-preserving optimization rules, not
  transformation order.

Most \egraph-based optimizers only operate over \textit{pure} programs
  (i.e., those without side effects),
  though as our case study shows,
  optimizing \textit{effectful} programs with \egraphs can
  yield significant speedups over LLVM (\autoref{sec:Fenwick}).
Rewriting effectful programs using \egraphs is difficult due to
  the tension between effectful optimizations and congruence.
Congruence is a fundamental property of \egraphs that requires all equivalent
  terms to be interchangeable,
  which is essential
  for efficient representation and exploration of equivalent programs.
Unfortunately, under congruence even basic rewrites
  over effectful programs
  lead to \textit{ambiguous} terms whose subterms
  require incompatible effect orders.
The only viable approach is to tolerate these ambiguous terms,
  thereby relaxing the equivalence relation the \egraph represents.
Optimization rules must still be semantics-preserving according to the relaxed
  relation,
  but are free to relate terms with equivalent effect orderings.
This relaxed equivalence relation creates the key challenge hindering
  effectful \egraphs: enforcing valid effect ordering at extraction time,
  which we call \emph{\effectsafe extraction}.



Optimal extraction from pure \egraphs is known to be
  NP-complete~\cite{stepp-thesis,yihong-np},
  but prior to this work
  there was no known complexity result characterizing \effectsafe extraction\footnotemark[2].
We show that optimal \effectsafe extraction is NP-complete,
  and surprisingly, that finding \textit{any} \effectsafe extraction
  is NP-complete as well (\autoref{sec:np}).
\autoref{tab:theory} shows the complexity of pure and \effectsafe extraction.

\begin{table}[h]
\centering
\begin{minipage}[t]{0.48\linewidth}
\centering
\caption{Complexity bounds of extraction.
  Even without considering effect ordering,
  finding the optimal extraction under a cost model
  is known to be NP-complete~\cite{stepp-thesis, yihong-np}.
  Bold indicates our contribution.}
\label{tab:theory}
\footnotesize
\begin{tabular}{l|l|l}
 & Pure & Effect-Safe \\ \hline
Any & P & \textbf{NP-complete} \\
Optimal   & NP-complete & \textbf{NP-complete} \\
\end{tabular}
\end{minipage}
\hfill
\begin{minipage}[t]{0.48\linewidth}
\centering
\caption{Practical approaches to extraction.
  Most applications of pure \egraphs use greedy best-effort extraction algorithms,
  which work well in practice~\cite{extraction-gym, herbie, smoothe, eboost}.
  Bold indicates our contribution.
}
\label{tab:practice}
\footnotesize
\begin{tabular}{l|l|l}
 & Pure & Effect-Safe \\ \hline
Best-Effort\footnotemark & Many, see \S9 & \cite{peggy}, \textbf{\tiger} \\
Optimal     & \cite{treewidth-extract} & Future work  \\
\end{tabular}
\end{minipage}
\end{table}
\footnotetext{
  Note that in the context of optimization,
  the input program is a trivial, yet unsatisfying, solution
  to the extraction problem.
  Instead of attempting to find \textit{any} extraction,
    most implementations take a greedy approach
    to produce best-effort results.
}

Of course, just because a problem is NP-complete does not mean it goes away.
In practice, researchers have developed many extraction algorithms
  for pure \egraphs, summarized in \autoref{tab:practice}.
A few systems sidestep the challenge of \effectsafe extraction:
they either give up on effectful terms and
  limit \egraphs to pure program fragments only~\cite{denali,aegraph,eqsat-julia},
  or leave soundness entirely to the user~\cite{mlir-egg}.
Both approaches fall in the pure (top-left) quadrant of \autoref{tab:practice}
  because they do not enforce effect ordering constraints during extraction.
The only sound prior approach, introduced by Peggy~\cite{peggy} (top-right),
  requires calling out
  to an integer linear programming (ILP) solver to
  enforce effect ordering constraints.
  However, Peggy's ILP encoding is incomplete, so it is a best-effort
  extraction algorithm\footnotemark[2].
Furthermore, relying on a general-purpose ILP solver
  leads to unpredictable performance and frequent failures,
  as seen in both prior work and our own reimplementation
  (\autoref{fig:intro-egraph-extraction-bottleneck}).
Even when ILP succeeds, extraction typically dominates total
  optimization time, a problem we call the
  \emph{effectful \egraph extraction bottleneck}.

\footnotetext[2]{
  Peggy~\cite{peggy} uses an ILP solver to tackle the \effectsafe
  extraction problem, but its encoding is incomplete, so it does
  not imply any complexity bound on the \effectsafe extraction problem
  despite ILP being NP-complete.
  From \citet{peggy}: ``We have developed a simple constraint solving technique for finding
  a linearization of heap operations in a PEG, but this technique is not complete (as in, even
  if there is a linearization, we are not guaranteed to find it).''
}

\begin{figure}[t]
  \centering
  \begin{subfigure}[b]{0.48\textwidth}
    \centering
    \includegraphics[width=0.9\textwidth]{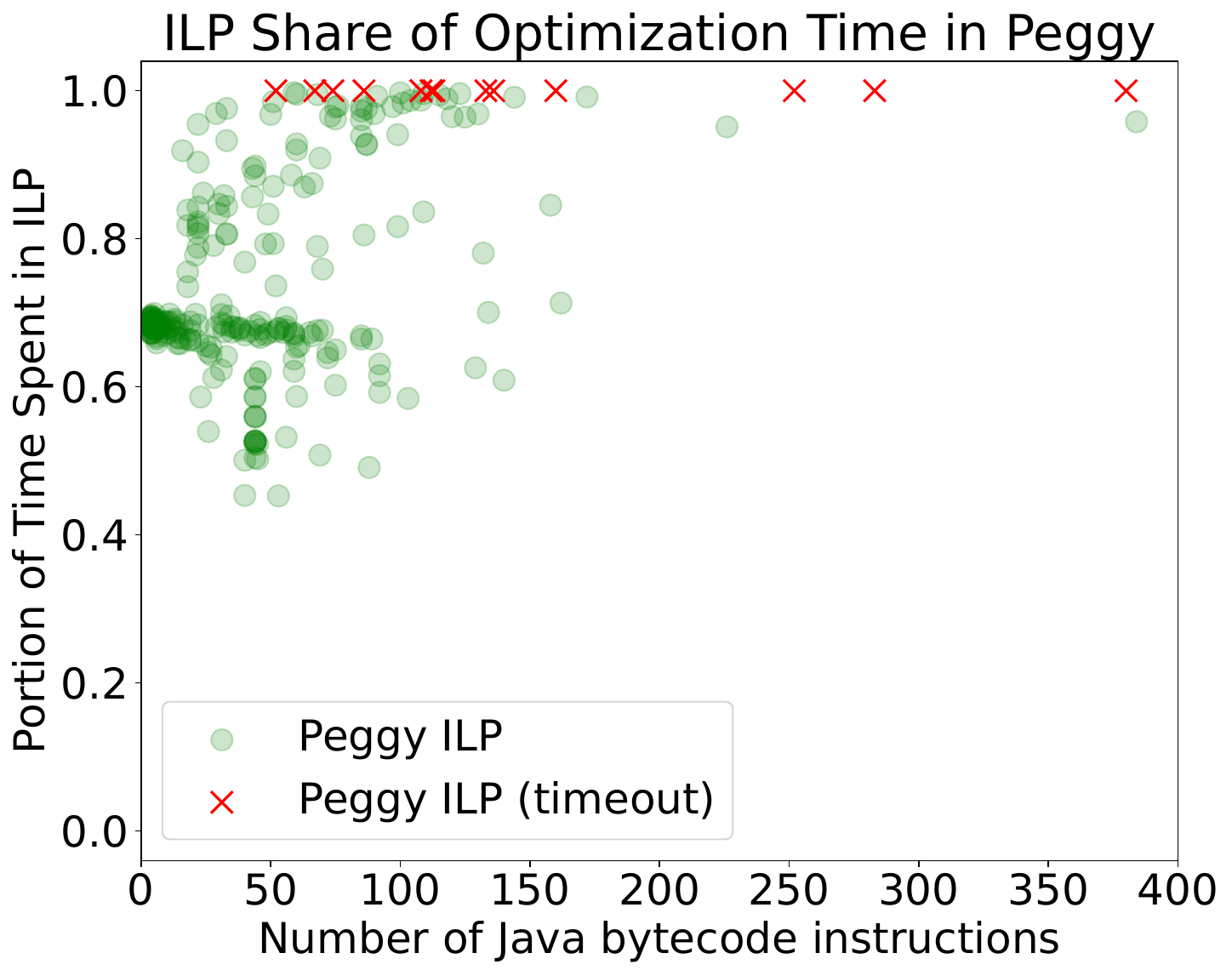}
  \end{subfigure}
  \hfill
  \begin{subfigure}[b]{0.48\textwidth}
    \centering
    \includegraphics[width=0.9\textwidth]{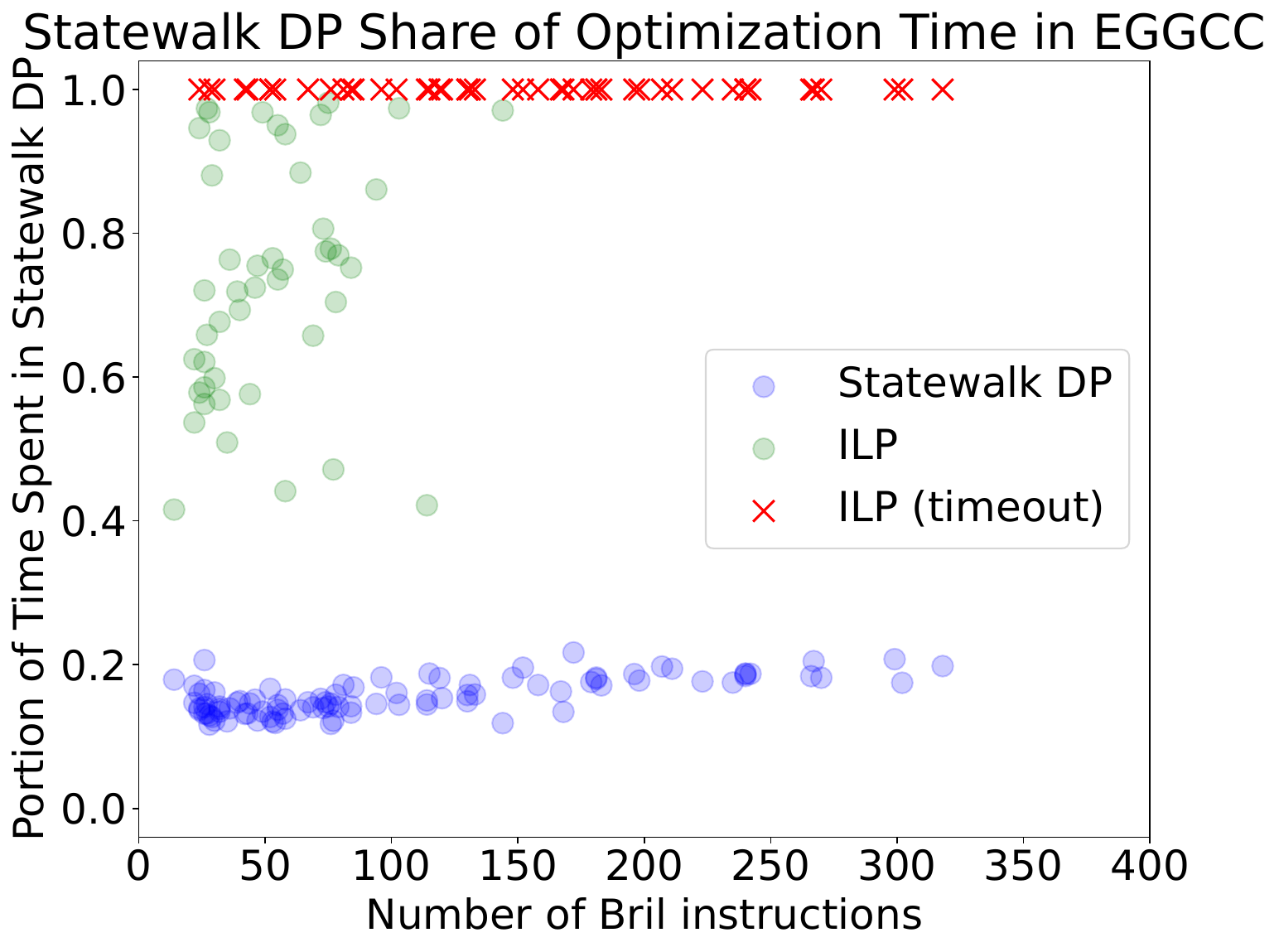}
  \end{subfigure}
  \caption{
    \textit{The effectful \egraph extraction bottleneck.}
    For ILP-based approaches (green),
      the majority of total optimization time
      is often spent on extraction,
        both in prior work~\cite{peggy} (left)
        and our \eggcc prototype (right).
    Although these graphs differ in
      source languages, benchmarks, and implementations,
      both show the bottleneck and
      the unpredictability of existing approaches.
    In contrast, \tiger (blue)
      behaves predictably and
      significantly reduces extraction time.
    \autoref{sec:eval} details the setup for both plots.
  }
  \Description{Two scatter plots. In both, the horizontal axis is program size
    and the vertical axis is the fraction of total optimization time spent in
    extraction, from 0 to 1. Left, for Peggy: green points sit in a band around
    0.6 to 0.7 with many more at 1.0, and red crosses along the top mark
    timeouts. Right, for eggcc: green ILP points again sit near 1.0 with red
    crosses for timeouts, while the blue statewalk DP points form a flat band
    near 0.1 to 0.2 across the whole range of program sizes.}
  \label{fig:intro-egraph-extraction-bottleneck}
\end{figure}

In this paper, we present the first practical effect-safe extraction algorithm,
  \tiger, which eliminates the bottleneck for best-effort approaches.
It is guaranteed to produce an effect-safe extraction while optimizing for
  cost greedily.
Our key idea is to \textit{refine}
  the relaxed equivalence used by rules
  into an effect-safe version during extraction.
\tiger partitions terms by equivalence classes over
  their \textit{\statewalks}:
  sequences of effectful operations
  along which their subterms may be reached.
\tiger then uses a cost function and
  dynamic programming to select
  a single effect-safe extraction for each partition.
We call the maximum number of
  partitions for any equivalence class
  the \egraph's \textit{\statewalkwidth}.
We show that \effectsafe extraction is
  fixed-parameter tractable (FPT)~\cite{fpt}
  in the \statewalkwidth parameter:
  \tiger takes $O(k n^2)$ time where
  $k$ is the \statewalkwidth and
  $n$ is the size of the \egraph,
  extending the popular $O(n^2)$ greedy algorithm
  to effectful \egraphs.\tighten

To evaluate \tiger,
  we developed \eggcc, a proof-of-concept optimizer.
\eggcc translates Bril programs~\cite{bril} to
  an RVSDG-based representation~\cite{rvsdg},
  performs equality saturation using \egglog~\cite{egglog},
  extracts with \tiger, and finally
  translates the optimized program back to Bril.
We found that \statewalkwidth generally remains
  small in practice (mean of \MeanStatewalkWidthAllRegions),
  enabling \tiger to achieve a \GeometricMeanTigerSpeedupVsGurobiWithTimeoutsX speedup
  over ILP-based extraction while producing
  code comparable to LLVM across our benchmarks.

\noindent
This paper makes the following contributions:
\begin{itemize}[leftmargin=2em, itemsep=0.3em]

  \item
  We identify the effectful \egraph extraction bottleneck
    and prove that both optimal extraction and \textit{any}
    extraction in the face of effects are NP-complete in general (\autoref{sec:np}).

  \item
  We present \tiger,
    a new effect-aware extraction algorithm
    that is tractable with respect to \statewalkwidth
    and produces best-effort results with respect to the cost model.
  We prove \tiger is sound and complete,
    and analyze its time complexity (\autoref{sec:tiger}).

  \item
  We implement \tiger in \eggcc,
    a proof-of-concept \egraph-based optimizer for
    imperative Bril~\cite{bril} programs (\autoref{sec:impl}).
      Through a case study, we show the value of \egraphs for optimizing
      effectful programs: adding a single domain-specific
      optimization enables significant speedups without interfering with
      existing ones (\autoref{sec:Fenwick}).

  \item
  We evaluate \tiger's speed and the programs it extracts,
    demonstrating significant speedups over prior ILP-based approaches
    while producing high-quality extractions (\autoref{sec:eval}).\tighten

\end{itemize}

\section{Background}
\label{sec:background}

We provide background on \egraphs and dataflow IRs that
  represent effectful programs.

\subsection{Formal Definition of an \Egraph}
\label{subsec:backgroundegraphs}

This section gives a formal definition for \egraphs
  which is built upon in \autoref{sec:formalizing-rvsdgs}.
We start by defining terms,
 then the congruence closure, and
 finally the \egraph as a data structure representing
 the congruence closure over a set of identities.
Note that while some recent work defines \egraphs as graphs with nodes
  and edges~\citep{egg},
our definition leverages the more classic definition
  of a congruence closure~\citep{common-subexpr}.
The two definitions are equivalent, with terms taking the
  place of e-nodes.
We use the classic
  definition because it eases formalization and proofs.

Given a set of ranked function symbols $\Sigma$, we inductively define terms of $\Sigma$ as
 $t=f(t_1, \ldots, t_n)$, where $f$ is an $n$-ary function symbol and $t_1,\ldots t_n$ are terms.
We call the finite sequence $\langle t_1, \ldots, t_n\rangle$ the \emph{children} of $t$
 and the function symbol $f$ the \emph{head} of $t$.
We denote the children of a term $t$ with $\children{t}$.
Constants, such as integers, are represented as nullary function symbols
  and are written as $f$ instead of $f()$.

We use partial equivalences and partial congruences~\citep{eqsat-semantics,kozen1993partial}
  to distinguish terms represented by an
  \egraph from those that are not.
A partial equivalence relation (PER) $\approxeq$ is a symmetric and
  transitive binary relation.
A partial congruence relation is a PER augmented with
  the congruence and subterm properties:
\begin{align*}
  f(t_1,\ldots, t_k) \approxeq f(t_1,\ldots, t_k) \land
\bigwedge_{i=1\ldots k}t_i \approxeq t'_i &\quad \implies \quad f(t_1, \ldots, t_k)\approxeq f(t'_1,\dots, t'_k) \tag*{\textsc{(Congruence)}}\\
 f(t_1,\ldots, t_k) \approxeq f(t_1,\ldots, t_k) &\quad \implies\quad  \bigwedge_{i=1\ldots k}t_i \approxeq t_i \tag*{\textsc{(Subterm)}}
\end{align*}

Given a set of grounded identities $E = \{t_1 = t_2, \ldots, t_{m-1} = t_m\}$,
  the congruence closure of $E$ is the smallest
  partial congruence relation containing $E$.
Note that the congruence closure may be infinite
  even if $E$ is finite.

\begin{wrapfigure}{r}{0.17\textwidth}
  \centering
  \includegraphics[width=\linewidth]{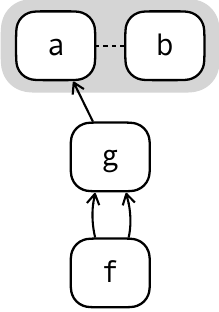}
  \caption{An example \egraph.}
  \Description{A small diagram of three e-classes. At the top, boxes labelled a
    and b sit together inside one shaded region, joined by a dotted line marking
    them equivalent. Below, a box labelled g has an arrow up into that shaded
    region. Below that, a box labelled f has arrows up to g.}
  \label{fig:example-egraph}
\end{wrapfigure}

\myparagraph{\Egraphs}
An \egraph $\mathcal{G}$ is a data structure computed
  from grounded equalities $E$
  that compactly represents the congruence closure of $E$ \citep{common-subexpr,eqsat-semantics}.
We write the PER
  represented by an \egraph $\mathcal{G}$ as $\approxeq_{\mathcal{G}}$.
Note that PERs are not necessarily reflexive:
  $x \approxeq_{\mathcal{G}} x$ holds for all $x$ represented by the \egraph
  but it does not hold for terms that are not represented by the \egraph.
An \egraph is a set of \textit{\eclasses},
  and each \eclass of an \egraph is a finite set of terms.
Each \eclass stores \emph{representative terms}
  drawn from the subterms of the original set of grounded identities $E$.
The \eclasses of an \egraph have a one-to-one correspondence
  with the equivalence classes of the congruence closure.
We say a term $t$ is \emph{represented} by
  \eclass $C \in \mathcal{G}$ when
  $\exists t' \in C$ such that $t' \approxeq_\mathcal{G} t$.
We denote the set of all terms represented by an \eclass with
  $\termsc{C}{\mathcal{G}}$.
Given a term $t \in \termsc{C}{\mathcal{G}}$,
  we define $\classof{t}{\mathcal{G}} = C$\footnote{
  The representative terms in our definitions
  serve as e-nodes in frameworks like egg.
  Using concrete representative terms makes defining our algorithm more natural.}.
Note that $\termsc{C}{\mathcal{G}}$ may be infinite, even though
  $C$ is finite\footnote{This is traditionally explained by cycles in the \egraph between e-nodes. We use grounded terms in this formalization, so the cycles are formed by the combination of equivalences between terms and the parent-child relationship on terms.}.

\autoref{fig:example-egraph} shows an example \egraph.
Dotted edges show equivalences and gray shading is used to highlight
  equivalence classes.
It has three \eclasses, with representative terms
$\{a,b\}$, $\{g(a)\}$, and $\{f(g(a), g(a))\}$ respectively.
The top \eclass represents equivalences $a\approxeq_{\mathcal{G}} b$,
 the middle \eclass represents equivalences $g(a)\approxeq_{\mathcal{G}} g(b)$,
 and the bottom \eclass represents equivalences
 $f(g(a),g(a))\approxeq_{\mathcal{G}} f(g(a),g(b))\approxeq_{\mathcal{G}} f(g(b),g(a))\approxeq_{\mathcal{G}} f(g(b),g(b))$.

While the mathematical definition of an \egraph treats
  terms as trees,
  practical implementations store and visualize them as directed acyclic graphs (DAGs).
  Each node is annotated with a function symbol and has ordered outgoing edges.
A term can be represented as a DAG by sharing common subexpressions,
  and a DAG can be unfolded into a tree by duplicating shared subterms.
Implementations typically rely on hash-consing to ensure
  any identical term corresponds to exactly one node in the DAG~\cite{egg}.
For example, the term $f(g(a), g(a))$ in \autoref{fig:example-egraph}
  is represented as a DAG in which the subterm $g(a)$ is shared.

In the context of program optimization,
 declarative optimization rules are applied repeatedly over the entire \egraph to find new identities, which are then
 added to $E$.
This technique is known as \emph{equality saturation}~\cite{peggy}
 and is implemented in multiple modern frameworks \citep{egg,egglog}.
After rule application,
  an optimal program is \textit{extracted} from the grown \egraph.
More formally, an extraction of an \eclass $C$ of an \egraph $\mathcal{G}$
 is a term $t \in \termsc{C}{\mathcal{G}}$.
The \egraph extraction problem is to find an extraction with low
 cost according to a cost function.
This paper focuses on this extraction procedure,
  but for \egraphs containing effectful programs, i.e., \textbf{effectful \egraphs}.

\subsection{Dataflow-based IRs and RVSDGs}
\Egraphs support a class of intermediate representations (IRs) based on dataflow graphs.
Each node in a dataflow graph represents a value.
Unlike control-flow graphs (CFGs),
  dataflow graphs directly encode data dependencies.
Dataflow graphs enable substitution of equivalent nodes and
  therefore satisfy congruence, a key property of \egraphs.

In a dataflow-based IR,
  evaluating nodes in any topological order must yield the same value.
Effectful programs, however, require ordering of effects.
For example, when printing to stdout, the order of operations matters.
Dataflow graph IRs support effectful programs by defining
  a special \emph{state value} that is threaded through
  all effectful operators to enforce a total order on effects.
All non-state values are \emph{pure values}.
Operators that produce state values are
  \textit{effectful} and those that produce pure values are \textit{pure}.
To ensure the dataflow graph is \emph{effect-safe},
  state values must be used linearly:
  each effectful operation must consume a state value and produce a new one.
In other words, the state value must not be duplicated or dropped.
We call the unique sequence of effectful operations on
  the state value the \emph{statewalk}.

Many dataflow IRs support control flow.
Our implementation (\autoref{sec:impl}) uses Regionalized Value State
  Dependence Graphs (RVSDGs)~\cite{cfgToRvsdg, rvsdg}.

\section{Overview}
\label{sec:overview}

This section
  illustrates the challenge of
  \effectsafe extraction
  and outlines \tiger.

\subsection{Effect-Safe Extraction}

\begin{figure}
  \centering
  	\includegraphics[width=1.0\textwidth]{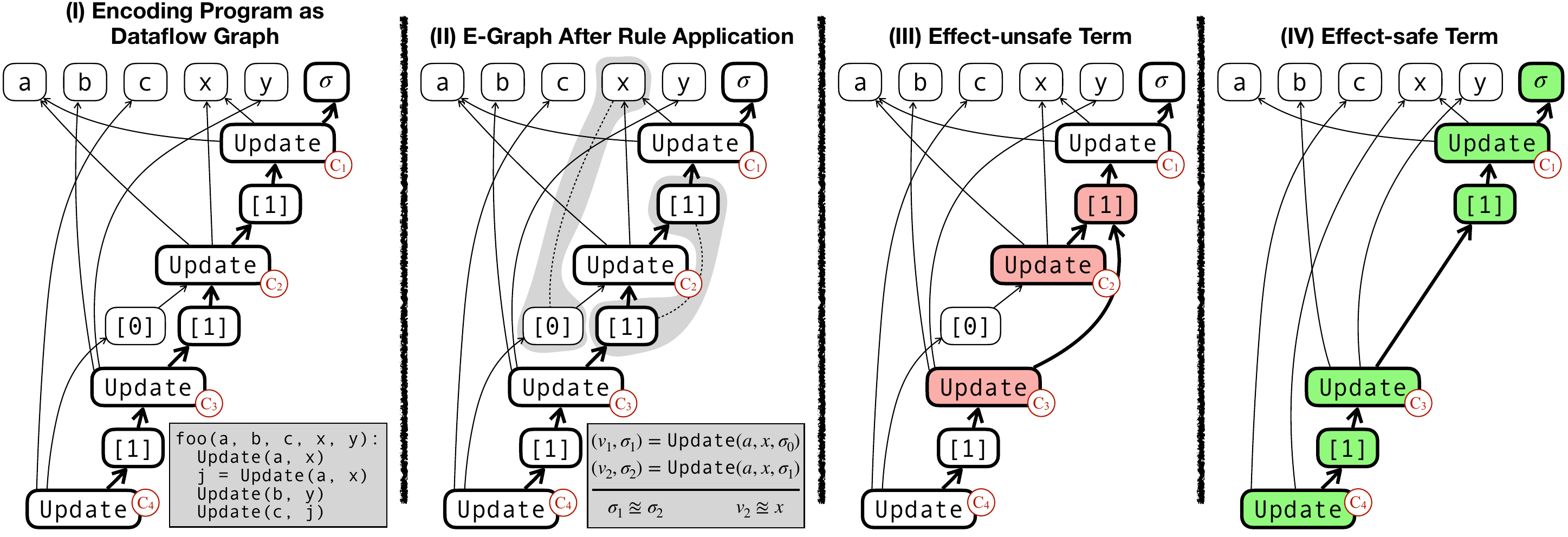}
  \caption{
    (I) An imperative program encoded as a dataflow graph;
    (II) The \egraph after applying an optimization rule that
      eliminates a redundant \C{Update};
    (III) An \effectunsafe term represented in the \egraph
      with two \C{Update} nodes relying on the same state value;
    (IV) An \effectsafe term represented in the \egraph
      that uses the state value linearly.
    Data flows top to bottom:
      operators point to their operands.
    Here and in later figures,
      effectful operators are drawn with a bold outline;
      all other operators are pure.
  }
  \Description{
    The \egraph contains two equalities.
    The first is between the state output of the first and second Update calls.
    The second is between the value output of the second Update call and the value of x.
  }
  \label{fig:overview-example}
\end{figure}

The core challenge is that
  not all terms represented by an effectful \egraph
  correspond to valid effect orderings.
We call such terms \textit{\effectunsafe}.
Crucially, despite having a well-defined semantics
  (\autoref{sec:formalizing-rvsdgs}),
  \effectunsafe terms cannot be translated back to sequential programs (for code generation).
At the same time, having \effectunsafe terms in an \egraph is unavoidable.
This follows from congruence, a key invariant of \egraphs stating that
 equivalent subterms can be substituted for each other in any context.
When the two equivalent subterms are effectful,
  this immediately leads to \effectunsafe terms with duplicated or dropped effects.

\autoref{fig:overview-example} illustrates the issue.
Panel~I encodes
  a simple imperative program as
  a dataflow graph.
The program contains four \C{Update} operations: $C_1$, $C_2$, $C_3$, $C_4$
  (in order of execution).
Each \C{Update}
  takes an address, a pure value, and a state value as arguments;
  writes the value to that address; and
  returns both the \textit{old} value previously stored at that address
  (indexed as \texttt{[0]} in the figure),
  and an updated state value
  (indexed as \texttt{[1]}).
Each effectful operation takes and returns a state value,
  which is threaded through the dataflow graph to encode the order of effects.

Panel~II shows the \egraph after
  applying an optimization rule that
  merges redundant \C{Update} operations:
when the same \C{Update} is repeated,
  the second performs no observable effect;
  it returns the same value and state as the first.
The two ground equalities in the \egraph reflect this equivalence:
the state after $C_1$ is equal to the state after $C_2$, and the
  pure value returned by $C_2$ is equal to \C{x}.
However, the resulting \egraph now contains
  both \effectsafe and \effectunsafe terms:
  congruence allows substituting a subterm with an equivalent one,
  even when both are effectful.

Panel~III shows
  an \effectunsafe term
  represented in the \egraph.
It is \effectunsafe because
$C_2$ and $C_3$ share
  the same input state.
Although this term has
  well-defined semantics (\autoref{sec:formalizing-rvsdgs}),
  it cannot be translated back to a valid program because
  the dataflow graph no longer defines a
  unique total order on effects.
As a result,
  the effectful operations could be scheduled in different ways,
  producing different behaviors when memory addresses alias:

\vspace{0.5em}
\begin{center}
\adjustbox{max width=\linewidth}{%
  \begin{minipage}[c]{0.2\linewidth}
\begin{eggccterm}
Update(a, x)
j = Update(a, x)
Update(b, y)
Update(c, j)
\end{eggccterm}
  \end{minipage}
  \hfill
  \begin{minipage}[c]{0.2\linewidth}
  \centering\texttt{-- or --}
  \end{minipage}
  \hfill
  \begin{minipage}[c]{0.2\linewidth}
\begin{eggccterm}
Update(a, x)
Update(b, y)
j = Update(a, x)
Update(c, j)
\end{eggccterm}
  \end{minipage}
}
\end{center}
\vspace{0.5em}

\noindent
Panel~IV shows an \effectsafe alternative.
Each effectful operation consumes the
  state value produced by the previous one,
  establishing a single well-defined order.

The key difficulty is that
  \effectsafety \textit{does not compose under congruence}.
A term may become \effectunsafe if
  one of its (\effectsafe) subterms is substituted with another (\effectsafe) term.
In the example,
  substituting the pure result of $C_2$ for
  $C_4$'s argument transforms
  the \effectsafe term from Panel~IV into
  the \effectunsafe one from Panel~III.

To summarize, our goal is to efficiently
 find an effect-safe extraction---a term from the \egraph whose
  effectful operations
  form a single linear order---with a small cost.
This is challenging because \effectsafety does not compose.
\autoref{sec:np} shows that
  finding \textit{any} \effectsafe extraction
  is NP-complete in general.
In practice, however,
  \egraphs produced by real program optimizers
  have additional structure,
  which \tiger exploits to efficiently extract \effectsafe terms.

\subsection{Statewalks and Statewalk Width}

The key idea behind \tiger
  is that different \statewalks
  (sequences of effectful operations)
  can expose the same opportunities
  for extracting pure subterms.
Rather than considering each \statewalk independently,
  \tiger identifies classes of \statewalks that
  enable the same extractions.
This lets \tiger explore each class of equivalent (i.e., substitutable under congruence) \statewalks
  once for each \eclass,
  without enumerating the full, potentially infinite,
  space of possible \statewalks.

Concretely, \tiger groups \statewalks by
  the set of pure \eclasses they make available.
We call two \statewalks
  \textit{\extractionset equivalent}
  when they are in the same \eclass and enable the same set of
  extractable pure \eclasses.

In \autoref{fig:overview-example}'s running example,
  there are infinitely many \statewalks
  to the root (bottom) \eclass.
Each \statewalk may
  Update \texttt{a}
  any number of times before
  updating \texttt{b} and \texttt{c}.
However,
  all these \statewalks enable
  the same pure subterms.
They are therefore \extractionset equivalent,
  and an extractor is free to choose any of them;
  \autoref{app:tiger-example} shows the \extractionsets
  \tiger computes for a small \egraph.
\tiger keeps only the cheapest representative,
  which in this case
  yields the \effectsafe extraction
  that avoids the redundant \texttt{Update}
  (Panel~IV).

To efficiently exploit \extractionset equivalence,
  \tiger maintains a dynamic programming (DP) table
  indexed by (\eclass, \extractionset) pairs.
For each pair,
  the table stores the cheapest
  representative \statewalk and the
  corresponding extracted term.
This is sound because
  \textit{\extractionset equivalent \statewalks are substitutable}:
  each leads to the same pure subterms and
  can be used interchangeably when
  building \effectsafe terms.
Starting from the leaves,
  \tiger propagates representative \statewalks
  from child operands to their parent operators.
This DP formulation
  reuses the shared decisions between
  \extractionset equivalent \statewalks.

We characterize the complexity of \tiger
  using a new parameter, \textit{\statewalkwidth}.
The \statewalkwidth of an \eclass
  is the number of distinct \extractionsets
  enabled by its \statewalks,
  and the \statewalkwidth $k$ of an entire \egraph
  is the maximum \statewalkwidth over its \eclasses.
In \autoref{fig:overview-example},
  every \statewalk exposes the same set of pure \eclasses,
  so the \statewalkwidth is~1.
If an \egraph representing $n$ terms has \statewalkwidth $k$,
  \tiger runs in $O(kn^2)$ time.
Our evaluation (\autoref{sec:eval}) shows
  that \statewalkwidth generally remains small in practice,
  enabling \tiger to extract effectful programs efficiently
  without sacrificing extraction quality.

The remainder of this paper
  formalizes the equivalences used in \tiger,
  defines statewalk width precisely,
  gives the full DP algorithm, and
  proves its correctness.
We then evaluate \tiger's performance and
  extraction quality across a suite of benchmarks.

\section{The Effect-Safe Extraction Problem}
\label{sec:formalizing-rvsdgs}

In this section, we formally define the \effectsafeext problem.
We first define the equivalence relation $\approxeq$
  over a semantics of effectful dataflow IR
  including \effectunsafe terms.
We formally define \effectsafety
 and PER
 $\equiv_\mathcal{G}$,
 which is $\approxeq$ restricted to
 \effectsafe programs in \egraph $\mathcal{G}$.
Finally, we define the \effectsafeext problem as
  finding an \effectsafe term in
	an equivalence class of $\equiv_\mathcal{G}$
	that corresponds to a root \eclass of $\mathcal{G}$.
Auxiliary definitions are collected in \autoref{app:definitions}.

%
%





\subsection{Abstract Dataflow Language and $\approxeq$}

\label{sec:semantics}


\begin{figure}[t]
\begin{subfigure}[t]{\textwidth}
\sf
\lstset{language=eggccir,basicstyle=\footnotesize\ttfamily}
\setlength{\grammarparsep}{5pt plus 1pt minus 1pt}
\setlength{\grammarindent}{6em}

\begin{grammar}
<Base values \baseval> $\in$ $\Phi=\{\bot,\top,\ldots \}$
\quad\quad
<States \stateval> $\in$ \statevals
\quad\quad
<Pure values> $\val_p$ $\in$ $V^p$ ::= \baseval \; $\mid$ $(\baseval, \ldots, \baseval)$

<Stateful values $\val_e$> $\in$ $V^e$ ::= \stateval \; | \; $(\baseval, \ldots, \baseval, \stateval)$
\quad\quad
<Values \val> ::= $\val_p$ | $\val_e$

<Pure Operators> \texttt{f}, \texttt{g}, $\dots$ $\in$ {$(V^p)^n \to V^p$}
\quad\quad\quad
<Effectful Operators> \texttt{F}, \texttt{G}, $\dots$ $\in$ {$(V^p)^{n-1} \times V^e \to V^e$}

<Indices n> $\in$ {\ensuremath{\N}}
\quad\quad\quad\quad
<Expression e> ::= {\ensuremath{v}}
               | {\argt}
               | {\texttt{f}($e_1$, \dots, $e_n$)}
               | {\texttt{F}($e_1$, \dots, $e_n$)}
               | ($e$, \dots)
               | e[$n$]
\end{grammar}
\caption{Syntax.}
\label{fig:syntax}
\end{subfigure}

\bigskip

\begin{subfigure}[t]{\textwidth}
  \footnotesize
  \newcommand\semrulesep{1.75em}
  \newcommand{\append}[2]{#1\ \texttt{+\hspace{-1pt}+}\ #2}
  \lstset{language=eggccir}
  \begin{equation*}
  \begin{array}{llll}
    \inference[]{}
      {
        \bigstep{\argval}{v}{v}
      }
      &
      \inference[]{}
      {
        \bigstep{\argval}{\argt}{\argval}
      }
      &
      \inference[]{
        \bigstep{\argval}{e_i}{v_i}\quad\text{for } i=0,\ldots,k-1
      }
      {
        \bigstep{\argval}{(e_0,\ldots, e_{k-1})}{(v_0,\ldots, v_{k-1})}
      }
      &
      \inference[]{
        \bigstep{\argval}{e}{(v_0,\ldots, v_{n-1})} \andalso
        0 \le k < n
      }
      {
        \bigstep{\argval}{e[k]}{v_k}
      }
    \end{array}
    \end{equation*}
    \vspace{1em}
  \begin{equation*}
  \begin{array}{ll}
    \inference[]{
        \bigstep{\argval}{e_i}{v_i}\ \text{for } i=1,\ldots,n \andalso
        \llbracket \texttt{f} \rrbracket(v_1,\ldots,v_n) = v'
      }
      {
        \bigstep{\argval}{\texttt{f}(e_1,\ldots,e_n)}{v'}
      }
      &
			\inference[]{
        \bigstep{\argval}{e_i}{v_i}\ \text{for } i=1,\ldots,n \andalso
        \llbracket \texttt{F} \rrbracket(v_1,\ldots,v_n) = v'
      }
      {
        \bigstep{\argval}{\texttt{F}(e_1,\ldots,e_n)}{v'}
      }
    \end{array}
  \end{equation*}
\caption{Big-step operational semantics.}
\label{fig:semantics}
\end{subfigure}
\caption{Syntax and semantics of a minimal effectful dataflow IR.}
\Description{Typeset text rather than a picture. The upper part is a grammar
  giving the base values, states, pure and stateful values, pure and effectful
  operator signatures, indices, and the expression forms. The lower part is six
  big-step inference rules, one each for values, Arg, tuple construction, tuple
  indexing, pure operator application, and effectful operator application.}
\end{figure}

\renewcommand{\arraystretch}{1.2}
\begin{table}[t]
\centering
\renewcommand{\arraystretch}{1.0}
\caption{Notions of equivalence used in this paper,
  each introduced in the section named beside it,
  and each a refinement of the one above it.
  $\bigstep{\argval}{t}{v}$ is the big-step judgement of \autoref{fig:semantics}.
  $\termsc{C}{\mathcal{G}}$ is the set of terms represented by \eclass $C$, and
    $\classof{t}{\mathcal{G}}$ is the \eclass of a term $t$
    (\autoref{subsec:backgroundegraphs}).
  $\effectsafe(t)$ holds when $t$ uses the state linearly
    (\autoref{def:effect-safety}), and $\text{\effectful}(t)$ when $t$ evaluates
    to a stateful value (\autoref{sec:effect-safety}).
  $\ES(t)$ is the \extractionset of $t$: the pure \eclasses that become
    extractable once the effects along $t$ have been performed
    (\autoref{sec:statewalkequiv}).}
\label{tab:equivalences}
\begin{tabular}{|c|c|c|c|}
\hline
Equivalence & Description & Definition \\
\hline
$t_1\approxeq t_2$ & Equiv.~under the semantics (\autoref{sec:semantics}) & $\forall \argval, v.~ \bigstep{\argval}{t_1}{v} \iff \bigstep{\argval}{t_2}{v}$ \\
\hline
$t_1\approxeq_{\mathcal{G}} t_2$ & Equiv.~in an \effectful \egraph (\autoref{sec:semantics}) & $\exists C.~ t_1,t_2\in \termsc{C}{\mathcal{G}}$ \\
\hline
$t_1\equiv_{\mathcal{G}} t_2$ & Effect-safe refinement of $\approxeq_{\mathcal{G}}$ (\autoref{sec:effect-safety}) & $t_1 \approxeq_{\mathcal{G}} t_2 \land \effectsafe(t_1) \land \effectsafe(t_2)$ \\
\hline
$t_1\statewalkequiv t_2$ & Statewalk refinement of $\equiv_\mathcal{G}$ (\autoref{sec:statewalkequiv}) & $t_1 \equiv_\mathcal{G} t_2 \wedge\text{\effectful}(t_1)\wedge $ \\
& & $ \text{\effectful}(t_2)\wedge \ES(t_1) = \ES(t_2)$ \\
\hline
\end{tabular}
\end{table}


We define (acyclic) dataflow graphs as terms as described in
  \autoref{sec:background}, making them amenable for formal
  treatment in an \egraph.
\autoref{fig:syntax} gives the syntax of a minimal dataflow IR.
We categorize values into pure ones and stateful ones.
A pure value is a base value or a tuple of base values, and a stateful value is either a state
 or a tuple whose last value is a state.
Tuples do not nest.
A pure operator maps pure values to a pure value.
An effectful operator takes
 any number of pure values plus a single state value, written last,
 and produces a stateful value.
Every term therefore has at most one \effectful child, so a
 \statewalk (\autoref{sec:effect-safety}) follows a chain of effects unambiguously.
An expression is a value, an \ensuremath{\argt} variable that refers
 to program inputs\footnote{We assume a dataflow program takes a unique tuple-valued variable \ensuremath{\argt}.},
 an application of a pure or an effectful operator,
 a tuple of expressions,
 or an indexing into an expression.


\autoref{fig:semantics} gives a big step operational semantics for
  the language defined in \autoref{fig:syntax}.
The judgement $\bigstep{\argval}{e}{v}$
  indicates that
  expression $e$ evaluates to value $v$ when
  its argument {\ensuremath{\argt}} is bound to the stateful value $\argval$.
We define the relaxed equivalence relation $\approxeq$ with respect to this semantics, i.e.,
$t_1 \approxeq t_2$ if and only if $\forall \argval, v.\; \bigstep{\argval}{t_1}{v} \iff \bigstep{\argval}{t_2}{v}$.
The list of equivalences defined in this paper is summarized in \autoref{tab:equivalences}.

The semantics is parametrized over the choice of base values and states.
For example, the evaluation section of this paper defines
  values to be primitives like pointers and booleans,
  while a state is defined as a pair of heap (mapping from location to values) and a log of effects
  (e.g., from printing).\tighten
Notice that the evaluation rule for effectful operators
  (\texttt{F}) is identical to that of pure operators
  (\texttt{f}).
It does not forbid effectful operations from duplicating or dropping state values.
However, only effect-safe programs can be
 used for code generation.



Consider the original program in \autoref{fig:overview-example}
  which performs four \texttt{Update} operations.
It is equivalent with respect to our semantics to
  the \effectunsafe program shown in the third pane.
We write both below in the syntax of \autoref{fig:syntax}, with
  $a$, $b$, $c$, $x$, $y$, and $\sigma$ for the components
  $\argt[0]$ through $\argt[5]$.

\begin{figure}[h]
  \centering
  \begin{minipage}[t]{0.45\linewidth}
    \textbf{Effect-safe:}
    {\small
\begin{lstlisting}[language=eggccir,basicstyle=\footnotesize\ttfamily,numbers=none,xleftmargin=0pt]
Update(c,
  Update(a, x, Update(a, x, $\sigma$)[1])[0],
  Update(b, y,
    Update(a, x, Update(a, x, $\sigma$)[1])[1])[1])
\end{lstlisting}
    }
  \end{minipage}
  \hspace{1em}
  \begin{minipage}[t]{0.45\linewidth}
    \textbf{Effect-unsafe:}
    {\small
\begin{lstlisting}[language=eggccir,basicstyle=\footnotesize\ttfamily,numbers=none,xleftmargin=0pt]
Update(c,
  Update(a, x, Update(a, x, $\sigma$)[1])[0],
  Update(b, y, Update(a, x, $\sigma$)[1])[1])
\end{lstlisting}
    }
  \end{minipage}
  \Description{Two code listings side by side rather than a picture, both
    written as a single nested term. The left, labelled effect-safe, threads the
    state through four Update operations in sequence. The right, labelled
    effect-unsafe, is identical except that two different Update terms take the
    same state value as their argument.}
\end{figure}


The \effectsafe original program performs four updates,
  each one consuming the state value produced by the previous update.
The \effectunsafe program instead re-uses the state value
  \texttt{Update(a, x, $\sigma$)[1]} in two distinct \verb|Update| terms:
  the outer \texttt{Update(a, x, \dots)} that encloses it,
  and the \texttt{Update(b, y, \dots)} beside it.
The first of those writes \texttt{x} to \texttt{a} while the second
  writes \texttt{y} to \texttt{b}.
These programs are equivalent with respect to the semantics
  in \autoref{fig:semantics}, but the \effectunsafe program cannot be translated back
  to an executable program.

$\approxeq$ relates all pairs of program terms equal under this semantics.
In practice, any \egraph $\mathcal{G}$
  represents a subset of $\approxeq$,
  as long as each equality used to build $\mathcal{G}$ is sound under $\approxeq$.
We denote this under-approximated equivalence relation represented by $\mathcal{G}$ as $\approxeq_{\mathcal{G}}$,
  which satisfies $(\approxeq_{\mathcal{G}}) \subseteq (\approxeq)$.

\myparagraph{Authoring rules}
An optimization rule over the \egraph is sound
  exactly when the terms it equates
  agree under the semantics of \autoref{fig:semantics}.
This is the only requirement we place on rules.
A sound rule may equate an \effectsafe term with an \effectunsafe one.
Rules and extraction therefore work over different relations
  (\autoref{tab:equivalences}):
  rules build $\approxeq_{\mathcal{G}}$,
  while extraction searches its \effectsafe restriction $\equiv_{\mathcal{G}}$.
That separation makes \eqsat over effectful programs possible,
  since an optimization such as
  the redundant-\texttt{Update} rule of \autoref{fig:overview-example}
  adds \effectunsafe terms to the \egraph.
Note that \effectsafety does not rely on
  design decisions specific to \eggcc's IR,
  such as its use of regions to model control flow (\autoref{sec:controlflows}).

%
%

\subsection{Effect Safety and $\equiv_\mathcal{G}$}

\label{sec:effect-safety}

A term is \effectful if it evaluates to a stateful value
  (and pure otherwise).
This happens if the term is $\argt$, an effectful operator,
 or an indexing expression into the last element of a stateful tuple value.
$\argt$ is the only effectful term that has no child terms.
Purity is a property of the value a term denotes,
  not of how that value is computed.
In the example above, $\texttt{Update}(a, x, \sigma)[0]$ is pure ---
  it denotes the integer previously stored at $a$ ---
  even though obtaining it performs an effect.
An \eclass is also either \effectful or not,
  containing only \effectful terms or only pure terms.
An \eclass that contains both is unsound, since
  $\approxeq$ does not relate any effectful term to a pure term.
Nothing prevents a rule from merging an effectful term with a pure one,
  but such a rule is unsound by the criterion of \autoref{sec:semantics},
  and \tiger's guarantees, like those of any extractor,
  hold only for \egraphs built from sound rules.\tighten


We define \effectsafety and \statewalks, which are \effectsafe by construction, as follows:
\begin{definition}[Statewalk]
  \label{def:statewalk}
  For $k \geq 0$, $\sw = \langle e_0, e_1, \ldots, e_k \rangle$ is a \statewalk if:
  \begin{itemize}[leftmargin=1em,nosep]
    \item $e_0 = \argt$.
    \item For all $e_i \in \sw$, $e_i$ is \effectful.
    \item For each pair of terms $e_i, e_{i+1} \in \sw$, $e_i \in \children{e_{i + 1}}$.
    \item For all $e_i \in \sw$ and for all $t \in \children{e_i}$, all effectful subterms of $t$
      are in $\langle e_0, e_1, \ldots, e_{i - 1} \rangle$.
  \end{itemize}
\end{definition}

\begin{definition}[Effect-safety]
  \label{def:effect-safety}
  A term $t$ is \effectsafe under \statewalk $\sw$, denoted $\effectsafe(t, \sw)$,
    if all effectful subterms of $t$ are in $\sw$.
  We also say a term $t$ is effect-safe, denoted $\effectsafe(t)$,
    if it is effect-safe under some statewalk (i.e., $\exists \sw. \effectsafe(t, \sw)$).
\end{definition}

Note that a \statewalk is uniquely determined by
  its last term, since the entire \statewalk
  can be constructed by following
  the child relation on effectful terms.
For example, consider again the initial program in \autoref{fig:overview-example}
  which performs four \texttt{Update} operations.
It is \effectsafe and so it has a \statewalk,
  obtained by following the child relation
  through every term that carries the state value.
Below we name the three \texttt{Update} subterms
  $u1$, $u2$, and $u3$.
Some subterms occur more than once;
  a practical \egraph implementation stores each term only once using a hash-cons data structure.

\begin{figure}[h]
  \centering
  \begin{minipage}[t]{0.45\linewidth}
    \textbf{Term:}
    {\small
\begin{lstlisting}[language=eggccir,basicstyle=\footnotesize\ttfamily,numbers=none,xleftmargin=0pt]
(a, b, c, x, y, $\sigma$) = Arg
u1 = Update(a, x, $\sigma$)
u2 = Update(a, x, u1[1])
u3 = Update(b, y, u2[1])
Update(c, u2[0], u3[1])
\end{lstlisting}
    }
  \end{minipage}
  \hspace{1em}
  \begin{minipage}[t]{0.45\linewidth}
    \textbf{Statewalk:}
    {\small
\begin{lstlisting}[language=eggccir,basicstyle=\footnotesize\ttfamily,numbers=none,xleftmargin=0pt]
$\langle$Arg, $\sigma$,
 u1, u1[1],
 u2, u2[1],
 u3, u3[1],
 Update(c, u2[0], u3[1])$\rangle$
\end{lstlisting}
    }
  \end{minipage}
  \Description{Two code listings side by side rather than a picture. The left,
    labelled Term, writes the four-Update program using the intermediate names
    u1, u2 and u3 for its three inner Update subterms. The right, labelled
    Statewalk, gives the corresponding statewalk as a nine-element sequence
    beginning with Arg and alternating each Update with the indexing expression
    that extracts its state.}
\end{figure}

This forms a valid statewalk because $e_0 = \argt$, all
  $e_i$ are effectful,
  each $e_i$ is a child of $e_{i+1}$,
  and all effectful terms occur in the sequence.
The \statewalk contains more than just the four \texttt{Update} terms
  because every term that carries the state value is \effectful and
  therefore belongs to it, including
  the indexing expressions that extract the state:
  $\sigma$ (that is, $\argt[5]$) and each $ui[1]$.
Pure terms such as $a$ and $u2[0]$ are not in the \statewalk;
  they are instead the terms the \statewalk makes available
  for extraction (\autoref{sec:statewalkequiv}).

\begin{definition}

	The refined equivalence $\equiv_{\mathcal{G}}$ for \effectsafe programs for \egraph $\mathcal{G}$ is defined as:
  \[t_1 \equiv_{\mathcal{G}} t_2 \iff t_1 \approxeq_{\mathcal{G}} t_2 \land \effectsafe(t_1) \land \effectsafe(t_2)\]
\end{definition}

By transitivity of relation refinement,
 $(\equiv_{\mathcal{G}}) \subseteq (\approxeq_{\mathcal{G}}) \subseteq (\approxeq$).
Given an \egraph $\mathcal{G}$ and an \eclass $C$ of $\mathcal{G}$, the problem of \textit{\effectsafeext} is
 to efficiently find an \effectsafe
  term $t$ represented by $C$.
Formally, we want to find a term $t \in \termsc{C}{\mathcal{G}}$ that satisfies $\effectsafe(t)$.
In \autoref{sec:np}, we show that \effectsafe
  extraction is NP-complete.
\tiger solves this problem efficiently
  by leveraging \statewalkwidth (\autoref{sec:tiger}).
\tiger also greedily optimizes for the cost of the extraction,
 so it is a practical heuristic for
  the \textit{\opteffectsafeext} problem, which requires
  $t$ to have minimal cost according to a given cost model
  ($\argmin_{t: t\in \termsc{C}{\mathcal{G}}\land \effectsafe(t)} \textit{cost}(t)$).

\section{NP Completeness of \EffectSafeExt}
\label{sec:np}

\begin{figure}
  \centering
  	\includegraphics[width=1.0\textwidth]{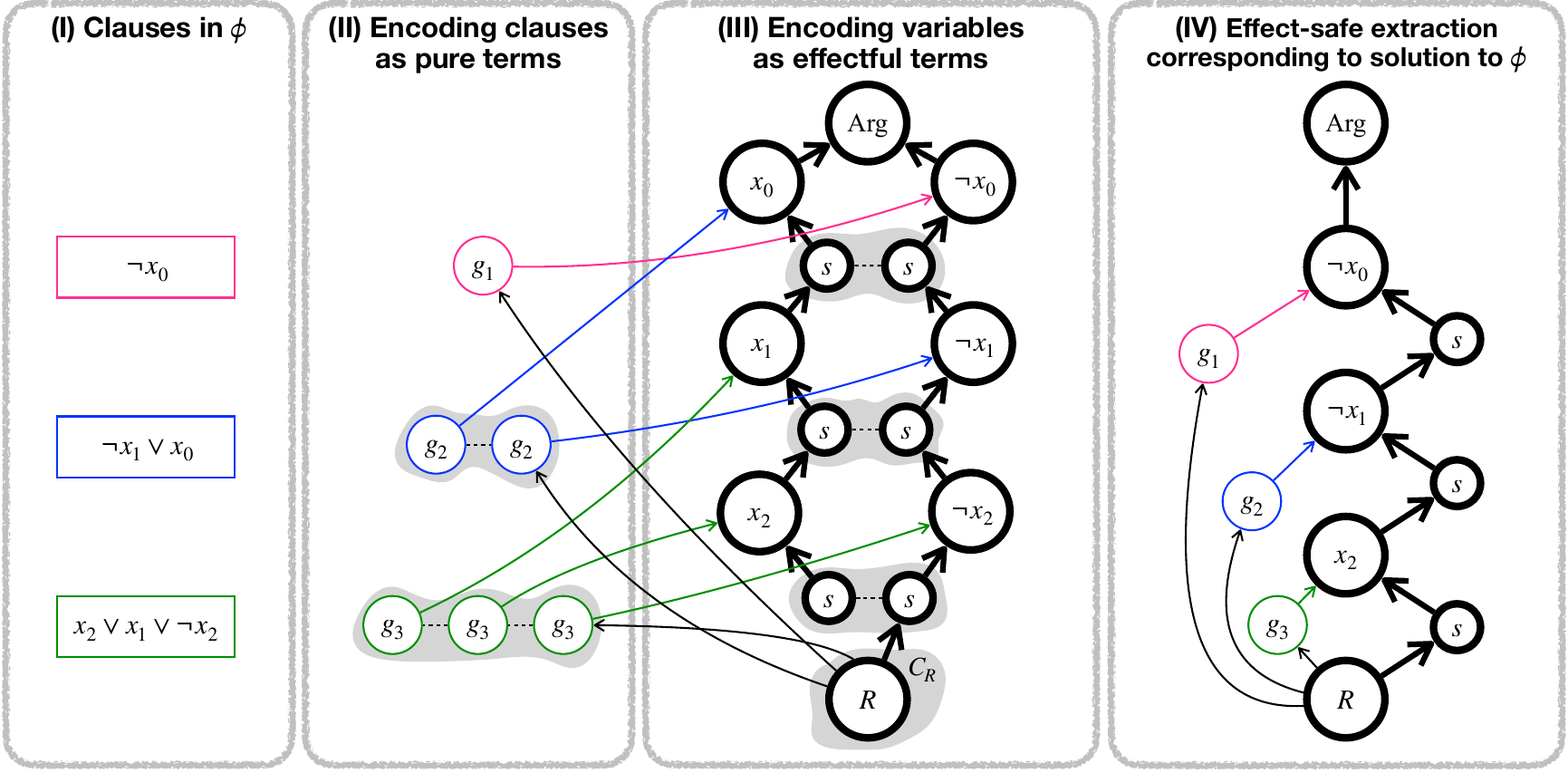}
  \caption{
    \Egraph encoding of the formula $\phi = \neg x_0 \land (\neg x_1 \lor x_0) \land (x_2 \lor x_1 \lor \neg x_2)$.
    Equivalences are shown with dotted lines, and equivalence classes are shown with a gray background.
    $C_R$ denotes the \eclass containing the root term $R(\ldots)$.
    Panel I shows the clauses in $\phi$.
    Panel II shows the encoding of each clause as an \eclass consisting of
      one pure term per literal in the clause.
    Panel III shows how the variables in $\phi$ are encoded using effectful terms, with auxiliary effectful terms $s(...)$ ensuring independence between variable choices.
    Panel IV shows an \effectsafe extraction of $R$, which corresponds to a solution to $\phi$.
  }
  \Description{A four-panel diagram of an e-graph encoding a boolean formula.
    Terms are drawn as circles, e-classes as gray backgrounds, and equivalences
    as dotted lines. Panel I lists the three clauses of the formula, each boxed
    in a different color. Panel II shows one e-class per clause, containing a
    pure term per literal, colored to match the clause. Panel III shows the
    variables encoded as a chain of effectful terms descending from Arg, with
    each variable and its negation paired and joined by auxiliary terms. Panel
    IV shows a single effect-safe extraction: one chain through the variable
    terms, with one clause term attached to each.}
  \label{fig:extraction-sat}
\end{figure}

In this section, we show that \effectsafe extraction is NP-complete.
First, we show how to encode CNF-SAT as an instance of \effectsafe extraction,
  showing that \effectsafe extraction is NP-hard.
Then, we show how to encode \effectsafe extraction in SAT, showing that
  \effectsafe extraction is NP-easy.
The full proofs are in \autoref{app:np-hardness} and \autoref{app:np-easy}.


\paragraph{NP-hardness} The high-level approach is to encode
  variables in the formula as effectful terms
  and clauses in the formula as pure \eclasses.
The \eclass for a clause contains one term
  for each literal in the clause.
The \effectful terms ensure that
  the \statewalks of the root \eclass correspond
  exactly to all possible assignments to
  the variables in $\phi$:
  each \statewalk includes exactly one of $x$ or $\neg x$
  for each variable $x$ in the formula.
Intuitively,
  variables correspond to choices of
    which statewalk to follow;
  clauses correspond to constraints on
    which pure terms have compatible effect orders;
  satisfying assignments correspond to
    effect-safe extractions.

\autoref{fig:extraction-sat} shows an example encoding
  of the formula $\phi = \neg x_0 \land (\neg x_1 \lor x_0) \land (x_2 \lor x_1 \lor \neg x_2)$.
An \effectsafe extraction of $C_R$ is shown in the last panel of \autoref{fig:extraction-sat}.
Its \effectsafety is witnessed by the \statewalk with final term
  $R(g_1(\ldots), g_2(\ldots), g_3(\ldots), s(x_2(s(\neg x_1(s(\neg x_0(\argt)))))))$.
This \statewalk corresponds to the satisfying assignment of $\phi$ given by
  $\{x_0 = \bot, x_1 = \bot, x_2 = \top\}$.
The extracted terms $g_i(\ldots)$ correspond to partial assignments of
  variables that ensure satisfaction of clause $i$ in $\phi$.


More generally, let $\phi = \varphi_1 \land \varphi_2 \land \ldots \land \varphi_k$ be an arbitrary boolean formula
  in conjunctive normal form
  where $\varphi_1, \varphi_2, \ldots, \varphi_k$ are clauses over variables $x_0, x_1, \ldots, x_n$.
We encode the formula using the following operators:
\begin{enumerate}
  \item Arg: the effectful leaf term.
  \item $s$: an effectful operator that takes a state value and produces a state value.
  \item $x_i$ and $\neg x_i$: effectful operators that take a state value and
    produce a tuple containing a pure value and a state value.
  \item $g_i$: a pure operator that takes a pure value and produces a pure value.
  \item $R$: an effectful operator that takes $k$ pure values and a state value.
\end{enumerate}
We construct an effectful \egraph $\mathcal{G}$ as follows\footnote{
          The dataflow IR syntax in \autoref{fig:syntax} would require
          indexing ($t[0]$ and $t[1]$)
          for unpacking the pure and stateful values out of tuples.
          These do not affect the encoding, and they are omitted for simplicity.
        }:
\begin{enumerate}
  \item Add effectful terms $s(x_0(\Arg))$ and $s(\neg x_0(\Arg))$, denoted as $t_{x_0}$ and $t_{\neg x_0}$.
  \item For each variable $i=1\ldots n$,
        add effectful terms $s(x_i(t_{x_{i-1}}))$ and $s(\neg x_i(s(t_{\neg x_{i-1}})))$, denoted as $t_{x_{i}}$ and $t_{\neg x_{i}}$.
  \item Add equivalences $t_{x_{i}}\approxeq_{\mathcal{G}} t_{\neg x_{i}}$ for each $x_i$.
  \item For each clause $\varphi_i = \ell_1 \lor \ldots \lor \ell_m$,
        and for each literal $\ell_j \in \varphi_i$, where $\ell_j$ is either
        $x_k$ or $\neg x_k$ for some $k$,
        add the pure term $g_i(x_k(t_{x_{k-1}}))$ or $g_i(\neg x_k(t_{\neg x_{k-1}}))$, respectively,
        denoted as $t_{\varphi_{i,j}}$.
  \item For each clause $\varphi_i= \ell_1 \lor \ldots \lor \ell_m$, add equivalences between all $t_{\varphi_{i,j}}$ for
  $j=1\ldots m$.
  \item Add an effectful term $R(t_{\varphi_{1,1}},\ldots t_{\varphi_{k,1}}, t_{x_n})$; denote the \eclass of this term as $C_R$.
\end{enumerate}

\paragraph{NP-easiness}
  Finding any \egraph extraction, without considering \effectsafety,
    can be encoded as an instance of SAT using $O(n^3)$ variables and clauses,
    where $n$ is the size of the \egraph.
  To find an \effectsafe extraction,
		instead of finding just a single term,
		we solve for a \statewalk,
		which is a sequence of terms.
	This can be done by cloning the single-term extraction encoding
	  for each potential term in the \statewalk and
	  adding constraints to ensure the sequence of terms forms a \statewalk.
  The key insight is that
	  while there is no upper bound on the length of a \statewalk,
    only \statewalks up to length $n^2$ need to be
    considered in the search for an \effectsafe extraction.
  As a result, it is possible to encode the \effectsafe extraction problem
    as an $O(n^5)$ SAT instance.
  The formal proof, including the proof extension for optimal \effectsafe extraction
    is in \autoref{app:np-easy}.

  Hardness does not leave an optimizer stuck, since the input program is itself
    an \effectsafe solution, and prior work falls back to it when extraction
    fails~\cite{peggy}.
  Our problem definition does not assume such a solution is in hand,
    because the goal is to find a \emph{better} program;
    whether the input can instead \emph{guide} extraction remains open.

\section{Statewalk DP}

\label{sec:tiger}
For pure \egraphs, extractors can pick any term
  from the root \eclass to give a valid solution.
Finding the optimal solution with respect to a cost model
  is known to be NP-complete~\cite{stepp-thesis, yihong-np},
  and existing solutions do not scale especially
  in the presence of cycles~\cite{goharshady}.
Therefore, the most ubiquitous and successful extraction
  algorithms greedily search for good terms without guaranteeing optimality.
In this section we present \tiger,
  an algorithm with guarantees of \effectsafe extraction
  while optimizing for cost.
For a worked example, see \autoref{app:tiger-example}.

Typical extraction algorithms for pure \egraphs build extractions bottom-up,
  greedily constructing extractions for each \eclass.
This approach is complete (always finding an extraction
  when one exists) and terminating.
Crucially, extraction algorithms for pure \egraphs leverage
  the \emph{congruence} property of \egraphs to
  choose any term from each \eclass when constructing new terms.
The extractor can freely compose
  a function symbol and a valid term from each direct child \eclass,
  allowing it to keep only one extracted term per \eclass.

With effects, the \egraph can contain \effectunsafe terms,
  and the congruence property no longer holds
  over \effectsafe terms.
Therefore, by default, terms in the same \eclass
  cannot necessarily be used interchangeably when constructing new terms.
In other words, composing \effectsafe terms does not
  necessarily lead to another \effectsafe term.
As a result, applying a standard extraction algorithm to \effectful \egraphs
  does not guarantee completeness: it can fail to find an \effectsafe extraction when one exists.
In fact, finding \textit{any} \effectsafe extraction is NP-complete (\autoref{sec:np}).

We solve this problem by grouping \effectsafe terms
  into equivalence classes, recovering the ability
  to choose any term within the same equivalence class.\footnote{
  We write \emph{equivalence class} for a class of one of the relations in
    \autoref{tab:equivalences}, and \emph{\eclass} for the terms an \egraph stores.}
We call the ability to substitute one \effectsafe term
  for another \emph{substitutability}.
The \tiger algorithm solves the \effectsafe extraction problem
	using a dynamic programming approach on the equivalence classes of $\statewalkequiv$,
	a refinement of $\equiv_{\mathcal{G}}$.
Our description focuses on the completeness of \tiger.
We present proof sketches for soundness and completeness in \autoref{sec:tiger-proof}
  and a complexity analysis in \autoref{sec:complexity}.

Our \tiger implementation uses a linear cost model
	and produces a best-effort low cost term
	as in a typical effect-unaware greedy extractor.
Our cost model and the optimality of \tiger's solution
	are discussed in \autoref{sec:cost-model}.

\subsection{Achieving Substitutability Using $\statewalkequiv$}
\label{sec:statewalkequiv}


Extractions can be composed when they are \effectsafe
  under the \emph{same} \statewalk.
Intuitively, this is because the subterms all rely on the same
  ordering of \effectful operations.
Every \effectsafe term corresponds to a \statewalk,
  so a na\"{i}ve extractor could exhaustively enumerate through all \effectsafe
  terms bottom-up, storing one term per \statewalk for each \eclass.
However, this approach is impractical
  because the number of \statewalks is unbounded.

We would like to reduce the number of \statewalks
  considered for each \eclass.
For pure \egraphs, congruence
  allows us to store only one term per \eclass.
We would like to recover a similar property
  for \effectsafe terms, which we call substitutability.
Our insight is an equivalence relation
  $\statewalkequiv$ which refines $\equiv_\mathcal{G}$,
  and allows us to recover
  substitutability, the ability to substitute one \effectsafe term for another.
With substitutability we will store
  dramatically fewer \statewalks.


First we define the \extractionset of a \statewalk,
  which we'll use to group \statewalks into equivalence classes under $\statewalkequiv$.
Informally, the \extractionset of \statewalk $\sw$ is the set of pure \eclasses in $\mathcal{G}$
which contain an \effectsafe term under $\sw$.
More precisely:
\begin{definition}[Extractable Set]
	$$\ES(\sw) = \{\classof{t}{\mathcal{G}} \mid \text{pure}(t) \wedge \effectsafe(t, \sw) \}$$
\end{definition}
Note that any \effectsafe \effectful term
  exactly corresponds to the \statewalk constructed from its \effectful subterms,
  so for a term $t$, $\ES(t)$ denotes the \extractionset of the \statewalk defined by $t$.


$\statewalkequiv$ equates any two \effectsafe \effectful terms,
	$t_1$ and $t_2$,
	that are equivalent under $\equiv_\mathcal{G}$
	and induce \statewalks with the same \extractionset.
\begin{definition}[Statewalk Equivalence, $\statewalkequiv$]
	$$t_1 \statewalkequiv t_2  \iff \text{\effectful}(t_1) \wedge \text{\effectful}(t_2) \wedge t_1 \equiv_\mathcal{G} t_2 \wedge \ES(t_1) = \ES(t_2)$$
\end{definition}

By this definition, $\statewalkequiv\; \subseteq\; \equiv_\mathcal{G}$,
	so solving the extraction problem for $\statewalkequiv$ solves the \effectsafe extraction problem of $\equiv_\mathcal{G}$.
Because equivalence classes of $\equiv_\mathcal{G}$
	have a one-to-one correspondence
	with \eclasses of $\mathcal{G}$ that are inhabited by \effectsafe terms,
	each pair  $(\classof{t}{\mathcal{G}}, \ES(t))$,
	corresponds to one partition produced by $\statewalkequiv$'s refinement of $\equiv_\mathcal{G}$.
We use these pairs as DP states in \tiger.\tighten

Under $\statewalkequiv$, \effectful terms are equivalent
  exactly when the statewalks they define are substitutable
  for constructing new \effectsafe terms.


\begin{lemma}[Substitutability Within Effectful Term]
  Given
  \begin{itemize}[leftmargin=2em]
    \item \effectsafe \effectful terms $t_1 \statewalkequiv t_2$,
    \item their corresponding \statewalks $\sw_1 = \langle \argt, \ldots, t_1 \rangle$ and $\sw_2 = \langle \argt, \ldots, t_2 \rangle$,
    \item an extended statewalk $\langle \argt, \ldots, t_1, s \rangle$,
  \end{itemize}

  Then exists $s' \equiv_\mathcal{G} s$ with statewalk $\langle \argt, \ldots, t_2, s' \rangle$,
\end{lemma}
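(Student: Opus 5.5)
We construct $s'$ explicitly, by replacing $t_1$ with $t_2$ in the state position of $s$ and repairing each pure argument of $s$ using the \extractionset equality $\ES(t_1)=\ES(t_2)$.

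\textbf{Setup.} Since $\langle \argt,\ldots,t_1,s\rangle$ is a \statewalk, $t_1\in\children{s}$. By the syntax of \autoref{fig:syntax}, every term has at most one \effectful child, so $t_1$ is the unique \effectful child of $s$. Write $s=f(p_1,\ldots,p_m,t_1)$, up to the position of the state argument, with each $p_j$ pure.

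The last clause of \autoref{def:statewalk} gives that every \effectful subterm of each $p_j$ lies in $\sw_1$. Hence $\effectsafe(p_j,\sw_1)$, and so $\classof{p_j}{\mathcal{G}}\in\ES(\sw_1)$.

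\textbf{Repairing the pure children.} From $t_1\statewalkequiv t_2$ we have $\ES(\sw_1)=\ES(\sw_2)$. So for each $j$ there is a pure term $p'_j$ with $\classof{p'_j}{\mathcal{G}}=\classof{p_j}{\mathcal{G}}$ and $\effectsafe(p'_j,\sw_2)$. Also $t_2\approxeq_{\mathcal{G}}t_1$, since both are in the same \eclass.

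Define $s'=f(p'_1,\ldots,p'_m,t_2)$. By the \textsc{Congruence} property of $\approxeq_{\mathcal{G}}$, applied with the reflexive instance $s\approxeq_{\mathcal{G}} s$ (which holds because $s$ is represented), we get $s'\approxeq_{\mathcal{G}} s$. The term $s'$ is \effectful because $s$ is, and \eclasses are uniformly pure or \effectful.

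\textbf{Checking the \statewalk conditions.} We verify the clauses of \autoref{def:statewalk} for $\langle\argt,\ldots,t_2,s'\rangle$.
\begin{itemize}
  \item The prefix $\sw_2$ is a \statewalk by hypothesis.
  \item $s'$ is \effectful, and $t_2\in\children{s'}$.
  \item For the last clause at position $s'$, consider the \effectful subterms of the children of $s'$. Those of $t_2$ are $t_2$ itself and, since $\sw_2$ is the \statewalk defined by $t_2$, the earlier elements of $\sw_2$. Those of each $p'_j$ lie in $\sw_2$ by $\effectsafe(p'_j,\sw_2)$.
\end{itemize}
All these subterms therefore appear in $\sw_2$, which is exactly the prefix preceding $s'$.

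\textbf{Effect-safety and conclusion.} The \effectful subterms of $s'$ are $s'$ together with those of its children. By the previous step, all of them lie in $\langle\argt,\ldots,t_2,s'\rangle$, so $\effectsafe(s')$ holds. Since $s$ is itself \effectsafe under $\langle\argt,\ldots,t_1,s\rangle$ by the same reasoning, we conclude $s'\equiv_{\mathcal{G}} s$.

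\textbf{Main obstacle.} The delicate point is the last clause of \autoref{def:statewalk}. We must ensure that the chosen $p'_j$ do not contain \effectful subterms lying outside $\sw_2$, for instance terms that are merely $\approxeq_{\mathcal{G}}$-equal to elements of $\sw_2$. This is exactly what choosing $p'_j$ via $\effectsafe(p'_j,\sw_2)$, rather than picking an arbitrary member of the \eclass, guarantees.

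A smaller point is the reading of "\statewalk of $t_2$". We rely on the remark after \autoref{def:effect-safety} that a \statewalk is determined by its last term and that a term's \statewalk contains all of its \effectful subterms. This is what lets the \effectful subterms of $t_2$ be accounted for inside $\sw_2$.
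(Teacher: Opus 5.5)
Your proposal is correct and follows essentially the same construction as the paper. You replace the unique effectful child $t_1$ of $s$ with $t_2$, swap each pure child for an e-class-mate that is effect-safe under $\sw_2$ (which exists because $\ES(\sw_1)=\ES(\sw_2)$), and conclude $s'\equiv_{\mathcal{G}} s$ by congruence. You are more explicit than the paper's sketch: you check each clause of the statewalk definition for $\langle\argt,\ldots,t_2,s'\rangle$, and you state the implicit assumption that $s$ is represented in $\mathcal{G}$.
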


We construct $s'$ by substitution.
Since $s$ extends the \statewalk, $t_1$ is a child of $s$, and by
  \Cref{lem:one-effectful-child} it is the only \effectful one,
  so we substitute it with $t_2 \equiv_\mathcal{G} t_1$.
For each pure child $c_i$ of $s$, we substitute it with $c_i' \equiv_\mathcal{G} c_i$
  such that $\effectsafe(c_i', \sw_2)$, which is guaranteed to
  exist because $\sw_1$ and $\sw_2$ have the same \extractionset.
All children of $s'$ are \effectsafe under $\sw_2$ and
  $s'$ refers only to $t_2$ so $s'$ is \effectsafe.
Because $\mathcal{G}$ is closed under congruence,
  $s$ and $s'$ are in the same \eclass of $\mathcal{G}$,
  so $s' \equiv_\mathcal{G} s$.

\begin{lemma}[Substitutability Within Pure Term]
  Given
  \begin{itemize}[leftmargin=2em]
    \item \effectsafe \effectful terms $t_1 \statewalkequiv t_2$,
    \item their corresponding \statewalks $\sw_1 = \langle \argt, \ldots, t_1 \rangle$ and $\sw_2 = \langle \argt, \ldots, t_2 \rangle$,
    \item an $s$ such that $s$ is pure and $\effectsafe(s, \sw_1)$,
  \end{itemize}

  Then there exists $s' \equiv_\mathcal{G} s$ such that $\effectsafe(s', \sw_2)$.
\end{lemma}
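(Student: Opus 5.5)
The argument is almost immediate from the definition of the \extractionset. Since $s$ is pure and $\effectsafe(s, \sw_1)$ holds, the definition of $\ES$ gives $\classof{s}{\mathcal{G}} \in \ES(\sw_1)$. The hypothesis $t_1 \statewalkequiv t_2$ gives $\ES(t_1) = \ES(t_2)$, that is, $\ES(\sw_1) = \ES(\sw_2)$, because $\ES(t)$ is by definition the \extractionset of the \statewalk defined by $t$. Hence $\classof{s}{\mathcal{G}} \in \ES(\sw_2)$. Unfolding the definition of $\ES(\sw_2)$ once more yields some pure term $s'$ with $\classof{s'}{\mathcal{G}} = \classof{s}{\mathcal{G}}$ and $\effectsafe(s', \sw_2)$. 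This $s'$ is the witness we want.

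It remains to check that $s' \equiv_\mathcal{G} s$. First, $s$ and $s'$ lie in the same \eclass $C$, so both belong to $\termsc{C}{\mathcal{G}}$ and $s \approxeq_\mathcal{G} s'$. Second, $s$ is \effectsafe, witnessed by $\sw_1$, and $s'$ is \effectsafe, witnessed by $\sw_2$. Since $\equiv_\mathcal{G}$ is exactly $\approxeq_\mathcal{G}$ together with \effectsafety of both sides, this gives $s' \equiv_\mathcal{G} s$.

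Unlike the previous lemma, no congruence or substitution step is needed, because the \extractionset is defined over whole pure \eclasses. The only point that needs care is matching $\ES(t_i)$ with $\ES(\sw_i)$: the lemma names $\sw_i$ as the statewalks ending in $t_i$, and the paper notes that a statewalk is uniquely determined by its last term. So there is no real obstacle; the proof is essentially a restatement of the definitions.
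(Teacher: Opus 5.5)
Your proposal is correct and follows essentially the same route as the paper: you place $\classof{s}{\mathcal{G}}$ in $\ES(\sw_1)$, use $\ES(\sw_1) = \ES(\sw_2)$ from $t_1 \statewalkequiv t_2$, and unfold the definition to get $s'$. You then conclude $s' \equiv_\mathcal{G} s$ because both terms are \effectsafe and share an \eclass. Your added remark matching $\ES(t_i)$ with $\ES(\sw_i)$ via uniqueness of the \statewalk determined by its last term is a small clarification that the paper leaves implicit.
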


This follows directly from the definition of \extractionset.
Because $s$ is pure and $\effectsafe(s, \sw_1)$,
  its \eclass $\classof{s}{\mathcal{G}}$ belongs to $\ES(\sw_1)$.
Since $t_1 \statewalkequiv t_2$ gives $\ES(\sw_1) = \ES(\sw_2)$,
  that \eclass also belongs to $\ES(\sw_2)$,
  which by definition means it contains a term $s'$ with $\effectsafe(s', \sw_2)$.
Both terms are \effectsafe and share an \eclass, so $s' \equiv_\mathcal{G} s$.
Note that the \effectful child of a pure $s$ need not be $t_1$:
  it may be any earlier element of $\sw_1$,
  as with $u2[0]$ in the example of \autoref{sec:effect-safety}.

With substitutability recovered,
	it suffices for \tiger to store
	a single \statewalk per equivalence class of $\statewalkequiv$.
For a specific \eclass,
  the number of \statewalk equivalence classes
	is at most $2^n$
	because each equivalence class must correspond to a different \extractionset
	and there are only $2^n$ possible \extractionsets.
Thus, the number of equivalence classes of $\statewalkequiv$ is at most $n \cdot 2^n$,
  and \tiger is guaranteed to terminate.




\subsection{\tiger Algorithm}

\begin{algorithm}[t]
\caption{\tiger \textsc{ExtractMap}: find the pure terms extractable under a given \statewalk}
\label{alg:tiger-extractmap}
\begin{algorithmic}[1]

\Function{ExtractMap}{$\sw$}
  \State \textbf{Input:} a valid statewalk, $\sw$
  \State \textbf{Output:} a mapping from each pure \eclass to a term of the \eclass that is \effectsafe with $\sw$.
  \State $E \gets \{\}$
  \For{$\ell \in \{t \ | \ t \in \mathcal{G} \land \text{pure}(t) \land \text{leaf}(t)\}$}
    \State $\textbf{if } E[\classof{\ell}{\mathcal{G}}] = \bot \textbf{ then } E[\classof{\ell}{\mathcal{G}}] \gets \ell$
  \EndFor

  \State $\text{fixed\_point} \gets \textbf{false}$
  \While{\textbf{not} fixed\_point}
    \State $\text{fixed\_point} \gets \textbf{true}$
    \For{$t = f(t_1, \ldots, t_n) \in \{ t \ | \ t \in \mathcal{G} \land \text{pure}(t) \} $}
      \State $ \textbf{if } E[\classof{t}{\mathcal{G}}] \neq \bot \textbf{ then continue}$ \Comment{Already have an \effectsafe representative.}
      \If{$ \forall\ i.\ E[\classof{t_i}{\mathcal{G}}] \neq \bot \lor (\exists \ e_i'.\ \classof{e_i'}{\mathcal{G}} = \classof{t_i}{\mathcal{G}} \land e_i' \in \sw) $}
        \State \Comment{Each pure $t_i$ can be substituted with a term that is \effectsafe under \sw;}
        \State \Comment{the effectful $t_i$ (if present) can be substituted by $e_i' \in \sw$.}
        \For{$t_i \in \{ t_1, \ldots, t_n \} $}
          \State $ t_i' \gets \textbf{if } \text{pure}(t_i) \textbf{ then } E[\classof{t_i}{\mathcal{G}}] \textbf{ else } e_i' $ \Comment{At most one $t_i$ is effectful.}
        \EndFor
        \State $E[\classof{t}{\mathcal{G}}] \gets f(t_1', \ldots, t_n')$
        \State $\text{fixed\_point} \gets \textbf{false}$
      \EndIf
    \EndFor
  \EndWhile
  \State \Return $E$
\EndFunction
\end{algorithmic}
\end{algorithm}

\begin{algorithm}[t]
\caption{\tiger \textsc{Extract} (entry): return an \effectsafe term for an \effectful \eclass}
\label{alg:tiger-extract}
\begin{algorithmic}[1]

\Function{Extract}{$c_r$}
  \State \textbf{Input:} an effectful \eclass $c_r$.
  \State \textbf{Output:} an \effectsafe term represented by $c_r$.

  \State $DP \gets \{\}$ \Comment{Store a \statewalk for each \effectful \eclass and extractable set}
  \State $e \gets \text{Keys}(\Call{ExtractMap}{\langle \Arg \rangle})$ \Comment{pure \eclasses extractable with $\langle \Arg \rangle $}
  \State $DP[\classof{\Arg}{\mathcal{G}}][e] \gets \langle \Arg \rangle$
  \State $\text{worklist} \gets \text{new queue}$ \Comment{Worklist stores (\effectful \eclass, extractable set) pairs}
  \State $\text{worklist}.\text{insert}((\classof{\Arg}{\mathcal{G}}, e))$
  \While{$\textbf{not } \text{worklist.empty}()$}
    \State $(c, es) \gets \text{worklist.pop}()$
    \State $\langle e_0, \ldots, e_k \rangle \gets DP[c][es]$ \Comment{$\classof{e_k}{\mathcal{G}} = c$}
    \State $\text{eff\_deps} \gets \{ s \ | \ s = g(s_1, \ldots, s_m) \in \mathcal{G} \land \neg\text{pure}(s) \land \exists\ j.\ \classof{s_j}{\mathcal{G}} = c \}$
    \For{$t\ = f(t_1, \ldots, t_n) \in \text{eff\_deps}$}
      \If{$ \forall\ i.\ \text{pure}(t_i) \implies \classof{t_i}{\mathcal{G}} \in es $} \Comment{Can extend $ \langle e_0, \ldots, e_k \rangle $ with $t$.}
        \State $M \gets \Call{ExtractMap}{\langle e_0, \ldots, e_k \rangle}$
        \For{$t_i \in \{ t_1, \ldots, t_n \} $}
          \State $t_i' \gets \textbf{if } \text{pure}(t_i) \textbf{ then } M[\classof{t_i}{\mathcal{G}}] \textbf{ else } e_k $ \Comment{$ \neg \text{pure}(t_i) \implies \classof{t_i}{\mathcal{G}} = c $}
        \EndFor
        \State $t' \gets f(t_1', \ldots, t_n')$ \Comment{$t'$ equivalent to $t$ and \effectsafe.}
        \State $\sw' \gets \langle e_0, \ldots, e_k, t' \rangle$
        \State $es' \gets \text{Keys}(\Call{ExtractMap}{\sw'})$
        \If{$DP[\classof{t}{\mathcal{G}}][es'] = \bot$}
          \State $DP[\classof{t}{\mathcal{G}}][es'] \gets \sw'$
          \State $\text{worklist.insert}((\classof{t}{\mathcal{G}}, es'))$
        \EndIf
      \EndIf

    \EndFor
  \EndWhile
  \State $ \textbf{if } DP[c_r] = \bot \textbf{ then return } \bot $
  \State $\text{Choose } es \text{ such that } DP[c_r][es] \neq \bot \text{ and cost is minimized.}$
  \State $\langle e_0,\ldots,e_k \rangle \gets DP[c_r][es]$
  \State \Return $e_k$ \Comment{the extracted term for \eclass $c_r$}
\EndFunction

\end{algorithmic}
\end{algorithm}

\autoref{alg:tiger-extractmap} and \autoref{alg:tiger-extract}
	show the two components of \tiger.
\autoref{alg:tiger-extract} is the entry point to the algorithm
	which takes an \egraph and an \effectful root \eclass
	and returns an \effectsafe extraction for the given \eclass.

$\textsc{ExtractMap}$ maps a \statewalk to its \extractionset.
It is very similar to a standard extraction algorithm that
  iteratively constructs extracted terms for pure \eclasses via bottom-up enumeration.
The key difference is that it only uses the members of the input \statewalk
	as \effectful subterms to construct new pure terms.
When there are multiple subterms in the \statewalk for a particular \effectful \eclass,
  any of them can be used as the representative.
By construction, the representative term for each pure \eclass is \effectsafe
  with respect to the provided \statewalk.

$\textsc{Extract}$ is the main extraction loop. It keeps a worklist of
  equivalence classes of $\statewalkequiv$, whose representative \statewalks
	can potentially be extended to construct new \statewalks.
The DP table keeps track of one \statewalk per equivalence class,
  indexed by an \effectful \eclass and an \extractionset for the \statewalk.
For each \statewalk processed by the worklist,
  we enumerate all possible extensions of the \statewalk (\effectful \eclasses that
  take the last term in the current \statewalk as input).
When a new \statewalk is constructed, its \extractionset is computed
  using $\textsc{ExtractMap}$.
If the \statewalk and its \extractionset form a new equivalence class under $\statewalkequiv$,
  the \statewalk is added to the worklist.
The algorithm terminates when all items in the worklist have been processed,
  at which point, all equivalence classes of $\statewalkequiv$ have been identified.
If there is an entry in the DP table for the root \eclass, it corresponds
  to an \effectsafe extraction.
Otherwise, no \effectsafe extraction exists.
Though not shown in the algorithm pseudocode,
	the implemented algorithm keeps the lowest cost representative \statewalk for each
  equivalence class and uses a priority
  queue for the worklist, as in Dijkstra's shortest path algorithm.

\subsection{\tiger Algorithm Correctness}
\label{sec:tiger-proof}
In this section, we give a proof sketch for
  the soundness and completeness of \tiger.
The full proof is in \autoref{sec:tiger-appendix}.

\myparagraph{$\textsc{ExtractMap}$ is sound}
  Every entry in the mapping is \effectsafe with $\sw$.
  All terms added to the mapping are \effectsafe by construction
    because \effectsafe terms with the same statewalk are composable.
  The algorithm leverages this composability to construct the mapping
    via bottom-up enumeration starting from the leaves.

\myparagraph{$\textsc{ExtractMap}$ is complete}
  If an \effectsafe term exists for a given \eclass with $\sw$,
    then the mapping will contain an entry for that \eclass.
  All \effectsafe terms must be composed from \effectsafe subterms with
    the same \statewalk,
    so bottom-up enumeration until fixed-point ensures
    that an \effectsafe term for every \eclass in the \extractionset of $\sw$
    will be found.
  Soundness and completeness are both proven by induction on the depth
    of the smallest term in each \eclass.
  

\myparagraph{$\textsc{Extract}$ is sound}
  Soundness is implied by the following inductive invariant:
  each entry in \tiger's DP table is a valid statewalk for the
  corresponding (\eclass, extractable set).
  Since DP entries are never changed after they are set,
    and since the DP table is initialized with a valid
      \statewalk $\langle \Arg \rangle$,
    it suffices to show that for each loop iteration that
    assigns to $DP$, $\sw'$ is a valid \statewalk by construction
    that extends a \statewalk that was already present in $DP$.
  We prove the invariant by induction on the depth of the smallest term in each \eclass.

\myparagraph{$\textsc{Extract}$ is complete}
  Completeness is implied by the following inductive invariant:
  the (\eclass, extractable set) pair corresponding to every valid \statewalk
  is eventually added to the worklist.
  The (\eclass, extractable set) pairs in the worklist
    correspond to the equivalence classes of $\statewalkequiv$.
  Storing a single \statewalk for each pair suffices because 
    of substitutability under equivalent \statewalks.
  Every valid \statewalk can be built via some sequence of
    statewalks processed by the worklist.
  $\textsc{Extract}$ builds statewalks by bottom-up enumeration
    until fixed-point, ensuring that the equivalence class of every
    valid \statewalk is represented in the DP table.
  We prove the invariant by induction on the \statewalk.\tighten

\subsection{Complexity}
\label{sec:complexity}

As mentioned above,
  \tiger terminates because the number of equivalence classes of $\statewalkequiv$ is finite.
Despite being exponential in the worst case
  (as in the construction in \autoref{sec:np}),
  we observe in practice that the number of equivalence classes of $\statewalkequiv$
  is generally small.
We capture this insight with the parameter \statewalkwidth.

Informally, \statewalkwidth measures the maximum number of
	equivalence classes in $\statewalkequiv$
  corresponding to a single equivalence class of $\equiv_\mathcal{G}$.
This is equivalent to the maximum number of different \extractionsets for terms
 in one equivalence class of $\equiv_{\mathcal{G}}$.
More formally, we have:
\begin{definition}[Statewalk Width]
  $\swwidth{\equiv_\mathcal{G}} = \max_{\text{\effectful}(t)}{\bigl| \{{\ES(t')} \mid \; t' \equiv_\mathcal{G} t\} \bigr|}$
\end{definition}

By this definition, we obtain the exponential upper bound:
\begin{lemma}[Upperbound on statewalk width]
  $\swwidth{\equiv_\mathcal{G}} \le 2^n$, where $n$ is the number of equivalence classes of $\equiv_\mathcal{G}$
\end{lemma}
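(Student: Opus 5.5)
The bound follows by counting the possible values of $\ES$. Fix an \effectful term $t$ and consider the set $S_t = \{\ES(t') \mid t' \equiv_\mathcal{G} t\}$ from the definition of \statewalkwidth. We will show that every element of $S_t$ is a subset of one fixed finite set of size at most $n$. Then $S_t$ injects into that set's power set, so $|S_t| \le 2^n$. Since this holds for every \effectful $t$, it also holds for the maximum, which is $\swwidth{\equiv_\mathcal{G}}$.

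The first step identifies the ambient set. By definition, $\ES(t')$ is the set of \eclasses $\classof{s}{\mathcal{G}}$ with $s$ pure and $\effectsafe(s, \sw)$, where $\sw$ is the \statewalk determined by $t'$. Each such $s$ is \effectsafe, and by reflexivity of the PER on represented terms we have $s \approxeq_{\mathcal{G}} s$. Hence $s \equiv_{\mathcal{G}} s$, and $\classof{s}{\mathcal{G}}$ is an \eclass inhabited by an \effectsafe term. As noted after the definition of $\statewalkequiv$, such \eclasses are in one-to-one correspondence with the equivalence classes of $\equiv_\mathcal{G}$. So $\ES(t')$ is a subset of a fixed set $P$ of (pure) equivalence classes of $\equiv_\mathcal{G}$, with $|P| \le n$.

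The second step is the counting. The map $t' \mapsto \ES(t')$ sends $\{t' \mid t' \equiv_\mathcal{G} t\}$ into $2^{P}$, and $S_t$ is exactly its image. Therefore $|S_t| \le |2^P| = 2^{|P|} \le 2^n$, and taking the maximum over \effectful $t$ gives the lemma. If $t$ has no \effectsafe equivalent, then $S_t$ is empty and the bound is trivial.

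The only delicate point is the first step. We must make sure that ``number of equivalence classes of $\equiv_\mathcal{G}$'' really bounds the set of possible members of an \extractionset. That is, we must confirm that every pure \eclass appearing in some $\ES$ is counted among the $\equiv_\mathcal{G}$ classes, and not only among the \eclasses of $\mathcal{G}$. The observation that each such \eclass contains an \effectsafe witness $s$ settles this. The bound is in fact slightly loose, since only pure classes are counted in the exponent.
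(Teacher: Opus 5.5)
Your proposal is correct and is essentially the paper's argument: the paper gets the bound directly from the definition by observing that distinct \extractionsets are distinct subsets of at most $n$ classes, so there are at most $2^n$ of them. Your extra check---that every \eclass in an \extractionset contains an \effectsafe witness and so corresponds to a class of $\equiv_\mathcal{G}$---simply makes explicit why the exponent can be the number of $\equiv_\mathcal{G}$ classes rather than the number of \eclasses.
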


Let $k = \swwidth{\equiv_\mathcal{G}}$
  and $n$ be the number of representative terms in $\mathcal{G}$,
  the complexity for \tiger finding any \effectsafe extraction is $O(kn^2)$.
If we augment the algorithm to minimize cost,
  the complexity grows by a logarithmic factor.

The complexity of processing each representative term in a
  topological order
  exactly once is $O(n)$, resembling
  a standard effect-unaware extraction algorithm.
Similarly, the complexity of the \autoref{alg:tiger-extractmap} is $O(n)$,
	traversing $\mathcal{G}$ once.

In \autoref{alg:tiger-extract},
	each \effectful representative term of $\mathcal{G}$
	is processed at most $\swwidth{\equiv_\mathcal{G}}$ times.
This is because the representative term has only one
  \effectful direct child, whose \eclass
  can correspond to at most $\swwidth{\equiv_\mathcal{G}}$
  many equivalence classes of $\statewalkequiv$.
So the number of calls to \autoref{alg:tiger-extractmap} is bounded by
  $O(kn)$ over all representative terms.
Thus, the complexity of \tiger is $O(kn^2)$.

As we will see in \autoref{sec:eval},
  statewalk width is low in practice.
We hypothesize that optimization rules
  tend to preserve extractable sets
  across different statewalks.
For example, a rule rewriting only pure terms cannot increase \statewalkwidth, covering many
  peephole optimizations.
For rules that do rewrite effectful terms,
  only rarely does an application drop information
  (i.e., extractable \eclasses) between statewalks.\tighten

\section{Implementation}
\label{sec:impl}

This section discusses the implementation of \tiger, including
	performance optimizations and an extension to support control flow.
Finally, we describe \eggcc, an \egraph-based optimizer
  for effectful programs with control flow, in which we instantiate
  \tiger.

\subsection{Control Flow and Regions}
\label{sec:controlflows}


The algorithm in \autoref{sec:tiger} operates over a general dataflow IR, 
 which does not support control flow.
To handle control flow, 
 we extend \tiger to work with the Regionalized Value State
  Dependence Graph (\rvsdg) representation~\cite{rvsdg}.
\rvsdgs represent control flow using \emph{regions},
  which are subgraphs with well-defined inputs and outputs.
Each region defines a scope for its contained nodes.
Nodes only refer to other nodes in the same region and never refer
	to nodes in other regions directly.
A node in an \rvsdg can be simple (e.g., those in \autoref{sec:formalizing-rvsdgs})
 or structural.
Structural nodes contain sub-regions.
Loops are structural nodes with the loop body as a sub-region, and
conditionals are structural nodes with sub-regions for each branch.
This gives a hierarchical representation of control flow as nested regions.

Rewriting runs over the whole \egraph,
	in which a region is an ordinary node:
	an argument of an \texttt{If} or \texttt{DoWhile}.
To extract a program, however,
  we decompose the \egraph into \emph{regionalized} \egraphs,
	each representing terms in a single region.
This step is necessary because \tiger assumes
	all \effectful terms have a single \effectful child,
	but branching constructs have multiple \effectful children (e.g., the then and else branches of an if term).
Each regionalized \egraph is constructed by replacing
  subterms of a representative term referring to regions (such as a loop body)
  with symbolic placeholders.
To extract, we first run \tiger over the regionalized \egraph that contains the 
  target \eclass, which returns an extraction that may contain placeholders.
For each placeholder, we recursively extract from a regionalized \egraph 
	rooted at the placeholder's \eclass.
The full extraction for the root \eclass is reconstructed from the extractions
	for each sub-region by substituting the extracted term for each placeholder.
\autoref{app:regionalized} illustrates the decomposition on a small example.\tighten

Regionalized \egraphs break extraction into smaller subproblems, but a single
	regionalized \egraph often still contains tens of thousands of terms, and
	\autoref{sec:eval} shows this is not enough to make ILP-based extraction
	tractable for \eggcc's \egraphs.

\subsection{\tiger Optimizations}
\label{sec:tiger-optimizations}

To improve performance,
	our implementation of \tiger includes several optimizations.
We briefly explain the most impactful ones.

\myparagraph{Pruning regionalized \egraphs}
We remove terms that are not reachable from the root \eclass
  and uninhabited \eclasses from the regionalized \egraph
  to reduce its size.

\myparagraph{Implicit terms}
We do not explicitly construct and save the terms in the main loop of \tiger.
Instead, we mark the existence of a term and maintain hints for later reconstruction.
We construct the terms explicitly only
	after confirming that an \effectsafe term from the root \eclass exists
	and that no lower-cost variant was discovered.

\myparagraph{Memoization}
We cache calls to $\textsc{ExtractMap}$
	to avoid recomputing the same query.
We improve cache hit rate by leveraging
	the fact that query results
	represented by implicit terms
        are determined by the set
	of \eclasses in the \statewalk.

\myparagraph{Reusing partial results with persistent B-tree}
$\textsc{ExtractMap}$ is called on \statewalks that grow by a
	single term in each subsequent call, so results for the prefix
	of each \statewalk can be reused from previous calls.
Instead of copying the entire state, we use a persistent B-tree to maintain these
	intermediate results.
As a result, each call to $\textsc{ExtractMap}$ only needs to perform
	updates that are introduced by the last term in the input \statewalk.


\myparagraph{Incremental hashing}
We use hashes to compare two \extractionsets quickly.
To avoid doing a full scan of the \extractionset,
	which would defeat the purpose of using a persistent B-tree,
  we use an incremental hash function inspired by bloom filters and hyper-dimensional computation.

\myparagraph{Liveness}
For each \statewalk, there is a fixed set of \eclasses
	containing \effectful terms that could potentially be used to extend the \statewalk.
If there are no terms in this set of \eclasses
	that depend on a pure \eclass
	in the \statewalk's \extractionset,
	then the pure \eclass can be dropped from the \extractionset,
	reducing
	the number of equivalence classes explored by \tiger.
  
\myparagraph{Associativity-commutativity (AC) mitigation}
We observe occasional exponential blow up of \statewalkwidth
	in our evaluation,
	which arises from an interaction between an optimization that
	collapses the ordering of read operations and Bril's implementation
	of multi-dimensional arrays,
  which leads to many densely nested load operations.
Because load operations do not change the state value,
	every partial permutation of loads can lead to a different equivalence class in $\statewalkequiv$.
This resembles the classic Associativity-Commutativity (AC) problem of \egraphs~\cite{ac1, ac2}.
In our implementation, we use sound heuristics to identify
	\eclasses that might cause this blow up and use a fixed order of
	exploring extensions to \statewalks from these \eclasses.
This performance optimization drastically reduces the number of equivalence classes \tiger explores.


\subsection{Cost Model}
\label{sec:cost-model}
Returning the original program is a trivial but undesirable solution to the extraction problem.
To find a term that corresponds to 
	an optimized version of the input program,
	we use a linear cost model that assigns a fixed constant cost to each function symbol.
The cost of a term is the sum of the costs of all function symbols it contains.
This cost model is used to pick the extraction used for each \eclass.
The cost for each sub-region in a regionalized \egraph is estimated
	using a traditional greedy extractor that does not guarantee \effectsafety.

Like traditional greedy extractors, \tiger is not guaranteed to
	find the globally optimal term under the cost model because
	it greedily makes locally optimal decisions.
However, it does not lose more optimality than a standard greedy extractor.
That is, if a greedy extractor that does not consider effects
	happens to find an \effectsafe term, \tiger's solution will be at least as good:
\tiger considers strictly more potential extractions per \eclass than
	a traditional greedy extractor (since it tracks one extraction per
	equivalence class under $\statewalkequiv$, which corresponds to a set of
	extractions per \eclass).
In practice, \tiger produces good extractions (\autoref{sec:eval}).


\subsection{The \eggcc Program Optimizer}

We implement \tiger in \eggcc,
 a prototype \egraph-based optimizer for effectful programs.
\eggcc takes a program written in a simple LLVM-like IR called \bril~\cite{bril},
 converts it to an \rvsdg~\cite{cfgToRvsdg}, and
 optimizes the \rvsdg in \egglog~\cite{egglog}.
We have implemented over a dozen advanced optimizations using declarative rewrite rules in \egglog
	that introduce complex interactions between effectful operations,
  making \eggcc a challenging setting for \effectsafe extraction.
For example, \eggcc performs redundant load elimination, loop invariant code motion,
	redundant branch elimination, dead loop elimination, and conditional invariant code motion.

After applying optimizations in \egglog,
 \eggcc extracts an \effectsafe \rvsdg from the \egraph,
 and converts the optimized \rvsdg
  back to \bril.
During the conversion back to \bril,
  \eggcc applies
  simple cleanup optimizations like eliminating unreachable branches~\cite{cfgToRvsdg}.
In \autoref{sec:eval}, we show \eggcc produces programs with similar quality to LLVM.




\section{Evaluation}
\label{sec:eval}

Our evaluation answers three research questions.
\begin{enumerate}
  \item[\textbf{RQ1}] How does the performance of \tiger compare to ILP-based extraction? (\autoref{subsec:tigerilpextraction})
  \item[\textbf{RQ2}] How does the runtime of extracted programs produced by \tiger compare to those produced by ILP? (\autoref{subsec:outputquality})
  \item[\textbf{RQ3}] Can \tiger enable domain-specific effectful optimizations? (\autoref{sec:Fenwick})
\end{enumerate}

We show that \tiger is \GeometricMeanTigerSpeedupVsGurobiWithTimeouts
  times faster than \gurobi.\footnote{Speedup is computed as the
  geometric mean of the ratios of runtimes on each regionalized \egraph,
  with timeouts counted as 5 minutes.
For the few infeasible points,
  time taken to report infeasibility is used.}
We then show \tiger produces programs of similar quality
  to LLVM and also to ILP-based extraction.
Finally, we present a case study showing the benefit of
  domain-specific optimizations enabled by \tiger.
\textit{The key takeaway is that with \tiger,
  extraction is no longer the bottleneck
  in \eggcc's end-to-end optimization pipeline}.\\

\noindent{\textbf{Setup.}}
We directly compare the performance of \tiger and solver-based
  \effectsafe extraction algorithms by
  implementing multiple extraction techniques in \eggcc.
For ILP extraction, we compare to the state-of-the-art commercial
  solver \gurobi and the open-source solver \cbc.
The ILP encoding of the extraction problem uses the same
  cost model as \tiger.

We run \eggcc on
  the \bril benchmark suite
  and the \polybench benchmark suite~\cite{polybench} translated to \bril.
The \bril benchmark suite contains \NumbrilBenchmarks programs spanning
  geometry, matrix operations, numerical methods, graph algorithms,
  and bitwise operations.
In total, we have \NumBenchmarksAllSuites benchmarks.

For each benchmark, we compare \tiger
  against an ILP-based extraction algorithm
  that guarantees \effectsafe extractions.
All experiments were run on a machine with an
  AMD EPYC 7702P CPU and 512 GB of DDR4 memory.
The many cores and large memory let us run
  multiple benchmarks in parallel,
  making the comparison to ILP-based extraction feasible.
Individual runs of \eggcc and runs of \tiger never exceed
  the limit of 16 GiB of memory.
We test the performance of \eggcc's binaries
  sequentially on a single core.


\paragraph{\textbf{ILP encoding}} For comparison to ILP,
  \eggcc encodes the extraction problem as an ILP instance
  and calls out to a (configurable) external ILP solver
  to find a low-cost solution,
  similar to prior work in Peggy~\cite{peggy}.
Each term $t$ in the \egraph is associated with a binary variable $p_t$,
  indicating whether $t$ is selected in the extraction.
For every candidate term $u$ for the subchild at position $i$ of term $t$,
  a binary variable $s_{t, i, u}$ indicates whether $u$ is selected
  as the $i$-th child of $t$.
Constraints are added to ensure the variables form a single
  term represented by the root \eclass
  and that the extraction is \effectsafe.
For full details of the encoding, see \autoref{app:ilp-details}.

\subsection{\tiger Compared to ILP-based Extraction (\textbf{RQ1})}
\label{subsec:tigerilpextraction}

We show that \tiger solves
  the bottleneck of \effectsafe extraction.
We investigate the effect of
  factors including individual regionalized \egraphs,
  a large program, ILP encoding size, and \statewalk width.
In every case, \tiger outperforms ILP-based extraction,
  which is infeasible in practice.

Across all benchmarks, \eggcc spends an average of
  \overalleggcctigerOzeroOzeroEggccNonExtractionTime seconds outside
  of extraction, with an average of \overalleggcctigerOzeroOzeroEggccExtractionTime seconds spent in its extraction phase\footnote{\eggcc invokes \tiger as a separate process, so the measured extraction phase also includes transferring the serialized \egraph across the process boundary, parsing it, and constructing the regionalized \egraphs. \tiger's own search accounts for under 3\% of this phase on average.}.
\tiger never times out.
For comparison, \llvm at \texttt{-O3} takes an average of
  \overallllvmOthreeOzeroLLVMCompileTime seconds
  to compile the same programs.
That \eggcc is slower is expected: it explores a space of equivalent programs
  rather than applying passes in a fixed order, which costs compile time but is
  what makes optimizations like those of \autoref{sec:Fenwick} expressible.
\begin{wrapfigure}{r}{0.5\textwidth}
    \centering
    \includegraphics[width=\linewidth]{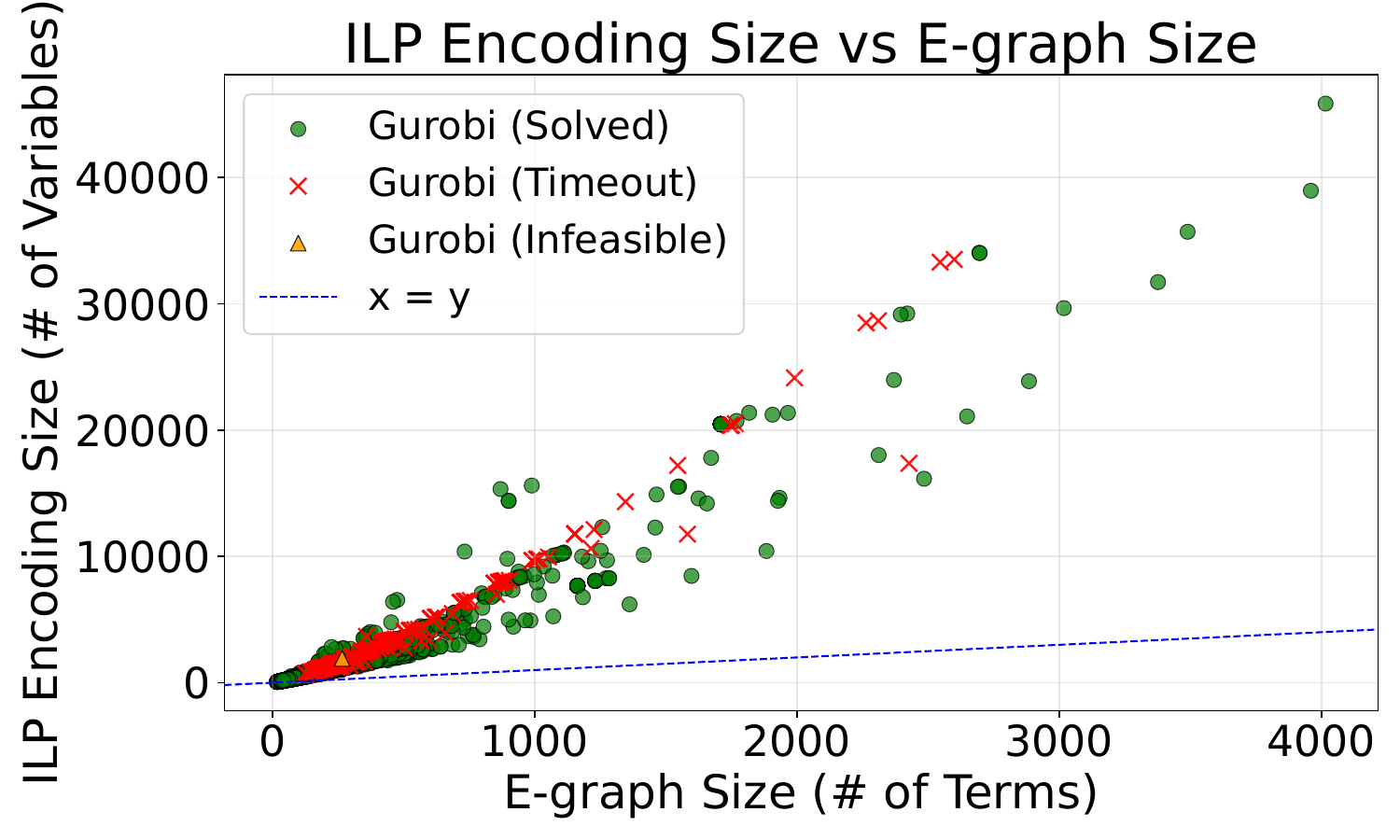}
    \caption{Scatter plot showing ILP encoding size compared to \egraph size.
    Encoding size is measured in number of variables, which also bounds the number of constraints.
    \Egraph size is measured in number of representative terms.}
    \Description{A scatter plot with e-graph size in number of terms on the
      horizontal axis and ILP encoding size in number of variables on the
      vertical axis. The points rise roughly in proportion to e-graph size, with
      substantial spread. Separate markers distinguish the instances Gurobi
      solved, those it timed out on, and those it reported infeasible, and the
      timeouts are scattered across the whole range rather than concentrated at
      the large end. A dashed reference line for x equals y runs far below the
      data.}
    \label{fig:ilp-encoding-size-scatter}
\end{wrapfigure}

On the other hand, ILP-based extraction is intractable
  for many benchmarks.
On the \polybench benchmark suite, both \gurobi and \cbc
  \textit{time out on all \NumpolybenchBenchmarks benchmarks after
  5 minutes} for extracting a single regionalized \egraph.
A longer timeout is not feasible because
  benchmarks in the \polybench suite produce an average of \AvgPolybenchRegionalizedEgraphsPerBenchmark regionalized \egraphs.
In the \bril benchmark suite,
  \gurobi times out on \NumeggcctigerILPGurobiRegionTimeoutBenchmarksBril benchmarks.
On benchmarks where \gurobi does not time out,
  \eggcc spends an average of \AvgEggcctigerOZeroOZeroExtractionTimeSecsOnILPGurobiSolvedBenchmarks seconds running \tiger
  compared to \overalleggcctigerILPOzeroOzeroEggccExtractionTime seconds running ILP-based extraction.
Timeouts in \gurobi's solving do not seem to correlate with
  the size of the \egraph or the size of the resulting ILP encoding.
\autoref{fig:ilp-encoding-size-scatter} shows the relationship
  between \egraph size and ILP encoding size.

\begin{figure}
  \includegraphics[width=0.62\linewidth]{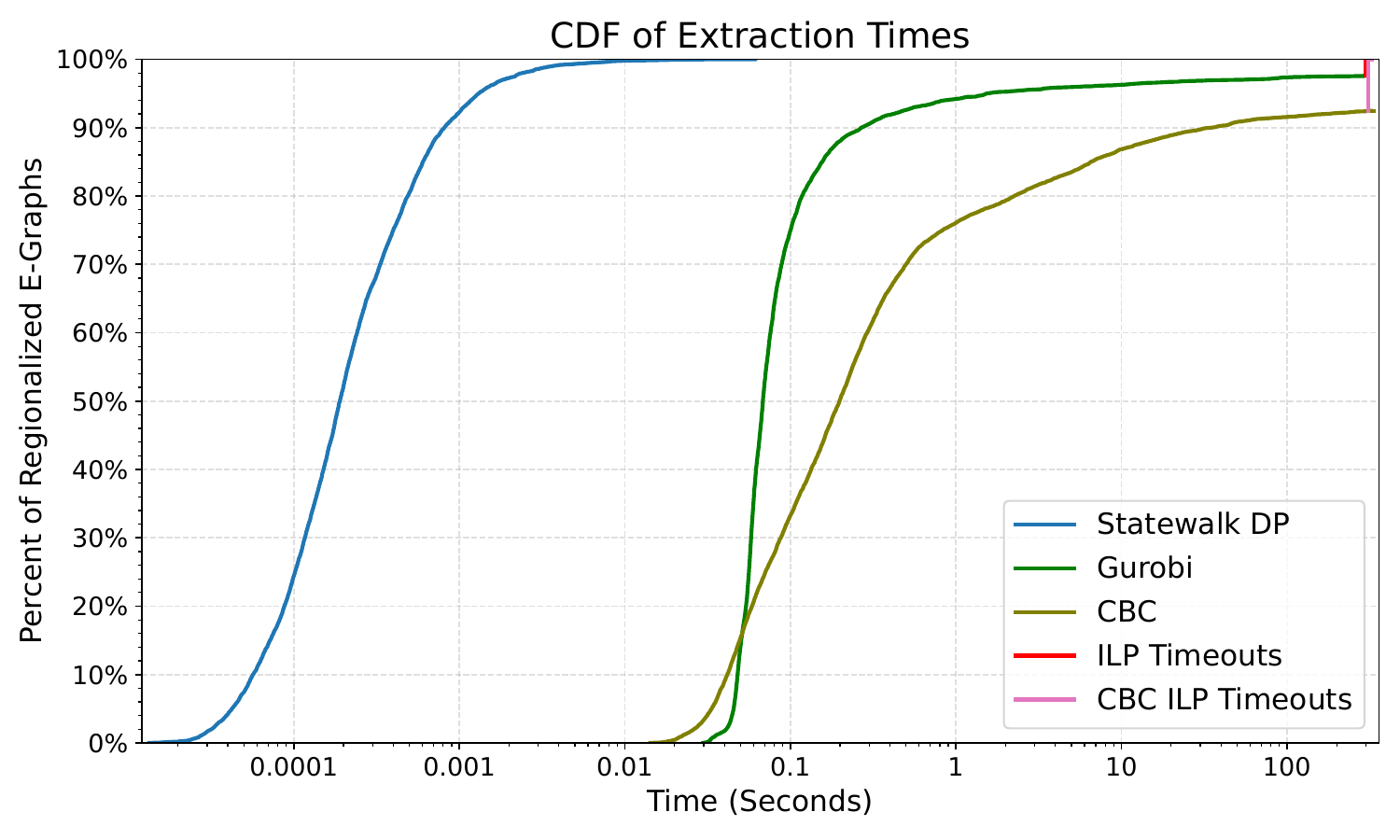}
  \caption{
    Cumulative distribution function of extraction times for \tiger and
    ILP-based extraction, over all regionalized \egraphs across all
    benchmarks (including the raytracer).
    The x-axis is on a \textbf{log scale}.
    The vertical lines at the right are timeouts and infeasible extractions,
    which occur only for the ILP baselines:
    \tiger is complete (\autoref{sec:tiger-proof}) and
    returns an \effectsafe extraction for every regionalized \egraph here.
    \autoref{app:ilp-details} explains why the ILP encoding
    can report infeasibility.
  }
  \Description{A cumulative distribution plot. The horizontal axis is extraction
    time in seconds on a logarithmic scale; the vertical axis is the percent of
    regionalized e-graphs extracted within that time. The statewalk DP curve
    rises steeply between roughly a hundredth of a millisecond and one
    millisecond and reaches 100 percent well before the others begin. The Gurobi
    and CBC curves rise only after about a hundredth of a second and flatten
    below 100 percent, with vertical lines at the far right marking their
    timeouts and infeasible results.}
  \label{fig:extraction-time-cdf}
\end{figure}

The number of variables in the encoding
  is \GeometricMeanILPEncodingVarsPerEgraphSize times the size of the \egraph on average (geometric mean),
  with a maximum of \MaxILPEncodingVarsPerEgraphSize times.
This shows that despite a reasonable ILP
  encoding scaling with problem size,
  ILP-based extraction remains unreliable.

\paragraph{\textbf{Region-level analysis}}
We perform a finer-grained comparison of \tiger and ILP-based extraction
  by measuring their runtimes on all \NumRegionalizedEgraphs
  regionalized \egraphs produced by \eggcc
  across all benchmarks.
ILP-based extraction takes an average of \AvgILPGurobiRegionExtractTimeSecs
  seconds per regionalized \egraph,
  and times out on \NumILPGurobiRegionTimeouts after
  5 minutes.
\tiger extraction takes an average of \AvgTigerLiveOnSatelliteOnRegionExtractTimeSecs
  seconds per regionalized \egraph
  and spends a maximum of \MaxTigerLiveOnSatelliteOnRegionExtractTimeSecs seconds on any single regionalized \egraph.
\autoref{fig:extraction-time-cdf} shows the distribution of extraction times for
  \tiger and ILP-based extraction
  on all the regionalized \egraphs.
A point $(t, p)$ means a fraction $p$ of extractions
  finished within time $t$.
\tiger completes $92\%$ in under a millisecond and all of them within
  \MaxTigerLiveOnSatelliteOnRegionExtractTimeSecs seconds;
  at $0.1$ seconds \gurobi has completed $75\%$ and \cbc $33\%$.
The ILP baselines exhaust the 5-minute timeout
  on a tail of difficult extractions.

The experiments in this section run with the same
  cost model for \tiger and ILP-based extraction.
The ILP-based extraction finds an optimal solution (modulo the incompleteness of the encoding).
However, even when disregarding cost completely,
  we find that \gurobi still times out on
  \NumeggcctigerILPNOMINRegionTimeoutBenchmarks benchmarks and remains
  a bottleneck, spending an average of \overalleggcctigerILPNOMINOzeroOzeroEggccExtractionTime seconds
  running ILP-based extraction across all benchmarks.
This suggests that the bottleneck of ILP-based extraction is not optimizing
  cost, but rather finding any \effectsafe extraction at all.

\begin{wrapfigure}{r}{0.4\textwidth}
  \centering
  \includegraphics[width=\linewidth]{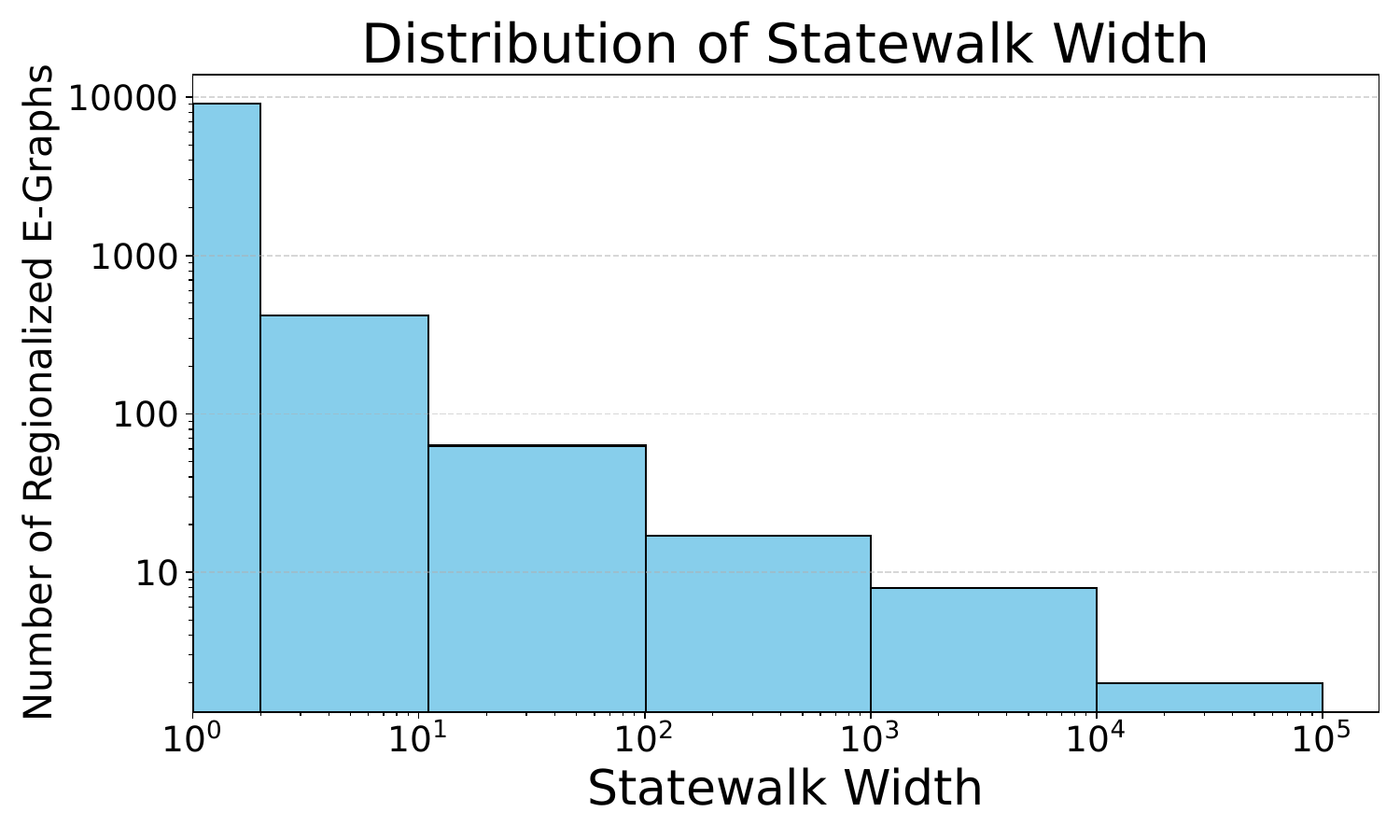}
  \caption{
 The distribution of statewalk width in the \bril benchmark suite and the \polybench benchmark suite.
 \PercentRegionsStatewalkWidthUnderTwo of regions have a \statewalk width of 1,
  and \PercentRegionsStatewalkWidthUnderSix have a \statewalk width of 5 or less.
  }
  \Description{A histogram of statewalk width, with statewalk width on the
    horizontal axis and the number of regionalized e-graphs on the vertical axis,
    both logarithmic. The leftmost bar, at width one, is orders of magnitude
    taller than any other, and the bars fall away steeply as width increases,
    with a short bar at the far right for the largest widths observed.}
  \label{fig:statewalk-width}
\end{wrapfigure}

\paragraph{\textbf{Case study on a large program}}
We also run \eggcc on a raytracer compiled from
  821 lines of code written in a subset of Rust to 2503 lines of \bril code.
\eggcc spends \raytraceOveralleggcctigerOzeroOzeroEggccNonExtractionTime
  seconds performing equality saturation on the raytracer
  and \raytraceOveralleggcctigerOzeroOzeroEggccExtractionTime seconds
  in its extraction phase.
Of that phase, \tiger's own search accounts for
  only $0.598$ seconds,
  summed over all \NumRaytraceRegionalizedEgraphs regionalized \egraphs
  that \eggcc produces while optimizing the raytracer;
  the remainder is the serialization and process overhead noted above.
The raytracer produces large regionalized \egraphs,
  totaling 158,593 terms after pruning (\autoref{sec:tiger-optimizations}),
  with a maximum of 4,016 terms in a single regionalized \egraph.
Despite the large size, \tiger still takes a maximum of
  \MaxRaytraceTigerExtractionTimeSecs seconds
  to extract from any single regionalized \egraph.
ILP-based extraction, in contrast,
  times out on \NumRaytraceILPRegionalizedEgraphTimeouts
  of the \NumRaytraceRegionalizedEgraphs regionalized \egraphs after 5 minutes.

\paragraph{\textbf{\tiger and ILP Extraction Time Scaling}}


\begin{figure}
  \centering
  \includegraphics[height=0.232\linewidth]{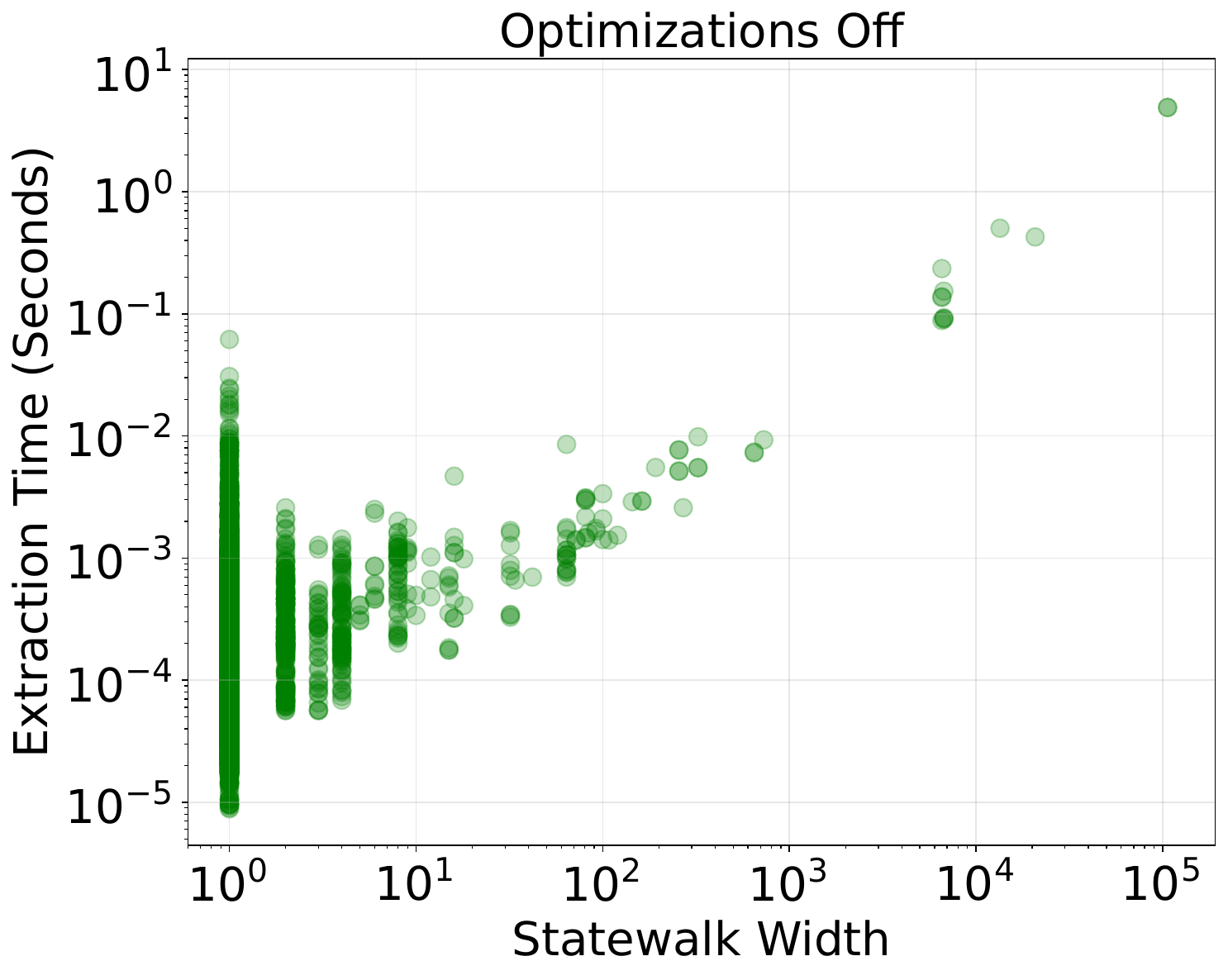}\hfill
  \includegraphics[height=0.232\linewidth]{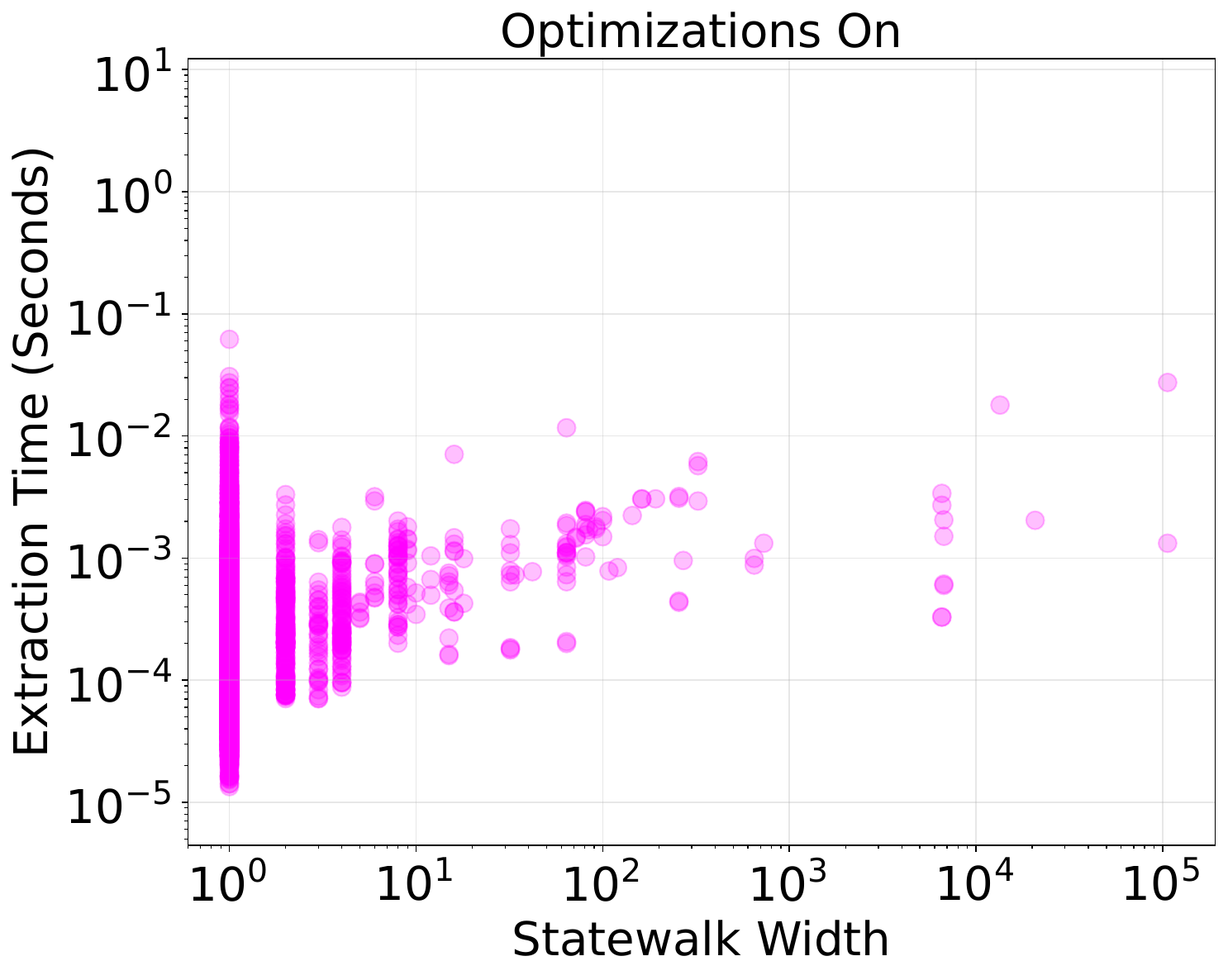}\hfill
  \includegraphics[height=0.232\linewidth]{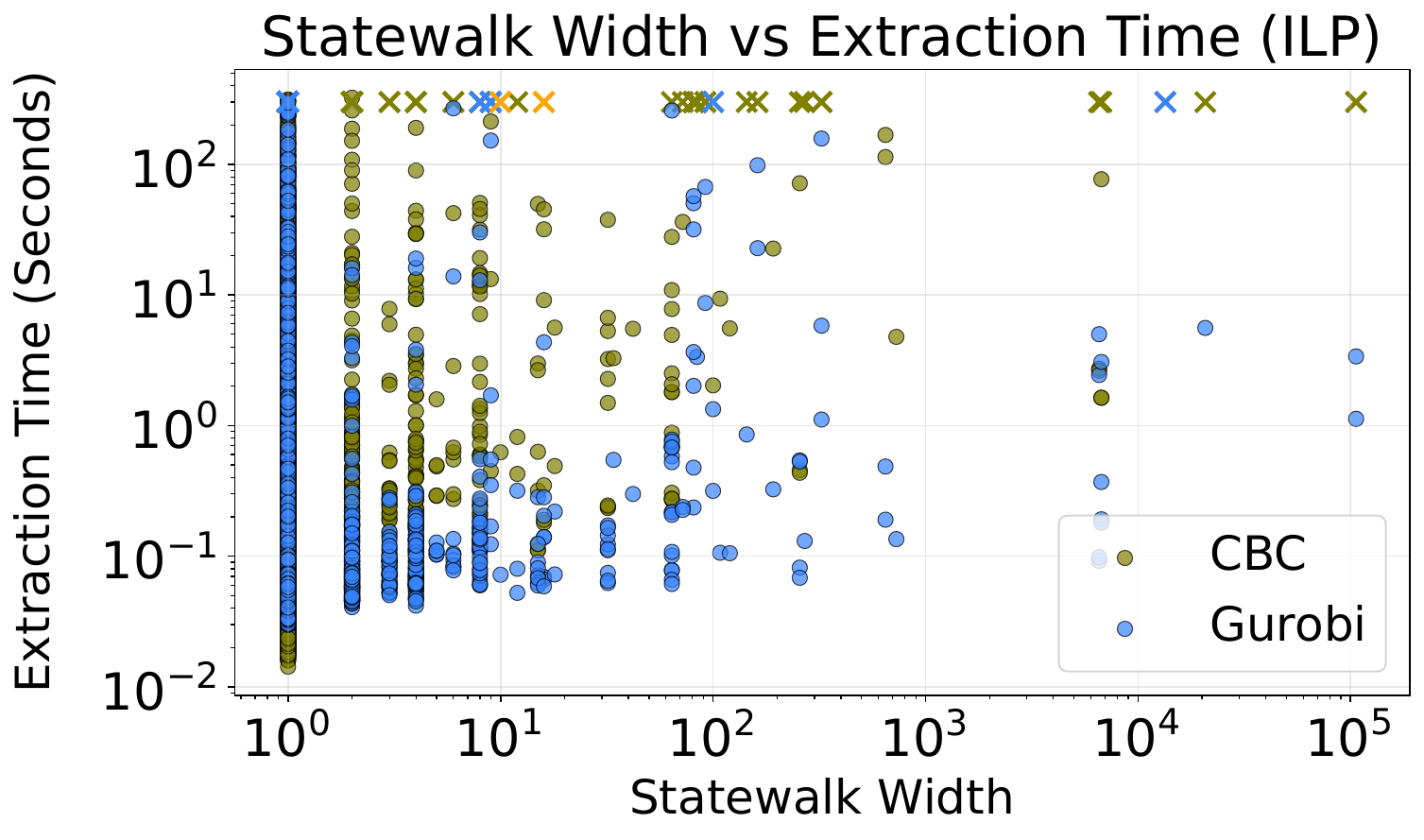}

  \caption{
    Extraction time against the \statewalkwidth of the regionalized \egraph,
    for \tiger without its optimizations (left),
    \tiger with them (middle), and
    ILP-based extraction (right).
    \tiger's optimizations are liveness analysis and AC mitigation.
    \textbf{All axes are logarithmic}; on linear axes almost every point
    would land on top of another.
    The two \tiger panels share axis limits, so they are directly comparable.
    Without optimizations, extraction time grows with \statewalkwidth,
    reaching $4.9$ seconds at the largest widths; with them it stays below
    \MaxTigerLiveOnSatelliteOnRegionExtractTimeSecs seconds throughout,
    leaving the upper right empty.
    ILP-based extraction shows no such trend but times out
    (crosses) at every \statewalkwidth.
    The dense column at the left of each panel is the
    \PercentRegionsStatewalkWidthUnderTwo of regions with \statewalkwidth 1.
  }
  \Description{Three scatter plots in a row, all with statewalk width on the
    horizontal axis and extraction time on the vertical axis, both logarithmic.
    The left panel, statewalk DP with its optimizations off, shows points
    trending upward from left to right, reaching several seconds at the largest
    widths. The middle panel, with the optimizations on, shares the same axes
    but its points stay in a flat band along the bottom, leaving the upper right
    of the panel empty. The right panel, ILP-based extraction, shows no upward
    trend but has crosses marking timeouts spread across every statewalk width.
    All three panels have a dense column of points at statewalk width one.}
  \label{fig:statewalk-runtime}
\end{figure}

We study how the extraction time of \tiger and ILP-based extraction
  changes with respect to \statewalkwidth.
\autoref{fig:statewalk-width} shows the empirical measurements of statewalk width in the \egraph
 produced by \eggcc on the \bril benchmark suite and \polybench benchmark suite.
\PercentRegionsStatewalkWidthUnderTwo of regions have a \statewalk width of 1,
  while \PercentRegionsStatewalkWidthUnderSix have a \statewalk width of 5 or less.
Two outliers have a \statewalkwidth of \MaxStatewalkWidthAllBenchmarks.
These outliers come from the \polybench ``heat-3d'' benchmark,
  which computes heat diffusion in a large grid.
The statewalk width is large due to a series of memory accesses
  with no data dependencies between them, enabling associativity
  and commutativity between them.
Our Associativity-Commutativity (AC) mitigation optimization
  reduces this factor significantly, making the maximum
  size of the dynamic programming table for a single \eclass
  a manageable \StatewalkWidthHeatThreeDLiveOnSatelliteOn, despite
  the colossal statewalk width.\tighten

\autoref{fig:statewalk-runtime} shows how the extraction time of \tiger and ILP-based extraction
  scales with respect to \statewalk width.
The extraction time of ILP-based extraction does not seem to correlate with
  \statewalk width, while \tiger's extraction time without optimizations
  does.
With \tiger's optimizations enabled,
  it scales well
  even in the presence of large \statewalk widths.

\subsection{Output Quality: Runtime of Extracted Program (\textbf{RQ2})}
\label{subsec:outputquality}

A head-to-head comparison
  between \tiger and ILP-based extracted programs shows that
  \tiger's outputs are on par with ILP (when ILP is feasible).
We also show \eggcc's \egraphs are non-trivial,
  rich enough to produce significant speedups over the baseline,
  even compared to LLVM's optimizations.\tighten

\begin{figure}
  \centering
  \begin{minipage}[t]{0.72\linewidth}
    \begin{adjustbox}{width=\linewidth}
      \includegraphics{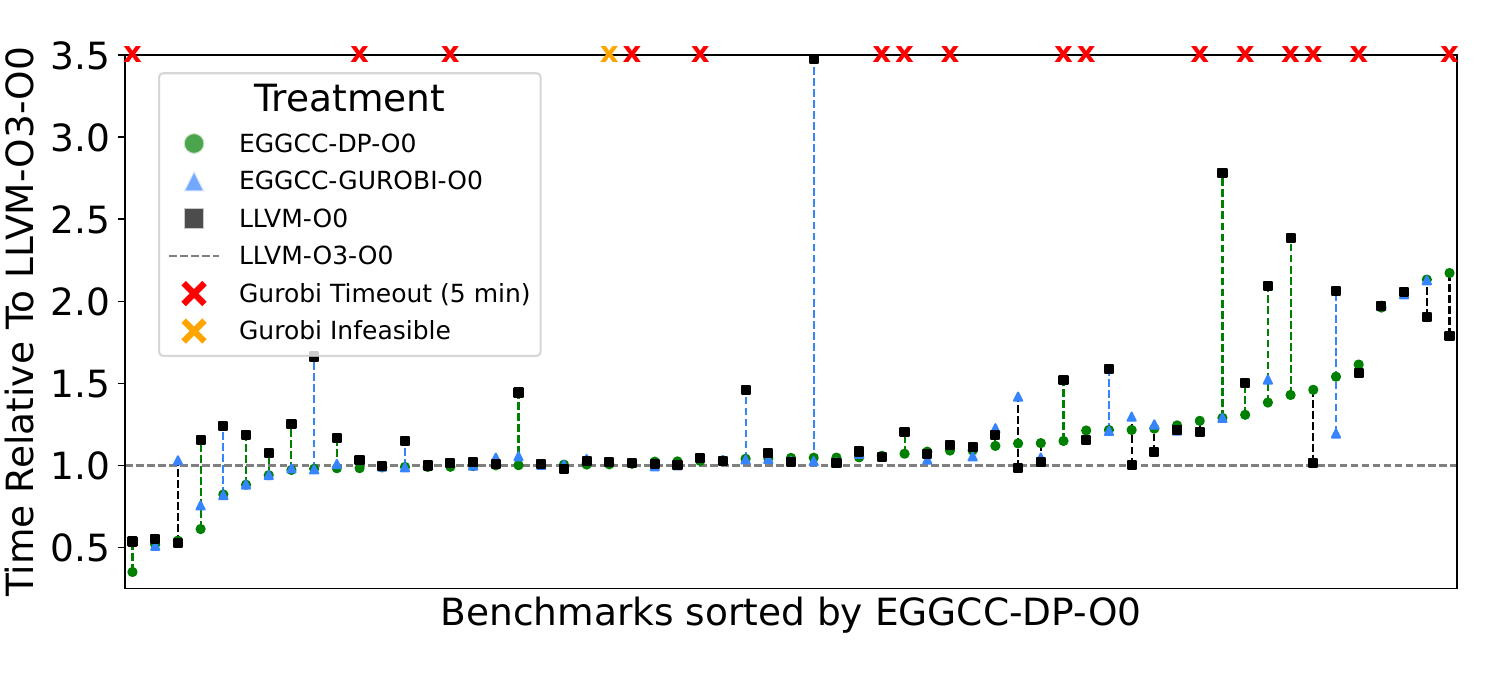}
    \end{adjustbox}
  \end{minipage}
  \begin{minipage}[t]{0.165\linewidth}
    \begin{adjustbox}{width=\linewidth}
      \includegraphics{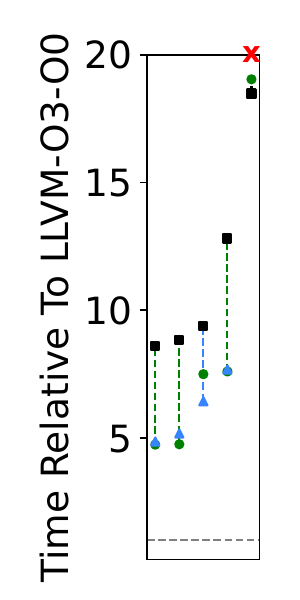}
  \end{adjustbox}
  \end{minipage}

  \caption{Comparison of \eggcc to \llvm on the \bril benchmark suite.
  Lower numbers are better.
  The y-axis shows the ratio between the average runtime and
  the average runtime of the baseline (LLVM-O3-O0). The grey line at 1.0 indicates the baseline.
  The vertical lines indicate the difference between the worst and best treatment for each benchmark.
  The graph is split into two parts due to 5 outlier benchmarks for which LLVM-O3-O0 gave large speedups.}
  \Description{A chart with one vertical interval per benchmark. The horizontal
    axis lists the Bril benchmarks sorted by the eggcc with statewalk DP result;
    the vertical axis is runtime relative to the LLVM-O3-O0 baseline, drawn as a
    grey line at 1.0. Each interval spans the worst to the best treatment for
    that benchmark, with separate markers for eggcc with statewalk DP, eggcc
    with Gurobi, and LLVM-O0, plus crosses where Gurobi timed out or was
    infeasible. Most benchmarks cluster near the baseline. The chart is split
    into two panels, the narrow right-hand one covering five outliers whose
    vertical axis extends much further.}
  \label{fig:eggcc-vs-llvm-bril}
\end{figure}


To measure the quality of \tiger's extracted programs,
  we compare four treatments across the \bril and \polybench benchmarks:
\begin{itemize}[leftmargin=1em]
  \item \texttt{LLVM-O0}: run \texttt{clang-o0} on the input program.
  \item \texttt{LLVM-O3-O0}: runs \texttt{llvm-o3} and \texttt{clang-o0} on the input program.
  \item \texttt{\eggccT-DP-O0}: runs \texttt{llvm-o0} and \texttt{clang-o0} on the output of \eggcc using \tiger.
  \item \texttt{\eggccT-GUROBI-O0}: runs \texttt{llvm-o0} and \texttt{clang-o0} on \eggcc output using ILP extraction with \gurobi.
\end{itemize}

To run \llvm, we convert \bril programs to
  \llvm by using the ``Brillvm'' tool from the developers of Bril.
In the case of \texttt{LLVM-O3-O0}, we run LLVM with
  optimization level O3 and generate
  code with optimization level O0, disabling
  target-specific optimizations.
Disabling target-specific optimizations
  ensures fair comparison to \eggcc,
  which is target-agnostic.
We measure the runtime of the resulting binaries by
  using x86's \texttt{rdtsc} instruction
  to count CPU cycles taken to execute the program.
Each resulting binary is run 200 times
  and we report the average runtime.
\autoref{app:absolute-runtimes} gives the absolute runtimes
  behind the normalized numbers reported here,
  along with their standard deviations.\tighten

\begin{wrapfigure}{r}{0.5\textwidth}
    \centering
    \includegraphics[width=0.5\textwidth]{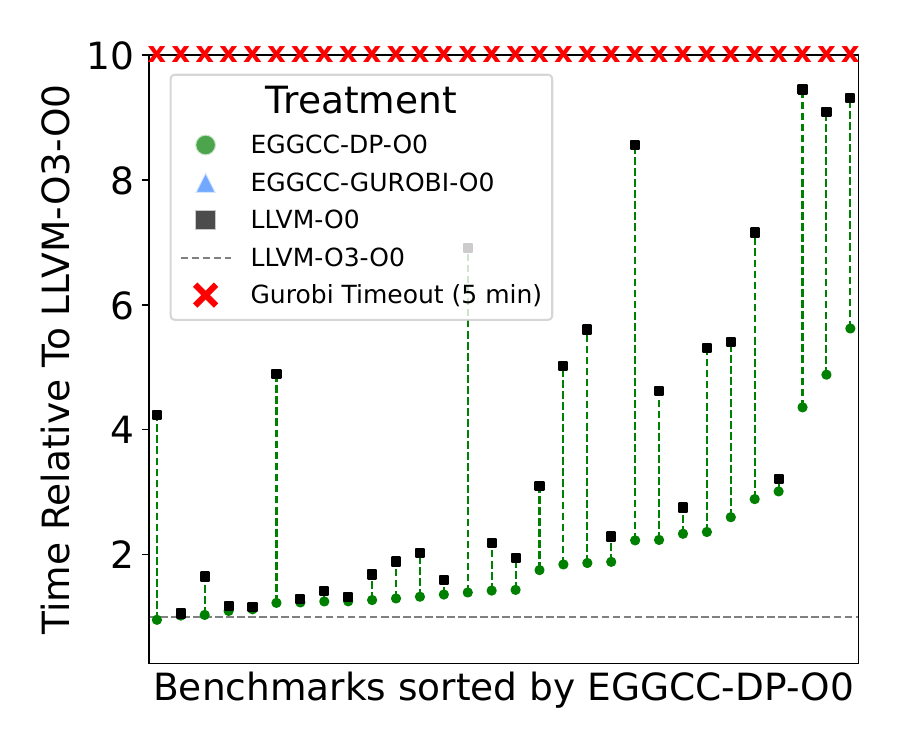}
    \caption{Comparison of \eggcc to \llvm on the \polybench benchmark suite.
      Lower numbers are better.}
    \Description{A chart with one vertical interval per benchmark. The horizontal
      axis lists the PolyBench benchmarks sorted by the eggcc with statewalk DP
      result; the vertical axis is runtime relative to the LLVM-O3-O0 baseline,
      drawn as a dashed line at 1.0. Markers distinguish eggcc with statewalk DP,
      eggcc with Gurobi, and LLVM-O0. Most benchmarks sit between one and four
      times the baseline, rising toward the right. A solid row of crosses runs
      along the very top of the chart, marking that Gurobi timed out on every
      benchmark in the suite.}
    \label{fig:eggcc-vs-llvm-polybench}
\end{wrapfigure}

Brillvm uses the \texttt{alloca} instruction to
  allocate registers,
  which is a common way to compile imperative programs to
  \llvm~\cite{llvm-tutorial-section-7}.
For all treatments, we therefore enable \llvm's \texttt{sroa} pass
  which scalarizes memory allocations~\cite{sroa}.
We also enable register allocation related optimizations for all treatments
   using the flags \texttt{-regalloc=greedy} and \texttt{-optimize-regalloc}
   since \eggcc does not do machine-level optimizations.

\autoref{fig:eggcc-vs-llvm-bril} and \autoref{fig:eggcc-vs-llvm-polybench}
  compare the performance of the resulting binaries
  from \eggcc with different extraction algorithms
  to those from \llvm.
We show that \eggcc with \tiger
  performs similarly to \eggcc with ILP-based extraction.
On the \polybench benchmark suite,
  normalized to \texttt{LLVM-O0},
  \eggcc achieves a geometric mean time across all benchmarks
   of \polybenchOveralleggcctigerOzeroOzeroNormalizedMean,
  compared to \polybenchOverallllvmOthreeOzeroNormalizedMean for \texttt{LLVM-O3-O0}.
On the \bril benchmark suite, \eggcc achieves a mean normalized time of
  \brilOveralleggcctigerOzeroOzeroNormalizedMean
  compared to \brilOverallllvmOthreeOzeroNormalizedMean for
  \texttt{LLVM-O3-O0}.
We also tried LLVM with target-specific optimization on,
 which achieves
 \polybenchOverallllvmOthreeOthreeNormalizedMean
 on \polybench and \brilOverallllvmOthreeOthreeNormalizedMean
 on \bril.
This matches our expectation since
 \eggcc is a prototype compiler and misses many
  optimizations present in \llvm.
Spot-checking the generated assembly, the largest gaps are
  scalar promotion of memory, which leaves redundant loads and stores,
  along with generalized loop unrolling and stack coloring;
  none of these is a fundamental obstacle.\tighten

\subsection{Case Study: Hacker's Delight (RQ3)}

\label{sec:Fenwick}

To showcase the impact of
  performing domain-specific effectful optimizations
  in \egraphs, made possible by \tiger,
  we perform a case study
	using an optimization taken from 
	Hacker's Delight~\cite{hackersdelight},
    a collection of efficient implementation tricks
    for low-level, bit-manipulating arithmetic.

\begin{wrapfigure}{R}{0.32\textwidth}
  \centering
\begin{lstlisting}[language=Rust,numbers=none,basicstyle=\footnotesize\ttfamily,aboveskip=0pt,belowskip=0pt]
lb = 1
while n % (lb * 2) == 0 {
  lb = lb * 2
}
return lb
\end{lstlisting}
	\rule[0pt]{\linewidth}{0.4pt}
\begin{lstlisting}[language=Rust,numbers=none,basicstyle=\footnotesize\ttfamily,aboveskip=0pt,belowskip=0pt]
return n & (-n)
\end{lstlisting}
  \Description{Two short code listings rather than a picture, separated by a
    horizontal rule. Above the rule, the naive lowbit implementation: set lb to
    1, then while n is divisible by lb times 2, double lb, and return lb. Below
    the rule, the one-line equivalent: return n bitwise-and negative n.}
\end{wrapfigure}

The Hacker's Delight trick we implement
  is a one-liner for the \texttt{lowbit} function.
For a positive integer $n$,
	\texttt{lowbit(n)} returns the least significant $1$-bit of $n$.
This function is used in bit-manipulation
  and data structures like Fenwick trees~\cite{fenwicktree},
	which are used to support operations on
	statistical frequency tables.
We show the na\"ive implementation and the one-liner\footnote{
Why this works:
Modern machines use $2$'s complement for binary representation of signed integers.
$-n$ in $2$'s complement
  effectively flips all the bits of $n$
	except for the least significant $1$ bit \texttt{and} the trailing zeros.
The bitwise \texttt{and} instruction then picks out the only bit
  that is $1$ in both numbers,
  which is the lowest $1$ bit of $n$.
} on the right.

Neither \llvm nor GCC performs this optimization.
Adding optimizations like this
	to a traditional compiler
	is extremely delicate,
  requiring deep knowledge
  of existing passes
	to mitigate phase-ordering problems
	and careful pattern matching to perform the optimization reliably.
Our rules sidestep both difficulties by running alongside \eggcc's other rules:
	the \egraph already holds the syntactic variants that a hand-written
	pattern would have to anticipate.

\begin{wrapfigure}[17]{r}{0.34\textwidth}
  \centering
  \includegraphics[width=\linewidth]{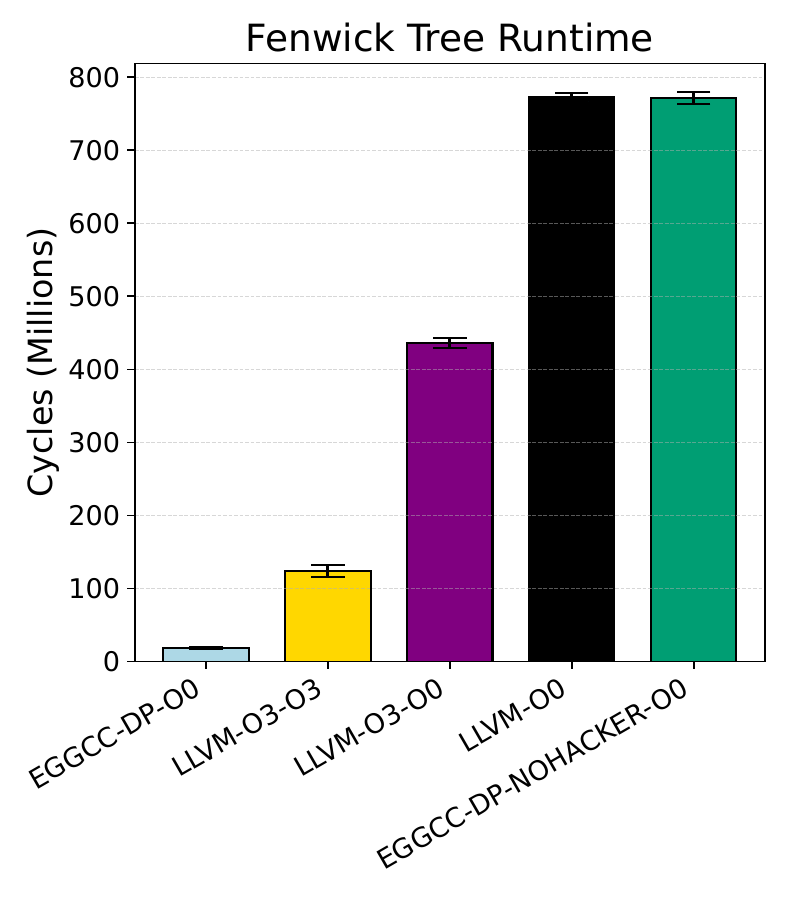}
  \caption{Fenwick tree runtime in CPU cycles. Lower is better.}
  \Description{A bar chart of five treatments, with CPU cycles in millions on the
    vertical axis. The leftmost bar, eggcc with the Hacker's Delight rules, is a
    small fraction of the height of every other bar. The remaining four rise from
    a low yellow bar for clang at O3 to three tall bars of similar height for
    LLVM-O3-O0, LLVM-O0, and eggcc without the Hacker's Delight rules.}
  \label{fig:fenwick}
\end{wrapfigure}
\paragraph{Implementation}
We implemented the \texttt{lowbit} optimization as three
  additional \egglog rules, $36$ lines excluding comments,
	without modifying any existing \eggcc rules.
The first rule marks predicates that test whether a value is even.
The second uses it to identify loops that iterate
  over the trailing zeros of a variable $n$.
The third detects
  an output of such a loop equivalent to \texttt{lowbit(n)},
  marking it equivalent to the optimized implementation.
\autoref{app:lowbit} gives all three.
These rules cooperate seamlessly with
  the existing analyses and optimizations of \eggcc,
  including effectful optimizations.
As such, the rules were easy to develop.

To showcase the power of this optimization,
  we implemented the Fenwick tree data structure in Rust,
	whose core operations involve
	calling the na\"ive implementation
	in a loop.
The program performs $2n$ operations
  on a tree of size $n$, with $n=2 \times 10^5$.

\paragraph{Results}
The runtime performance of the Fenwick tree implementation
  under different optimization treatments is shown in
  \autoref{fig:fenwick}.
  \begin{itemize}[leftmargin=2em]
    \item \texttt{\eggccT-DP-O0} (leftmost, light blue) includes the 
      Hacker's Delight rules alongside the rest of \eggcc's optimizations.
      Applying the Hacker's Delight optimization to the Fenwick tree
      eliminates the entire hot loop, yielding
      a \FenwickMeanCycleSpeedupEggcctigerWITHCTXOZeroOZeroVsllvmOThreeOThree times
      speedup compared to \texttt{LLVM-O3-O3}
      and a \FenwickMeanCycleSpeedupEggcctigerWITHCTXOZeroOZeroVsllvmOThreeOZero times
      speedup compared to \texttt{LLVM-O3-O0}.
    \item \texttt{LLVM-O3-O3} (second, yellow) runs \texttt{clang-o3}.
      This treatment replaces divisions and multiplications with bitwise shifts
      but could not further optimize the while loop.
    \item \texttt{LLVM-O3-O0} and \texttt{LLVM-O0} (middle, purple and black) are as described in \autoref{subsec:outputquality}.
    \item \texttt{\eggccT-DP-NOHACKER-O0} (rightmost, green) runs \eggcc without the Hacker's Delight rules.
  \end{itemize}
This comparison illustrates 
  the potential of applying domain-specific optimizations to effectful programs
	enabled by \eqsat and \tiger.


\subsection{Demonstrating the Effectful \Egraph Extraction Bottleneck in \peggy}
\autoref{fig:intro-egraph-extraction-bottleneck} in \autoref{sec:intro}
  shows the results of running \peggy and \eggcc with ILP-based extraction
  to
  demonstrate that \effectsafe extraction is a bottleneck for \egraph-based
  program optimizers.
Here, we describe the experimental setup for \peggy.
We ran \peggy on (1) a subset of 11 SpecJVM programs,
  which Peggy was originally evaluated on, and
 (2) 30 programs in PolyBench,
 which we manually translated to Java.
By default, \peggy optimizes methods independently and does not perform inlining,
 while \eggcc does.
To derive benchmarks with larger method bodies, we inlined methods in \peggy and ran
 \peggy on both the original programs and the inlined programs.
In total, we optimized 281 methods using \peggy.
 We use Java bytecode instruction count as the
  measure of program length for \peggy.
We measure the percentage of time spent in \egraph extraction, with a timeout of 10 minutes.
\peggy timed out on 19 of the 281 methods,
 and extraction became a bottleneck quickly when the programs got larger.
The results are shown in \autoref{fig:intro-egraph-extraction-bottleneck} (Left).

\section{Related Work}
\label{sec:related}

\paragraph{Effect-safe Extraction in Peggy}

In this paper, we compare our approach to Peggy~\cite{peggy}, which 
  uses a succinct dataflow-based IR called Program Expression
  Graphs (PEG) for program representation.
In Peggy, translating the extracted PEG 
 back to a CFG (for code generation)
 can fail and is difficult to implement, as documented in a separate 37-page technical
 report~\citep{peg-cfg}.
Peggy did not study the effect-safe extraction problem
  theoretically.
Due to the incompleteness of its ILP encoding,
  it cannot guarantee the optimality of its extraction,
	even when ILP succeeds.
Despite ILP being NP-complete,
  the encoding does not imply that effect-safe extraction is in NP.

\paragraph{Pure Extraction in Practice}
The pure extraction problem
  is commonly handled by algorithms
  that produce best-effort results \cite{smoothe, eboost,extraction-gym}.
While ILP solvers can theoretically produce optimal outputs,
  they scale poorly
  and restrict the cost model to those that
  can be expressed as systems of linear equations.
The only practical optimal pure extraction algorithm
  is based on treewidth~\cite{goharshady, glenn-treewidth},
  but it only works on sparse \egraphs
  and struggles in the presence of cycles.

A host of related work sidesteps the problem
  of effect-safe extraction using pure extraction.
DialEgg optimizes MLIR with \egraphs~\cite{mlir-egg}. DialEgg does not directly address problems with effects,
  but suggests that users take care in
  the presence of effects.
Cranelift is a production-grade JIT compiler that uses an \egraph-like data structure for optimization~\cite{aegraph}. To deal with effects, Cranelift restricts optimizations to pure terms, relying on a complex skeleton structure to preserve the order of effectful operations.
Similarly, the \egraph-based Julia program optimizer uses a CFG skeleton structure to preserve effectful operations~\cite{eqsat-julia}.
HEC~\cite{hec} is a framework for equivalence checking using \egraphs that is able to
  handle memory and control-flow transformations, but since it performs
  equivalence checking, not optimization, the tool does not need to support effect-safe extraction.\tighten


\section{Conclusion and Future Work}
\label{sec:conclusions}

Effect-safe extraction has been a
  practical bottleneck for effectful \egraph optimizers
  for more than a decade~\cite{peggy}.
This paper addresses that challenge with \tiger,
  a tractable effect-aware extraction algorithm
  parameterized by \statewalkwidth.
Our results show that \tiger removes
  extraction as a bottleneck while producing
  high-quality \effectsafe programs.
Beyond extraction,
  \statewalkwidth offers a new lens on
  effectful interactions between terms in an \egraph,
  exposing structure that
  may be useful for analyses, verification, or
  effect-aware rewrite strategies.

More broadly,
  eliminating ILP from effectful extraction
  enables \eqsat in
  compiler and superoptimizer settings.
Many existing IRs
  mix pure computation with effectful operations.
Examples include
  bytecode formats as in WebAssembly~\cite{wasm},
  region-based pipelines in MLIR~\cite{mlir},
  SSA-based forms in LLVM IR~\cite{llvm} or Swift SIL~\cite{swift}, and
  last-stage IRs like OCaml's Cmm~\cite{ocamlcmm}.
\tiger
  makes it feasible to start exploring
  effectful e-graph optimization in these settings.

In future work,
  we are excited about 
  combining statewalk width with other
  parameters like treewidth
  to tackle the optimal effect-safe extraction
  problem for cost-critical settings like superoptimizers.
Formalizing the observations of \autoref{sec:complexity} about which rules
  increase \statewalkwidth may also provide
  stronger theoretical bounds for real programs.
A further direction is to make extraction aware of distinct categories of
  effect, so that independent effects may commute.
A notion of separation in the style of separation logic would let the state
  split into pieces that evolve independently,
  generalizing \statewalkwidth and \tiger
  to \statewalks that split and merge.


\section*{Data-Availability Statement}
The source code of \eggcc and \tiger,
  together with all benchmarks and scripts used to produce the results in
  \autoref{sec:eval},
  is archived on Zenodo~\cite{eggcc-artifact} at
  \url{https://doi.org/10.5281/zenodo.21479729}.
Development continues at \url{https://github.com/egraphs-good/eggcc}.

\begin{acks}
We thank Glenn Sun for porting the \polybench benchmarks,
  Kevin Yan for infrastructure work including pretty printing,
  and Ryan Berger for work on our nightly benchmarking infrastructure.
We thank Patrick LaFontaine for the \texttt{rs2bril} compiler
  and Adrian Sampson for creating and maintaining \bril.
We also thank the OOPSLA reviewers, whose suggestions improved both the formal
  development and the evaluation.

This material is based upon work supported by the
  \grantsponsor{GS100000001}{National Science Foundation}{https://doi.org/10.13039/100000001}
  under Grant No.~\grantnum{GS100000001}{2232339} and
  \grantnum{GS100000001}{2312195},
  and by an National Science Foundation Graduate Research Fellowship Program
  under Grant No.~\grantnum{GS100000001}{DGE-2140004}.
Any opinions, findings, and conclusions or recommendations expressed in this
  material are those of the authors and do not necessarily reflect the views of
  the National Science Foundation.
\end{acks}

\bibliographystyle{ACM-Reference-Format}
\bibliography{references}

\label{mainend}

\newpage
\appendix
\section{Formal Definitions for \autoref{sec:formalizing-rvsdgs}}
\label{app:definitions}

This appendix makes precise some auxiliary definitions used in
\autoref{sec:formalizing-rvsdgs}, referring throughout to the
syntax of \autoref{fig:syntax} and the big-step semantics of
\autoref{fig:semantics}.

\subsection*{Children and Subterms}

\begin{definition}[Children and subterms]
  \label{def:children}
  The \emph{children} of a term $t$, written $\children{t}$,
  are its immediate child terms, specifically:
  \[
    \children{t} \;=\;
    \begin{cases}
      \emptyset
        & t = \val \;\text{or}\; t = \argt, \\[4pt]
      \{e\}
        & t = e[n], \\[4pt]
      \{e_1,\ldots,e_n\}
        & t = \texttt{f}(e_1,\ldots,e_n) \;\text{or}\; t = \texttt{F}(e_1,\ldots,e_n), \\[4pt]
      \{e_0,\ldots,e_{k-1}\}
        & t = (e_0,\ldots,e_{k-1}).
    \end{cases}
  \]
  The \emph{subterms} of a term $t$, written $\subterms{t}$, are $t$
  itself together with all terms reachable by following the children
  relation. Formally,
  \(
    \subterms{t}
    \;=\;
    \{t\}
    \;\cup\;
    \bigcup_{e\,\in\,\children{t}} \subterms{e}.
  \)
\end{definition}

\subsection*{Type System}

\begin{definition}[Types]
  \label{def:types}
  We define the set of types $\tau$ as follows
  \[
    \tau \;\;::=\;\; \Phi \;\mid\; \Sigma \;\mid\; \Phi\times \cdots \times \Phi \;\mid\; \Phi\ \times \cdots \times \Phi\ \times \Sigma
  \]
\end{definition}
We say $\tau$ is an \emph{effectful type} if $\tau$ has the form $\Sigma$ or $\Phi \times \cdots \times \Phi \times \Sigma$.
We say $\tau$ is a \emph{pure type} if $\tau$ has the form $\Phi$ or $\Phi \times \cdots \times \Phi$.
With a slight abuse of notation, we also use types to denote the sets of values of those types, e.g., all base values belong to $\Phi$.

Let us assume that each $n$-ary operator has a type of the form
 $\tau_1 \times \cdots \times \tau_n \to \tau'$,
 and that its semantics follows that signature,
 mapping values of types $\tau_1,\ldots,\tau_n$ to a value of type $\tau'$.
For a pure operator $\texttt{f}$, every $\tau_i$ and $\tau'$ is a pure type.
For an effectful operator $\texttt{F}$, $\tau'$ is an \effectful type and
 exactly one argument is stateful:
 $\tau_n$ is \effectful and $\tau_1,\ldots,\tau_{n-1}$ are pure.
The typing judgement $\tau_a \vdash t : \tau$ is defined by the following rules, where $\tau_a$ is an effectful type of the program argument $\argt$.

\begin{equation*}
\begin{array}{lll}
  \inference[]{}
    {\tau_a \vdash v : \Phi}
  &\inference[]{}
    {\tau_a \vdash \sigma : \Sigma}
  \inference[]{}
    {\tau_a \vdash \argt : \tau_a}
\end{array}
\end{equation*}
\begin{equation*}
\begin{array}{l}
  \inference[$\mathcal{F}\in\{\texttt{f}, \texttt{F}\}$]{
      \tau_a \vdash e_i : \tau_i \ \text{for } i=1,\ldots,n \andalso
      \textit{type}(\mathcal{F}) = \tau_1\times \cdots \times \tau_n \to \tau'
    }
    {\tau_a \vdash \mathcal{F}(e_1,\ldots,e_n) : \tau'}
\end{array}
\end{equation*}
\begin{equation*}
\begin{array}{ll}
  \inference[]{
      \tau_a \vdash e_i : \tau_i \quad \text{for } i = 0,\ldots,k-1
    }
    {\tau_a \vdash (e_0,\ldots,e_{k-1}) : \tau_0\times \ldots\times \tau_{k-1}}
  &
  \inference[]{
      \tau_a \vdash e : \tau_0\times \ldots \times \tau_{n-1} \andalso
      0 \le k < n
    }
    {\tau_a \vdash e[k] : \tau_k}
\end{array}
\end{equation*}

\begin{lemma}[Well-typed terms are well-formed]
  \label{lem:well-typed-well-formed}
  If $\tau_a \vdash t : \tau$,
  for all value $\argval$ of type $\tau$,
  there exists a unique value $v$ of type $\tau$ such that $\bigstep{\argval}{t}{v}$.
\end{lemma}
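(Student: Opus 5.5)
The plan is to argue by structural induction on the derivation of $\tau_a \vdash t : \tau$. I read the quantifier over $\argval$ as ranging over values of the argument type $\tau_a$, since $\argval$ is what $\argt$ is bound to. For each typing rule I will prove a combined statement: for every $\argval$ of type $\tau_a$, some $v$ of type $\tau$ satisfies $\bigstep{\argval}{t}{v}$ (existence), and any two such $v$ are equal (uniqueness).

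For existence and typing, I go through the rules in order.
\begin{itemize}
\item \emph{Values and states.} For a value $v : \Phi$ or a state $\sigma : \Sigma$, the value rule of \autoref{fig:semantics} gives $\bigstep{\argval}{v}{v}$, and the result has the stated type.
\item \emph{$\argt$.} The $\argt$ rule gives $\bigstep{\argval}{\argt}{\argval}$, and $\argval$ has type $\tau_a$ by assumption.
\item \emph{Operator application.} For $\mathcal{F}(e_1,\ldots,e_n)$, the induction hypothesis gives values $v_i$ of type $\tau_i$. By the standing assumption that each operator's semantics follows its signature, $\llbracket \mathcal{F} \rrbracket(v_1,\ldots,v_n)$ is defined and has type $\tau'$. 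The pure and effectful operator rules then apply.
\item \emph{Tuples.} For $(e_0,\ldots,e_{k-1})$, I combine the inductively obtained $v_i : \tau_i$ into $(v_0,\ldots,v_{k-1})$, which has type $\tau_0\times\cdots\times\tau_{k-1}$.
\item \emph{Indexing.} For $e[k]$, the hypothesis gives $v : \tau_0\times\cdots\times\tau_{n-1}$ with $0 \le k < n$. A value of a product type is an $n$-tuple of components of the matching types, so the indexing rule yields $v_k : \tau_k$.
\end{itemize}

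For uniqueness, I use inversion on the big-step judgement. Each syntactic form is the conclusion of exactly one evaluation rule, so any derivation of $\bigstep{\argval}{t}{v'}$ must end in that rule. Its premises are evaluations of the same immediate subterms, and the induction hypothesis makes their results coincide with the $v_i$ above. The conclusion is then a function of those results: the tuple itself, its $k$-th projection, or $\llbracket\mathcal{F}\rrbracket(v_1,\ldots,v_n)$, which is single-valued because operators are functions. Hence $v' = v$.

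The main obstacle is bookkeeping about the type grammar rather than any single hard step. The tuple typing rule as written could assemble a product whose components are themselves tuples, or with $\Sigma$ in a non-final position. Neither of those is among the types of \autoref{def:types}, and the lemma's notion of ``value of type $\tau$'' only makes sense for $\tau$ in that grammar. I would handle this by first noting that a derivation only counts as well-typed when its conclusion type lies in the grammar. I would then check that the induction hypothesis is only invoked at such types: the components of any product in the grammar are $\Phi$ or $\Sigma$.

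I also need two auxiliary facts. First, values of the product type $\tau_0\times\cdots\times\tau_{n-1}$ are exactly $n$-tuples of component values. Second, the indexing rule's side condition matches the arity recorded in the type. With these in place every case is routine.
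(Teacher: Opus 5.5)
Your proposal is correct and follows the same route as the paper, whose proof is only the sketch ``by structural induction on the typing rule''; your case analysis, your reading of the quantifier as ranging over values of type $\tau_a$ (clearly the intended meaning), and your inversion argument for uniqueness fill in that sketch. One small imprecision: a tuple of base values written as an expression matches both the value rule and the tuple rule, so inversion does not single out exactly one rule there, but both rules return the same tuple and your uniqueness conclusion is unaffected.
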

\begin{proof}[Proof sketch]
  By structural induction on the typing rule.
\end{proof}

\subsection*{Effectful and Pure Terms}

\begin{definition}[Effectful term]
  \label{def:effectful}
  Given program argument type $\tau_a$, a term $t$ is \emph{\effectful} if it has an effectful type,
  i.e., $\tau_a \vdash t : \tau$ for effectful type $\tau$.
  Otherwise, the term is \emph{pure}.
\end{definition}

Since our discussion is always parameterized by the input program argument type
 $\tau_a$, we will omit this $\tau_a$ when referring to effectful and pure terms.

\begin{corollary}
  \label{lem:effectful-stateful}
  An effectful term always evaluates to a stateful value,
  and a pure term always evaluates to a pure value.
\end{corollary}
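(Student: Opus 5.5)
The corollary follows from \autoref{def:effectful} together with \autoref{lem:well-typed-well-formed}. I would argue by cases on the type of the term and then invoke type soundness of evaluation.

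Fix the program argument type $\tau_a$, which is an effectful type, and a term $t$ with $\tau_a \vdash t : \tau$. I would first split on $\tau$.
\begin{itemize}
  \item If $t$ is \effectful, then by \autoref{def:effectful} its type $\tau$ is effectful, i.e.\ $\Sigma$ or $\Phi\times\cdots\times\Phi\times\Sigma$.
  \item Otherwise $t$ is pure. Every type in the grammar of \autoref{def:types} is either effectful or pure, so in this case $\tau$ is a pure type, $\Phi$ or $\Phi\times\cdots\times\Phi$.
\end{itemize}

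Next I apply \autoref{lem:well-typed-well-formed}, instantiating it with an argument value $\argval$ of the program argument type $\tau_a$. It gives a unique $v$ of type $\tau$ with $\bigstep{\argval}{t}{v}$. Reading the statement of that lemma, I take "for all value $\argval$ of type $\tau$" to mean $\argval$ of type $\tau_a$. The remaining step is to identify values of each type with the syntactic value classes of \autoref{fig:syntax}, using the convention that types denote sets of values:
\begin{itemize}
  \item values of type $\Sigma$ or $\Phi\times\cdots\times\Sigma$ are exactly the stateful values $V^e$;
  \item values of type $\Phi$ or $\Phi\times\cdots\times\Phi$ are exactly the pure values $V^p$.
\end{itemize}
So an effectful term evaluates to a stateful value and a pure term to a pure value.

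The only place any care is needed is that the corollary speaks of what a term "always" evaluates to. The lemma should therefore be used in its stronger form: \emph{every} derivation $\bigstep{\argval}{t}{v}$ yields a $v$ of type $\tau$, not merely that some such $v$ exists. Uniqueness of $v$ in the lemma gives exactly this. If one distrusts the sketched lemma, I would reprove preservation directly by induction on the evaluation derivation of \autoref{fig:semantics}, inverting the typing rules in each case:
\begin{itemize}
  \item \argt evaluates to $\argval$ of type $\tau_a$.
  \item Operator applications evaluate to results matching the declared signatures of \texttt{f} and \texttt{F}.
  \item Tuples evaluate componentwise.
  \item Indexing $e[k]$ picks out component $k$ of a tuple of type $\tau_0\times\cdots\times\tau_{n-1}$, so it has type $\tau_k$.
\end{itemize}
The indexing case is the only mildly delicate one, and it is immediate because tuples do not nest.
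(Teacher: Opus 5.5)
Your proposal is correct and is the argument the paper intends. The paper gives no separate proof; it treats the statement as an immediate consequence of \Cref{def:effectful} (effectful terms have effectful types) and \Cref{lem:well-typed-well-formed} (a well-typed term evaluates to a value of its type), with values of effectful and pure types being exactly $V^e$ and $V^p$. Your extra steps only make explicit what the paper leaves implicit: reading the lemma's argument as ranging over $\tau_a$, restricting to well-typed terms, and securing ``always'' via determinism or a direct preservation induction.
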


\begin{lemma}[At most one effectful child]
  \label{lem:one-effectful-child}
  Every well-typed term has at most one \effectful child.
\end{lemma}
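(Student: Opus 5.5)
The plan is a case analysis on the last typing rule used to derive $\tau_a \vdash t : \tau$. \Cref{def:children} gives the children for each syntactic form. Each typing rule fixes the types of those children, so it suffices to show that at most one of these types is effectful. Effectful means of the form $\Sigma$ or $\Phi\times\cdots\times\Phi\times\Sigma$. I will rely on the fact that the typing judgement is syntax-directed, so each well-typed child has a type determined by the rule applied at $t$.

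The easy cases come first. If $t$ is a value $v$, a state $\sigma$, or $\argt$, then $\children{t}=\emptyset$ and there is nothing to show. If $t=e[n]$, then $\children{t}=\{e\}$ is a singleton, so at most one child of any kind exists. If $t=\texttt{f}(e_1,\ldots,e_n)$ with $\texttt{f}$ pure, the operator's signature makes every $\tau_i$ pure. By the application rule, each child $e_i$ has the pure type $\tau_i$, so no child is effectful. If $t=\texttt{F}(e_1,\ldots,e_n)$ with $\texttt{F}$ effectful, the signature assumption says $\tau_1,\ldots,\tau_{n-1}$ are pure and only $\tau_n$ is effectful. Hence only $e_n$ can be an effectful child.

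The tuple case $t=(e_0,\ldots,e_{k-1})$ is the one real obstacle, because its typing rule places no restriction on the children's types. My first move is to use the type grammar of \Cref{def:types}. The rule assigns $t$ the product type $\tau_0\times\cdots\times\tau_{k-1}$, and this must itself be a type. Tuples do not nest, so every admissible product has the form $\Phi\times\cdots\times\Phi$ or $\Phi\times\cdots\times\Phi\times\Sigma$. Therefore each component $\tau_i$ is a single $\Phi$ or $\Sigma$, and at most the last one is $\Sigma$. So at most one child, namely $e_{k-1}$, has an effectful type.

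I expect one subtlety here. A component of effectful type $\Phi\times\cdots\times\Sigma$ is not a single factor, so I will argue that such a component would make $t$ a nested tuple type, which the grammar forbids. If a reader takes the product flatly instead, the count of effectful components is still at most one, because $\Sigma$ may appear only in the final position. Combining the cases proves the lemma. It then also justifies the earlier claim that a \statewalk follows a unique effectful chain.
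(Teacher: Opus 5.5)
Your proposal is correct and follows the paper's proof essentially step for step. Both do a case split on the term form. Both discharge the operator cases by the signature assumption. Both handle the tuple case by using the type grammar to force $\Sigma$ to appear only as the last component. Your explicit appeal to syntax-directedness adds a bit of care the paper's sketch leaves implicit: it guarantees that each child's type, and hence whether it is effectful, is the one fixed by the rule applied at $t$.
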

\begin{proof}[Proof sketch]
  By cases on the term.
  Values and $\argt$ have no children.
  An application of a pure operator has only pure arguments,
    and an application of an effectful operator has exactly one stateful
    argument, both by the typing assumption above.
  An indexing expression $e[n]$ has a single child.
  Finally, a tuple $(e_0,\ldots,e_{k-1})$ has type
    $\tau_0\times\cdots\times\tau_{k-1}$, which is \effectful only when it has the
    form $\Phi\times\cdots\times\Phi\times\Sigma$,
    so at most its last component is \effectful.
\end{proof}

\Cref{lem:one-effectful-child} is what makes the \statewalk of an
  \effectsafe term unique (\autoref{sec:effect-safety}): following the child
  relation from an \effectful term never has to choose between two \effectful
  children.

\subsection*{Effectful E-Classes}

\begin{lemma}[]
  \label{lem:eclass-homo}
  In an \egraph $\mathcal{G}$ that respects the equivalence $\approxeq$ induced by the semantics, every \eclass either
  contains only effectful terms or contains no effectful term.
\end{lemma}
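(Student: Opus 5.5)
The plan is to argue by contradiction, using the fact that $\approxeq_{\mathcal{G}} \subseteq \approxeq$ and that an \eclass is an equivalence class of $\approxeq_{\mathcal{G}}$. Suppose some \eclass $C$ contains an effectful term $t_1$ and a pure term $t_2$. Every term in $C$ is related to every other by $\approxeq_{\mathcal{G}}$, so $t_1 \approxeq_{\mathcal{G}} t_2$. Because $\mathcal{G}$ respects the semantics, we also get $t_1 \approxeq t_2$. Unfolding the definition, this means that for every argument value $\argval$ and every $v$, we have $\bigstep{\argval}{t_1}{v} \iff \bigstep{\argval}{t_2}{v}$.

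The next step is to produce a single witness run on which the two sides disagree. Fix the program argument type $\tau_a$ (effectful by convention), and choose any value $\argval$ of that type; one exists because $\Sigma$ and $\Phi$ are nonempty.

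By \Cref{lem:well-typed-well-formed}, $t_1$ evaluates under $\argval$ to some value $v_1$. By \Cref{lem:effectful-stateful}, $v_1$ is stateful, i.e., it lies in $V^e$. From $t_1 \approxeq t_2$ we get $\bigstep{\argval}{t_2}{v_1}$. But \Cref{lem:effectful-stateful} says that $t_2$, being pure, can only evaluate to a value in $V^p$. The syntax makes $V^p$ and $V^e$ disjoint: a stateful value is a state or a tuple whose last component is a state, while pure values contain only base values. Hence $v_1 \notin V^p$, which is a contradiction.

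The main obstacle I expect is one of hygiene rather than substance. The well-formedness lemma is stated for arguments "of type $\tau$", which I read as the argument type $\tau_a$. I also need $\Sigma$ and $\Phi$ to be disjoint sets of values, which I would take as implicit in the grammar. Finally, \Cref{lem:effectful-stateful} has to apply to every evaluation of a pure term, not just some particular one; determinism from \Cref{lem:well-typed-well-formed} gives exactly that. I would note these assumptions explicitly and otherwise keep the argument as above.
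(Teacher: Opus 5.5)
Your proof is correct and matches the paper's argument: both get $t_1 \approxeq t_2$ from membership in a single e-class, evaluate one side under a fixed argument of type $\tau_a$ using the well-formedness lemma, carry that evaluation across $\approxeq$ to the other side, and reach a contradiction from uniqueness of evaluation and the disjointness of pure and stateful values. The only difference is that you evaluate the effectful term first, whereas the paper starts from the pure term; this does not matter.
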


\begin{proof}[Proof sketch]
  Suppose, for contradiction, that some \eclass $C$ contains both an
  effectful term $t_e$ and a pure term $t_p$.
  Since both belong to $C$,
  $t_e \approxeq_{\mathcal{G}} t_p$, so
  $t_e \approxeq t_p$ as the \egraph respects $\approxeq$. Therefore,
  $\forall a,v.\; \bigstep{a}{t_e}{v} \iff \bigstep{a}{t_p}{v}$.

  Suppose $\tau_a \vdash t_e : \tau_e$ and $\tau_a \vdash t_p : \tau_p$
  for effectful type $\tau_e$ and pure type $\tau_p$, and let $\argval$ be a value of type $\tau_a$,
  by \Cref{lem:well-typed-well-formed}, there exists value $v_p$ of type $\tau_p$
  such that $\bigstep{\argval}{t_p}{v_p}$, so $\bigstep{\argval}{t_e}{v_p}$.
  By the uniqueness implied by \Cref{lem:well-typed-well-formed}, $v_p=v_e$.
  However, pure values and stateful values are disjoint. Contradiction.
\end{proof}

Following from the above lemma, we can classify \eclasses into \effectful \eclasses and pure \eclasses.

\begin{definition}[Effectful \eclass]
  \label{def:effectful-eclass}
  An \eclass $C$ in an \egraph $\mathcal{G}$ is \emph{\effectful}
  if it contains only effectful term.
  It is pure if it contains only pure terms.
\end{definition}

\section{\tiger Correctness}
\label{sec:tiger-appendix}

\begin{lemma}
  \label{lem:ext-map-sound}
  All of the representative terms found by $\textsc{ExtractMap}$ are \effectsafe with respect to the given \statewalk. \par
  For an \egraph $\mathcal{G}$,
  \begin{align*}
    &\forall \sw = \langle e_0, \ldots, e_n \rangle,\quad
    \forall C \in \mathcal{G},\quad
    \forall t, \\
    &\textsc{ExtractMap}(\sw)[C] = t \implies \effectsafe(t, \sw) \land t \in \termsc{C}{\mathcal{G}}.
  \end{align*}
\end{lemma}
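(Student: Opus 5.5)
We argue by induction on the execution of $\textsc{ExtractMap}(\sw)$, maintaining the invariant that at every point, for every pure \eclass $C$ with $E[C] \neq \bot$, we have $\effectsafe(E[C], \sw)$ and $E[C] \in \termsc{C}{\mathcal{G}}$. Entries of $E$ are written only once: the loop skips any $C$ with $E[C]\neq\bot$, and the leaf loop guards its write the same way. So the invariant established at each write persists to the returned map. Equivalently, we induct on the order in which entries are assigned; each new entry is built only from entries assigned strictly earlier and from elements of $\sw$.

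\emph{Base case (leaf assignments).} Here $E[\classof{\ell}{\mathcal{G}}] \gets \ell$ for a pure leaf $\ell \in \mathcal{G}$. Membership $\ell \in \termsc{\classof{\ell}{\mathcal{G}}}{\mathcal{G}}$ holds because $\ell$ is a representative term of its \eclass. For effect-safety, $\ell$ is pure and has no children, so its only subterm is $\ell$ itself. Hence $\ell$ has no effectful subterms, and the condition ``all effectful subterms are in $\sw$'' holds vacuously. The only childless effectful term, $\argt$, is excluded because $\ell$ is pure.

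\emph{Inductive step (fixed-point assignments).} Here $E[\classof{t}{\mathcal{G}}] \gets f(t_1',\ldots,t_n')$ for a pure representative $t = f(t_1,\ldots,t_n)$. Each $t_i'$ falls into one of two cases.
\begin{itemize}
\item If $t_i$ is pure, then $t_i' = E[\classof{t_i}{\mathcal{G}}]$. This entry was set earlier, so by the induction hypothesis $t_i'$ is \effectsafe under $\sw$ and $t_i' \in \termsc{\classof{t_i}{\mathcal{G}}}{\mathcal{G}}$.
\item If $t_i$ is effectful, then $t_i' = e_i' \in \sw$ with $\classof{e_i'}{\mathcal{G}} = \classof{t_i}{\mathcal{G}}$. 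By \Cref{lem:one-effectful-child} there is at most one such $i$.
\end{itemize}

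\emph{Membership.} In both cases $t_i' \approxeq_{\mathcal{G}} t_i$. Also $t \approxeq_{\mathcal{G}} t$ holds because $t$ is represented. By \textsc{(Congruence)}, $f(t_1',\ldots,t_n') \approxeq_{\mathcal{G}} t$, so the new entry lies in $\termsc{\classof{t}{\mathcal{G}}}{\mathcal{G}}$.

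\emph{Effect-safety.} By \autoref{def:children}, the subterms of $t' = f(t_1',\ldots,t_n')$ are $t'$ itself together with $\bigcup_i \subterms{t_i'}$. The term $t'$ is pure, since it lies in a pure \eclass by \Cref{lem:eclass-homo}. For pure $t_i'$, the induction hypothesis places all effectful subterms of $t_i'$ in $\sw$. For the effectful child $e_i' = e_j \in \sw$, the term $e_j$ itself is in $\sw$. Its effectful subterms that are not $e_j$ lie inside some child of $e_j$, and by the fourth clause of \autoref{def:statewalk} they belong to $\langle e_0,\ldots,e_{j-1}\rangle \subseteq \sw$. Hence every effectful subterm of $t'$ is in $\sw$.

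The main obstacle is this last point: showing that a statewalk is closed under taking effectful subterms. Membership of $e_j$ in $\sw$ alone is not enough; we need all effectful subterms of $e_j$ to be in $\sw$. The fourth condition of \autoref{def:statewalk} gives this directly, but only for subterms strictly below $e_j$. So we must treat $e_j$ separately from its proper subterms, as done above. We would state this closure as a small auxiliary fact before the main induction.
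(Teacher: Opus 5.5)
Your proof is correct, and its skeleton matches the paper's: write-once entries, a leaf base case, an inductive step that applies the hypothesis to pure children, membership of the effectful child in $\sw$, and congruence for membership in $\termsc{C}{\mathcal{G}}$. The difference is the induction measure.

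The paper inducts on the depth of the smallest representative of the \eclass $C$. This mirrors the term-depth induction it uses for completeness (\Cref{lem:ext-map-complete}). To apply the hypothesis to the children, the paper asserts that $E[C]$ is always set from that depth-minimal representative, because entries are filled ``in breadth-first order.'' The algorithm as written does not guarantee this. Entries written earlier in a pass of the fixed-point loop are visible later in the same pass. For example, suppose $C$ contains $h(b)$ and $g(k(a))$, and the loop visits $k(a)$, then $g(k(a))$, then $h(b)$. Then $E[C]$ is set from $g(k(a))$, whose child \eclass has a smallest representative of the same depth as that of $C$, so the strict-depth hypothesis does not cover it.

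Your induction on the chronological order of writes needs no such claim. Each written term is assembled only from entries already present and from elements of $\sw$, so this measure is the more robust one for soundness.

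You also make explicit a step the paper elides: $e_j \in \sw$ is enough only because clause four of \autoref{def:statewalk} puts every effectful proper subterm of $e_j$ into $\sw$.

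One small step is implicit in both your proof and the paper's. For a pure child $t_i$, the guard must force $E[\classof{t_i}{\mathcal{G}}] \neq \bot$ rather than being satisfied by its second disjunct. This holds because elements of $\sw$ are effectful, and by \Cref{lem:eclass-homo} a pure \eclass contains no effectful term.
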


\begin{proof}
  First, we establish that once $E[C]$ is set for any \eclass $C$, it is never changed.
  Line 6 and Line 14 are the only lines that assign to $E[C]$ for an \eclass $C$.
  Both are guarded by conditionals ensuring that $E[C]$ was previously unset.

  We proceed by induction on the smallest term in \eclass $C$ by depth\footnote{
    The depth of a leaf term is 1. The depth of a term $f(t_1, \ldots, t_n)$
    is $1 + \text{max}_{i \in 1..n}(\text{depth}(t_i))$
  }.

  In the base case, the smallest term in \eclass $C$ is a leaf term, $f$.
  For any $\sw$, there is some leaf term $f'$ such that $\textsc{ExtractMap}(\sw)[C] = f'$,
    as established by Line 6.
  If there are multiple leaf terms in $C$, they all have the same depth, and $E[C]$
    may be set to any of them, based on the enumeration order of Line 5.
  All leaf terms are \effectsafe with respect to all statewalks:
    if the term is pure, then it contains no effectful subterms, and if it is not pure,
    then it must be Arg, which is the first term in every \statewalk.
  By construction, Line 6 ensures that $f' \in C$.

  Consider an \eclass $C$ whose smallest term is $f(t_1, \ldots, t_n)$.
  If $\textsc{ExtractMap}[\sw](C) = \bot$, the claim holds vacuously.
  Otherwise, $E[C]$ was set via Line 14 of the algorithm on some iteration of the loop.
  On the iteration of the inner loop where $E[C]$ was set,
    it must be the case that $t = f(t_1, \ldots, t_n)$ because $f(t_1, \ldots, t_n)$
    is the smallest term in the \eclass and classes are added to $E$ in breadth-first order.
  The condition on Line 11 must have been true in order for $E[C]$ to be set on Line 14.
  So, for all pure children of $t$, $E[\classof{t_i}{\mathcal{G}}] \neq \bot$ and
    if there is an \effectful child of $t$, it must be present in $\sw$.
  We assume by induction that $E[\classof{t_i}{\mathcal{G}}]$ gives an \effectsafe extraction with $\sw$
    for each pure child.
  Then, the constructed term is \effectsafe with $\sw$
    since its pure children are and its effectful child (if it has one) is in $\sw$.
  Further, the constructed term is in \eclass $C$ since it has the same function symbol
    and all of its children are in the same \eclass as the corresponding child term of $t$.
\end{proof}

\begin{lemma}
  \label{lem:ext-map-complete}
  If an \effectsafe term exists for a given pure \eclass with respect to a \statewalk $\sw$,
    $\textsc{ExtractMap}(\sw)$ will have a mapping for the \eclass to some term. \par
  \begin{align*}
    &\forall\; C \in \mathcal{G}\quad
    \forall t \in \termsc{C}{\mathcal{G}} \quad
    \forall \sw = \langle e_0, \ldots, e_n \rangle, \\
    &\text{pure}(C) \implies \effectsafe(t, \sw) \implies \exists \; t' \in C\; \textsc{ExtractMap}(\sw)[C] = t'.
  \end{align*}
\end{lemma}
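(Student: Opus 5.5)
We argue by strong induction on the structure of $t$, following the proof of \Cref{lem:ext-map-sound} but in the opposite direction. Since $\effectsafe(t,\sw)$ holds, every subterm of $t$ is also effect-safe under $\sw$: the effectful subterms of a subterm are effectful subterms of $t$, so they lie in $\sw$. The claim we prove is slightly stronger than the lemma. For every pure subterm $u$ of $t$, the final map $E$ returned by $\textsc{ExtractMap}(\sw)$ satisfies $E[\classof{u}{\mathcal{G}}] \neq \bot$. By \Cref{lem:ext-map-sound}, any such entry is a term $t' \in \classof{u}{\mathcal{G}}$, so applying the claim to $u=t$ gives the statement.

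\textbf{Base case.} $t$ is a pure leaf, i.e.\ a value or nullary symbol. Its class $C$ contains some representative term $\ell$ congruent to $t$, and $\ell$ is also a pure leaf; the first loop (Line 5--6) then sets $E[C]$ unless it was already set. We need that the class of a represented leaf contains a leaf representative. This follows from how $\termsc{C}{\mathcal{G}}$ is defined: $t$ is represented because some $t' \in C$ has $t' \approxeq_\mathcal{G} t$. In the worst case we fold this into the inductive step below, treating leaves as $0$-ary applications, for which the guard on Line 11 is vacuously true.

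\textbf{Inductive step.} Let $t = f(t_1,\ldots,t_n)$ be pure. Because $\mathcal{G}$ is closed under congruence and its \eclasses are finite sets of representatives, there is a representative $r = f(r_1,\ldots,r_n) \in C$ with $\classof{r_i}{\mathcal{G}} = \classof{t_i}{\mathcal{G}}$ for all $i$. The loop on Line 10 ranges over pure representatives, so $r$ is among them. We check the two kinds of children.
\begin{itemize}
\item For each pure child $t_i$, the induction hypothesis gives $E[\classof{t_i}{\mathcal{G}}] \neq \bot$ at the fixed point.
\item By \Cref{lem:one-effectful-child}, at most one child is effectful. If $t_i$ is effectful, it is an effectful subterm of $t$, so $t_i \in \sw$; taking $e_i' = t_i$ witnesses the existential on Line 11.
\end{itemize}
Now consider the final iteration of the while loop. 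That iteration changes nothing, since the loop exits at a fixed point. When it reaches $r$, the guard on Line 11 holds, so either $E[C]$ was already set or it would be set now, contradicting the fixed point. Hence $E[C] \neq \bot$.

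The main obstacle is making the induction measure precise. The paper's measure, depth of the smallest term in the \eclass, does not fit naturally here, because we start from an arbitrary witness $t$ rather than a smallest term. Using induction on the size of the witness $t$ itself avoids this, together with the ``fixed-point'' argument that a satisfied guard cannot coexist with an unset entry at termination. We must also make sure the loop terminates so that this fixed point exists. It does: each non-final iteration sets at least one previously unset entry of $E$, and entries are never reset (as noted in the proof of \Cref{lem:ext-map-sound}).
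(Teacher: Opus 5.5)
Your proposal is correct and follows essentially the same route as the paper's proof. The paper also inducts on the witness $t$ itself (its depth, not the depth of the smallest term in the \eclass, so the obstacle you anticipate does not arise) and shows that the guard on Line 11 is satisfied for $t$'s \eclass once the \eclasses of its pure children are filled and its effectful child lies in $\sw$. Your version is more careful in two places: the paper treats $t$ itself as the loop variable, whereas you pass explicitly to a representative $r = f(r_1,\ldots,r_n) \in C$ with matching child \eclasses; and where the paper argues the entry is set ``by the next iteration,'' you use termination plus a fixed-point argument.
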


\begin{proof}
  By induction on the depth of term $t$.

  As established in \Cref{lem:ext-map-sound}, once $E[C]$ is set for any \eclass $C$, it is not changed.

  In the base case, $t$ is a leaf term, so $\textsc{ExtractMap}(\sw)[C] = t'$,
    as set by Line 6, where $t'$ is a leaf term in \eclass $C$.

  In the inductive case, $t = f(t_1, \ldots, t_n)$.
  We assume by induction that for each pure child of $t$, there exists $t_i'$ such that
    $t_i$ and $t_i'$ are both members of \eclass $C_i$ and $\textsc{ExtractMap}(\sw)[C_i] = t_i'$.
  That is, at the end of some iteration of the outer loop, $E[C_i] = t_i'$
    for all pure $t_i$.

  By the end of the next iteration of the inner loop
    where $t = f(t_1, \ldots, t_n)$, $E[C]$ will be set.
  If it was already set prior to this iteration, then there must be a term in $C$
    with a smaller depth than $t$ since $E$ is populated in breadth-first order,
    and the result holds by induction.
  Otherwise, the predicate on Line 6 will be true:
    $E[C] = \bot$, since it is not already set;
    $\forall i, \text{pure}(t_i) \implies E[t_i] \neq \bot$, as established previously by induction;
    if $t$ has an \effectful child, then it is in $\sw$ since $t$ is \effectsafe;
    $t$ is pure by construction.
  The constructed term $f(t_1, \ldots, t_n)$, where $E[\classof{t_i}{\mathcal{G}}]$
    has been substituted for each pure child $t_i$,
     is in \eclass $C$
    since it has the same function symbol
    and all its children are in the same \eclass as the corresponding child term of $t$.
\end{proof}

\begin{lemma}
  \label{lem:dp-inv}
  In the DP table of the \textsc{Extract} algorithm, for all \eclasses, $C$, and for all sets of \eclasses, $es$,
  if $DP[C][es] = T$ for some sequence $T = \langle t_0, \ldots, t_n \rangle $,
  then $T$ is a \statewalk rooted at \eclass $C$.
\end{lemma}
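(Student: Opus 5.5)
We prove the invariant by induction on the number of assignments made to $DP$ during the execution of \textsc{Extract}, in the same style as \Cref{lem:ext-map-sound}. First we observe that entries of $DP$ are never overwritten: the only assignments are the initialization $DP[\classof{\Arg}{\mathcal{G}}][e] \gets \langle \Arg \rangle$ and the assignment $DP[\classof{t}{\mathcal{G}}][es'] \gets \sw'$ inside the loop, which is guarded by $DP[\classof{t}{\mathcal{G}}][es'] = \bot$. So it suffices to show that each assigned value is a \statewalk whose last term lies in the indexed \eclass. For the base case, $\langle \Arg \rangle$ is a \statewalk by \Cref{def:statewalk} with $k=0$. The only condition to check is the fourth, which is vacuous since $\Arg$ has no children. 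Its last term lies in $\classof{\Arg}{\mathcal{G}}$.

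For the inductive step, consider an assignment of $\sw' = \langle e_0,\ldots,e_k,t'\rangle$. Here $\langle e_0,\ldots,e_k\rangle = DP[c][es]$ was popped from the worklist. That entry was set earlier, so by the induction hypothesis it is a \statewalk with $\classof{e_k}{\mathcal{G}} = c$. We check the four conditions of \Cref{def:statewalk} for $\sw'$. The first, $e_0 = \Arg$, is inherited. The second holds because $t'$ is \effectful: it has the same head as $t \in \text{eff\_deps}$, which is not pure, and $t' \approxeq_{\mathcal{G}} t$ (shown below), so it lies in an \effectful \eclass by \Cref{lem:eclass-homo}. The third requires $e_k \in \children{t'}$. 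By \Cref{lem:one-effectful-child}, $t$ has at most one \effectful child $t_j$, and by the definition of eff\_deps it has some child in $c$. Since $c$ is \effectful, that child is $t_j$, and the algorithm places $e_k$ at position $j$. We must also rule out the case where $t$ has an \effectful child outside $c$; this is excluded by the uniqueness from \Cref{lem:one-effectful-child}.

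The fourth condition is the main obstacle. For prefix elements it is inherited. For $t'$ itself, every child is either $e_k$ or $M[\classof{t_i}{\mathcal{G}}]$ with $M = \textsc{ExtractMap}(\langle e_0,\ldots,e_k\rangle)$. Each $M[\cdot]$ entry is defined, because the guard requires $\classof{t_i}{\mathcal{G}} \in es$. Here we need the auxiliary invariant that $es = \text{Keys}(\textsc{ExtractMap}(DP[c][es]))$, which we carry through the same induction, since both assignments set the index to exactly those keys. By \Cref{lem:ext-map-sound}, each pure child $t_i'$ is \effectsafe under $\langle e_0,\ldots,e_k\rangle$, so all of its \effectful subterms lie in that prefix. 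The child $e_k$ contributes $e_k$ together with its \effectful subterms. Those subterms lie in the prefix because the prefix is a \statewalk in which each element's children draw their \effectful subterms from earlier elements; a small induction along the walk gives this. Hence all \effectful subterms of children of $t'$ lie in $\langle e_0,\ldots,e_k\rangle$, as the fourth condition requires.

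Finally, $t' \in \termsc{\classof{t}{\mathcal{G}}}{\mathcal{G}}$ follows by congruence, exactly as in \Cref{lem:ext-map-sound}. The term $t'$ has the same head as $t$, each pure child $t_i'$ lies in $\classof{t_i}{\mathcal{G}}$ by \Cref{lem:ext-map-sound}, and $e_k \in c = \classof{t_j}{\mathcal{G}}$. Therefore $\sw'$ is a \statewalk rooted at the \eclass of $t$, which is the index under which it is stored, and this completes the inductive step.
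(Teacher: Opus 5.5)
Your proof is correct, but it uses a different induction from the paper's.

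\textbf{The induction measure.} The paper inducts on the depth of the smallest term of the \eclass $C$. It justifies the step by noting that the \eclass being extended has a smaller smallest term than $t$. You instead induct on the order in which $DP$ entries are assigned. This uses only that an item enters the worklist after its entry is written (Lines 6/8 and 22/23), so the walk read at Line 11 was assigned strictly earlier.

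Your measure is the right one for cyclic \egraphs, where an \eclass can receive new entries from a predecessor whose smallest term is deeper than its own. In \autoref{fig:tiger-worked-example}, $C_1$ contains $\argt$, so it falls under the paper's base case, which covers only the Line-6 entry. Yet $DP[C_1][\{C_2,C_3,C_5\}]$ is written while processing $C_4$, whose smallest term is deeper than $\argt$. The paper's induction hypothesis is therefore not available for that entry, while yours is.

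\textbf{What you make explicit.} You state the invariant $es = \text{Keys}(\textsc{ExtractMap}(DP[c][es]))$. Together with the guard, this is what makes $M$ defined on every pure child. The paper cites \Cref{lem:ext-map-complete} at this point without exhibiting the witness it would need. Completeness is really used only to see that the key set of \textsc{ExtractMap} is determined by its input, since that key set equals $\ES$ of the walk. You also check all four conditions of \Cref{def:statewalk}, including that $t'$ is \effectful and that $e_k$ occupies its unique \effectful position.

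\textbf{Two small simplifications.} First, the key-set invariant needs no induction: it holds by inspection at both assignment sites, and entries are never overwritten. Second, the \effectful subterms of $e_k$ other than $e_k$ itself already lie in $\langle e_0,\ldots,e_{k-1}\rangle$ by the fourth condition applied to $e_k$, so no induction along the walk is needed.
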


\begin{proof}
  First, we establish that once $DP[C][es]$ is set for some \eclass $C$ and set of \eclasses $es$,
    it is not changed.
  Line 4 establishes $DP$ as an empty map.
  Line 6 sets $DP[\classof{\Arg}{\mathcal{G}}][e]$, where $e$ is the
  \extractionset of $\langle \Arg \rangle$.
  The only other line that assigns to $DP$ is Line 22, and the conditional on Line 21
    ensures that the corresponding (\eclass, \extractionset) was previously not set in $DP$.

  We proceed by induction on the smallest term in $C$ by depth.

  In the base case, the smallest term in $C$ is a leaf term.
  It must be \argt since the DP table only stores entries for effectful \eclasses.
  $DP[\classof{\Arg}{\mathcal{G}}][\text{Keys}(\textsc{ExtractMap}(\langle \Arg \rangle))] = \langle \Arg \rangle$,
  as set by Line 6.
  Trivially, $\langle \Arg \rangle$ is a \statewalk rooted at $\classof{\Arg}{\mathcal{G}}$.

  Consider an \eclass $C$ whose smallest term is $f(t_1, \ldots, t_n)$.
  If there is no $es$ such that $DP[C][es]$ is set, then the claim holds vacuously.
  Otherwise, for every $es$ such that $DP[C][es] \neq \bot$, there was some iteration
    of the loop which executed Line 22 to set $DP[C][es]$.
  On the corresponding iteration of the outer loop, let the worklist item being
    processed be $(C_{curr}, es_{curr})$ such that $DP[C_{curr}][es_{curr}] = \sw_{curr}$.
  On the iteration of the inner loop which sets $DP[C][es]$, $t = f(t_1, \ldots, t_n)$
    for some term $t \in C$ such that a term in $C_{curr}$ is a direct child of $t$.
  It must be the case that the smallest term in $C_{curr}$ has a smaller depth than $t$
    since terms cannot be cyclic.
  Then, by induction, $\sw_{curr}$ is a \statewalk rooted at $C_{curr}$
  (meaning that the last term in $\sw_{curr}$ is in $C_{curr}$).
  $t'$ is constructed
    to be a well-formed term in \eclass $C$
   using the mapping from $\textsc{ExtractMap}(\sw_{curr})$
    for all of $t$'s pure children.
  By \Cref{lem:ext-map-sound} and \Cref{lem:ext-map-complete},
    $M$ will contain an \effectsafe term with respect to $\sw_{curr}$
    corresponding to each $t_i$.
  Thus, the sequence of terms constructed by adding $t'$ to the end of $\sw_{curr}$ is
    a valid \statewalk rooted at \eclass $C$.
  This sequence of terms is set to $DP[C][es]$, and the invariant holds.
\end{proof}

\begin{lemma}
  \label{lem:tiger-inv}
  For any \statewalk \xspace $\sw = \langle e_0, \ldots, e_n \rangle$, \par
  $(\classof{e_n}{\mathcal{G}}, \text{Keys}(\textsc{ExtractMap}(\langle e_0, \ldots, e_n \rangle)))$ eventually gets added to the worklist.
\end{lemma}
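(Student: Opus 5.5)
I will prove the statement by induction on the length $n$ of the \statewalk $\sw = \langle e_0, \ldots, e_n \rangle$. The inductive claim is slightly stronger than the lemma: for every \statewalk $\sw$, the pair $(\classof{e_n}{\mathcal{G}}, \text{Keys}(\textsc{ExtractMap}(\sw)))$ is eventually inserted into the worklist. Hence it is eventually popped, and the DP table then holds \emph{some} \statewalk $\sw^\ast$ at that key.

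For the base case $n = 0$, we have $\sw = \langle \Arg \rangle$. Line 8 inserts exactly the pair $(\classof{\Arg}{\mathcal{G}}, e)$ with $e = \text{Keys}(\textsc{ExtractMap}(\langle \Arg \rangle))$, so the claim holds.

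For the inductive step, let $\sw = \langle e_0,\ldots,e_{n-1}, e_n\rangle$. Its prefix $\sw_p = \langle e_0,\ldots,e_{n-1}\rangle$ is itself a \statewalk, since the conditions of \Cref{def:statewalk} are prefix-closed. By induction, the pair $(c, es) = (\classof{e_{n-1}}{\mathcal{G}}, \text{Keys}(\textsc{ExtractMap}(\sw_p)))$ is eventually inserted. Because the worklist only grows by insertions and is drained until empty, the pair is eventually popped. At that moment $DP[c][es] = \sw^\ast = \langle e^\ast_0,\ldots,e^\ast_k\rangle$ for some \statewalk with $\classof{e^\ast_k}{\mathcal{G}} = c$ (by \Cref{lem:dp-inv}). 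Since $e_n$ has $e_{n-1}$ as its unique \effectful child (\Cref{lem:one-effectful-child}), the term $e_n$ is in $\mathcal{G}$'s represented terms. Some representative term $t = f(t_1,\ldots,t_m)$ of $\classof{e_n}{\mathcal{G}}$ has the same head and children in the same \eclasses as $e_n$. So $t$ has a child in class $c$, and $t \in \text{eff\_deps}$. The guard on Line 15 requires each pure child class of $t$ to lie in $es$. This holds because each pure child of $e_n$ is \effectsafe under $\sw_p$ by the last clause of \Cref{def:statewalk}, and \Cref{lem:ext-map-complete} then places its class in $\text{Keys}(\textsc{ExtractMap}(\sw_p)) = es$. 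The algorithm therefore builds $t'$ and $\sw' = \sw^\ast \mathbin{+\!\!+} \langle t'\rangle$, and it either inserts $(\classof{t}{\mathcal{G}}, es')$ or finds that key already set. A key is set only together with an insertion (Lines 6/8 and 22/23), so in either case the pair was inserted.

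The main obstacle is showing that $es' = \text{Keys}(\textsc{ExtractMap}(\sw))$, that is, $\ES(\sw') = \ES(\sw)$, even though $\sw'$ was built from the substitute $\sw^\ast$ rather than from $\sw_p$. I would prove the following generalized form of the two substitutability lemmas. Suppose $\sw_1, \sw_2$ end in the same \eclass and have equal \extractionsets, and $s_1, s_2$ extend them respectively with $s_1 \approxeq_{\mathcal{G}} s_2$. Then $\ES(\sw_1 \mathbin{+\!\!+} s_1) = \ES(\sw_2 \mathbin{+\!\!+} s_2)$. By \Cref{lem:ext-map-sound} and \Cref{lem:ext-map-complete}, Keys of $\textsc{ExtractMap}$ coincides with $\ES$. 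The plan is to take a pure term $u$ that is \effectsafe under $\sw_1 \mathbin{+\!\!+} s_1$ and transform it into a term \effectsafe under $\sw_2 \mathbin{+\!\!+} s_2$, by structural induction on $u$. An \effectful subterm that is an occurrence of $s_1$ is replaced by $s_2$, which is in the same \eclass. Each maximal pure subterm that is \effectsafe under $\sw_1$ alone is replaced by an $\sw_2$-safe equivalent; this uses the Substitutability Within Pure Term lemma. A remaining \effectful subterm $e_i$ of $\sw_1$ is harder: its \eclass need not appear in $\sw_2$. I would avoid this case by observing that any pure subterm containing such an $e_i$ but not $s_1$ is already \effectsafe under $\sw_1$, so it is swallowed by the maximal-pure-subterm replacement. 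Congruence then keeps each rebuilt term in its original \eclass. The argument is symmetric, which gives equality of the sets. Applying this with $\sw_1 = \sw_p$, $\sw_2 = \sw^\ast$, $s_1 = e_n$, and $s_2 = t'$ closes the induction.
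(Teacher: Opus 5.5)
Your proof is correct. Its skeleton is the paper's: induction on the length of $\sw$, the base case from Line~8, popping the prefix's pair, discharging the guard on the pure children via \Cref{lem:ext-map-complete}, and the case split at Line~21. You also cover keys set on Line~6, which the paper omits.

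The difference is what happens at the popped entry. The paper's inductive step silently takes the walk stored at $DP[c][es]$ to be $\langle e_0,\ldots,e_n\rangle$ itself, and the inner-loop term to be $e_{n+1}$ literally. It then reads off that the inserted key is $\text{Keys}(\textsc{ExtractMap}(\langle e_0,\ldots,e_{n+1}\rangle))$. The algorithm, however, inserts $\text{Keys}(\textsc{ExtractMap}(\sw^\ast \mathbin{+\!\!+} \langle t'\rangle))$, where $\sw^\ast$ is whichever walk was stored first and $t$ is a representative term of the class. The two substitutability lemmas of \autoref{sec:statewalkequiv} only give the existence of a substitute extension, not equality of the extended extractable sets.

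Your generalized lemma says that $\ES(\sw_1 \mathbin{+\!\!+} s_1)$ is determined by $\ES(\sw_1)$ and the \eclass of $s_1$. That is exactly the missing invariance, and your structural-induction proof of it is sound. So your version establishes the statement as written, while the paper's is shorter because it appeals to substitutability informally.

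A few steps should be made explicit.

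First, representedness of $e_n$ does not follow from \Cref{lem:one-effectful-child}. It is presupposed by the statement, since otherwise $\classof{e_n}{\mathcal{G}}$ is undefined. The Subterm property then makes every $e_i$ and all their children represented.

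Second, your ``swallowing'' step relies on a structural fact you should state. Every element of $\sw_1$ is a proper subterm of $s_1$, so the only effectful subterm of $u$ that contains $s_1$ is $s_1$ itself. Also, a pure node can have an effectful child only as the operand of an index expression, and that operand is then its sole child. Together these give two consequences:
\begin{itemize}
\item every node of $u$ that properly contains $s_1$ is pure;
\item every $s_1$-free child of such a node is pure and effect-safe under $\sw_1$.
\end{itemize}
Hence no element of $\sw_1$ is ever left exposed outside a maximal pure $s_1$-free subterm.

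Third, applying the lemma with $\sw_2 = \sw^\ast$ uses $\text{Keys}(\textsc{ExtractMap}(DP[c][es])) = es$. This holds because Lines~5--6 and 20--22 always store a walk under its own computed key.
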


\begin{proof}
  By induction on $\sw$.

  In the base case, $\sw = \langle \Arg \rangle$, and $(\classof{\Arg}{\mathcal{G}},\text{Keys}(\textsc{ExtractMap}(\langle \Arg \rangle)))$
    gets added to the worklist on Line 8.

  In the inductive case, $\sw = \langle e_0, \ldots, e_n, e_{n+1} \rangle$,
    and we assume by induction that $(\classof{e_n}{\mathcal{G}}, es_n)$
    was added to the worklist, where $es_n$ is the \extractionset of $\langle e_0, \ldots, e_n \rangle$.
  On the iteration of the outer loop where
  $(\classof{e_n}{\mathcal{G}}, \text{Keys}(\textsc{ExtractMap}(\langle e_0, \ldots, e_n \rangle)))$
    is popped from the worklist, $e_{n+1}$ will be in the \C{dependents} set since $e_n$ is a child of $e_{n+1}$.
  On the iteration of the inner loop where $t = e_{n+1}$, the conditional on Line 14 will be \C{false}:
  (1) $e_{n+1}$ is not pure since it is in the statewalk;
  and
  (2) all of its pure children are \effectsafe with respect to $\langle e_0, \ldots, e_n \rangle$
  (since $\sw$ is a well-formed \statewalk),
  so by \Cref{lem:ext-map-complete}, $\text{Keys}\textsc{ExtractMap}(\langle e_0, \ldots, e_n \rangle)$
  contains a term for each pure child.

  If the conditional on Line 21 is true, then $(\classof{e_{n+1}}{\mathcal{G}}, \text{Keys}(\textsc{ExtractMap}(\langle e_0, \ldots, e_{n+1} \rangle)))$
  is added to the worklist on Line 22, as desired.
  Otherwise, $DP[\classof{e_{n+1}}{\mathcal{G}}][\text{Keys}(\textsc{ExtractMap}(\langle e_0, \ldots, e_{n+1} \rangle))]$ was set on some previous
  iteration of the loop.
  $DP[\classof{e_{n+1}}{\mathcal{G}}][\text{Keys}(\textsc{ExtractMap}(e_{n+1}))]$ could only have been set by
    executing line 22 of the algorithm on some previous iteration of the loop,
    and $(\classof{e_{n+1}}{\mathcal{G}}, \text{Keys}(\textsc{ExtractMap}(\langle e_0, \ldots, e_{n+1} \rangle)))$ would have
    been added to the worklist.
\end{proof}

\begin{lemma}
  \label{lem:tiger-sound}
  $\textsc{Extract}$ only returns \effectsafe terms. \par
  \hspace{-0.5em}
  \(\forall t,\; \forall C \in \mathcal{G}
    ,\;
    \textsc{Extract}(C) = t \implies \exists\; \sw\in \eclasssws{C},\; \effectsafe(t, \sw) \land t \in \termsc{C}{\mathcal{G}}\)
\end{lemma}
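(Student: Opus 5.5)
The plan is to read the returned term off the final lines of \textsc{Extract} and then invoke \Cref{lem:dp-inv}. Suppose $\textsc{Extract}(C) = t$ with $t \neq \bot$. The only non-$\bot$ return site is the last line. There the algorithm chooses some $es$ with $DP[C][es] \neq \bot$, sets $\langle e_0,\ldots,e_k\rangle \gets DP[C][es]$, and returns $e_k$. So $t = e_k$ is the last element of a sequence stored in the DP table under \eclass $C$. By \Cref{lem:dp-inv}, $\sw = \langle e_0,\ldots,e_k\rangle$ is a \statewalk rooted at $C$, so its last term lies in $C$. This gives $t = e_k \in \termsc{C}{\mathcal{G}}$ and $\sw \in \eclasssws{C}$, where I read $\eclasssws{C}$ as the set of \statewalks whose final term is in $C$.

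The remaining claim is $\effectsafe(t,\sw)$, i.e., every effectful subterm of $e_k$ occurs in $\sw$. I would prove the general fact that for any \statewalk $\langle e_0,\ldots,e_k\rangle$, every effectful subterm of $e_i$ lies in $\langle e_0,\ldots,e_i\rangle$. This goes by cases on an effectful subterm $u$ of $e_i$. If $u = e_i$, the claim is immediate. Otherwise $u$ is a subterm of some child $c \in \children{e_i}$. The fourth clause of \Cref{def:statewalk} then says that all effectful subterms of $c$, including $u$ if $c = u$, lie in $\langle e_0,\ldots,e_{i-1}\rangle$. So no induction is actually needed; the definition of \statewalk already bakes in closure under effectful subterms. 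Applying this with $i = k$ yields $\effectsafe(e_k,\sw)$.

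The only subtle step is making sure \Cref{lem:dp-inv} really delivers a \statewalk satisfying all four clauses of \Cref{def:statewalk}, in particular the fourth clause for the newly appended term $t'$. If that lemma were read as only giving a chain of effectful terms, I would re-establish the fourth clause for $t'$ directly. Its effectful child is $e_k$, which is already in the prefix. Each pure child is $M[\classof{t_i}{\mathcal{G}}]$, which by \Cref{lem:ext-map-sound} is \effectsafe under the prefix, so all its effectful subterms lie in $\langle e_0,\ldots,e_k\rangle$. Finally, the $\bot$ case is vacuous, since the statement only concerns returned terms $t$.
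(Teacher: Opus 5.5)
Your proposal is correct and follows the paper's route. You read off $t = e_k$ from the $DP[C][es]$ entry chosen at the final return, then apply \Cref{lem:dp-inv} to get a \statewalk rooted at $C$. The only difference is that you explicitly derive $\effectsafe(e_k,\sw)$ from the fourth clause of \Cref{def:statewalk}, whereas the paper simply asserts this step.
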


\begin{proof}
  Let $t$ be a term such that $\textsc{Extract}(C) = t$ for \eclass $C$.
  It must be the case that for some set of \eclasses $es$, $DP[C][es] = \langle e_0, \ldots, e_k \rangle$
  such that $e_k = t$.
  As established by \Cref{lem:dp-inv},
    $\langle e_0, \ldots, e_k \rangle$ must be a \statewalk rooted at \eclass $C$.
  Then, $t \in \termsc{C}{\mathcal{G}}$ and $\effectsafe(t, \langle e_0, \ldots, e_k \rangle)$, as desired.
\end{proof}

\begin{lemma}
  \label{lem:tiger-complete}
  If \xspace $\textsc{Extract}$ cannot find an \effectsafe extraction, then none exists. \par
  $  \forall C \in \mathcal{G},\;
    \textsc{Extract}(C) = \bot \implies \not\exists\;
    t \in \termsc{C}{\mathcal{G}}\quad \sw \in \eclasssws{C},\ \effectsafe(t, \sw)$
\end{lemma}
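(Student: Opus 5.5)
We argue by contraposition: assume some $t \in \termsc{C}{\mathcal{G}}$ is \effectsafe under a \statewalk $\sw \in \eclasssws{C}$, and show that $\textsc{Extract}(C) \neq \bot$. The argument has three steps.

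First, extract from $t$ a \statewalk ending in $C$. Since $C$ is an \effectful \eclass (the algorithm is only invoked on such), $t$ is \effectful. By \Cref{lem:one-effectful-child}, following the unique \effectful child relation from $t$ down to $\argt$ yields a sequence $\sw_t = \langle e_0, \ldots, e_n \rangle$ with $e_n = t$. We check that $\sw_t$ satisfies \autoref{def:statewalk}:
\begin{itemize}
  \item $e_0 = \argt$, because $\argt$ is the only \effectful leaf.
  \item Each $e_i$ is a child of $e_{i+1}$, by construction.
  \item Every \effectful subterm of a child of $e_i$ lies in $\langle e_0, \ldots, e_{i-1} \rangle$.
\end{itemize}
The last condition is the delicate part. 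We use $\effectsafe(t, \sw)$: every \effectful subterm of $t$ lies on the chain $\sw$, and since a statewalk is determined by its last term, the chain below $e_i$ in $\sw$ must coincide with $\langle e_0, \ldots, e_{i-1} \rangle$. An \effectful subterm of a pure child of $e_i$ cannot equal $e_i$ or anything above it, because terms are finite trees and cannot contain themselves. Hence that subterm appears strictly earlier in the chain. This yields a valid \statewalk whose last term lies in $C$.

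Second, apply \Cref{lem:tiger-inv} to $\sw_t$. It gives that the pair $(C, \text{Keys}(\textsc{ExtractMap}(\sw_t)))$ is eventually added to the worklist. Items are inserted into the worklist only at Line 8 or Line 22, and each insertion is paired with setting the corresponding DP entry. Therefore $DP[C][es] \neq \bot$ for $es = \text{Keys}(\textsc{ExtractMap}(\sw_t))$, so $DP[C]$ is nonempty.

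Third, conclude from the last lines of \textsc{Extract}. Since $DP[C]$ is nonempty, the guard returning $\bot$ is not taken. Some $es$ with a non-$\bot$ entry is chosen, and its last term is returned, so $\textsc{Extract}(C) \neq \bot$. We also need termination to make ``eventually'' meaningful. This follows because there are finitely many (\eclass, \extractionset) pairs (at most $n\cdot 2^n$), and each pair enters the worklist at most once thanks to the guard at Line 21.

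The main obstacle is the first step: verifying that the chain derived from an arbitrary \effectsafe $t$ satisfies the fourth condition of \autoref{def:statewalk}. In particular, pure children of $e_i$ might in principle reference \effectful terms other than its predecessors. I would first try an induction on $i$, combining the uniqueness of the chain under $\sw$ with acyclicity of terms. If that fails, I would instead take $\sw$ itself, truncated at its first term in $C$ that has $t$'s shape, as the witness \statewalk.
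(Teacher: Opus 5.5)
Your proof is correct and follows the same route as the paper: contrapose, apply \Cref{lem:tiger-inv} to a \statewalk ending in $C$, and use that every worklist insertion is paired with setting the corresponding $DP$ entry, so \textsc{Extract} cannot return $\bot$. Your first step, which you flag as the main obstacle, is more work than needed: $t$ is an \effectful subterm of itself, so $t \in \sw$, and the prefix of $\sw$ ending at $t$ is already a \statewalk whose last term lies in $C$ (the paper simply applies \Cref{lem:tiger-inv} to the given \statewalk for $C$); your direct check of \autoref{def:statewalk} is nonetheless sound, and your explicit termination remark supplies something the paper's proof leaves implicit.
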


\begin{proof}
  If $\textsc{Extract}(C) = \bot$, then $C \neq \Arg$, because $DP[\Arg]$ is set on Line 6,
    and $DP[C]$ was never set via Line 22 in the inner loop.
  By \Cref{lem:tiger-inv}, all \effectsafe \statewalks are eventually added to the worklist.
  If there were an \effectsafe \statewalk for \eclass $C$, then it would have been
    added to the worklist via Line 23, which would ensure that $DP[C]$ was set on Line 22.
  Since $\textsc{Extract}(C) = \bot$, $DP[C]$ must not have been set, so there must not
    be an \effectsafe \statewalk in \eclass $C$.
\end{proof}

\section{NP-Completeness of Effectful Extraction}

\subsection{NP-Hardness}
\label{app:np-hardness}

\begin{lemma}
  \label{lem:extraction-to-sat}
  If $C_R$ has an \effectsafe extraction,
    then $\phi$ is satisfiable.
\end{lemma}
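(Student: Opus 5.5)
Given an \effectsafe extraction $t \in \termsc{C_R}{\mathcal{G}}$, I will read a satisfying assignment off the \statewalk $\sw$ witnessing $\effectsafe(t,\sw)$, and then show each clause is satisfied by inspecting the pure subterm of $t$ that lives in that clause's \eclass.

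\textbf{Shape of the root term.} The only representative term in $C_R$ is $R(\ldots)$, and $R$ is not equated with anything. So $t = R(p_1,\ldots,p_k,u)$ with $p_i \in \termsc{C_{\varphi_i}}{\mathcal{G}}$ and $u$ in the \eclass of $t_{x_n}$. By \Cref{lem:one-effectful-child}, $u$ is the unique \effectful child of $t$. The \statewalk is therefore determined by following \effectful children downward from $t$ through $u$ to $\argt$.

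\textbf{The \statewalk fixes one literal per variable.} I show by induction on $i$ that every term in the \eclass of $t_{x_i}$ has the form $s(x_i(w))$ with $w$ in the \eclass of $t_{x_{i-1}}$, or $s(\neg x_i(s(w)))$ with $w$ in the \eclass of $t_{\neg x_{i-1}}$. These two \eclasses are merged by construction step 3, and for $i=0$ we take $w = \argt$. The key facts are that the only equivalences are those in steps 3 and 5, and that the \eclasses of the $x_i(\cdot)$, $\neg x_i(\cdot)$ and $s(\cdot)$ subterms are otherwise distinct. Consequently the \statewalk of $t$ contains, for each $i$, exactly one of an $x_i(\cdot)$ term or a $\neg x_i(\cdot)$ term. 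I define $\alpha(x_i)=\top$ in the first case and $\alpha(x_i)=\bot$ in the second.

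\textbf{Each clause is satisfied.} Since $t$ is \effectsafe under $\sw$, every \effectful subterm of $p_i$ lies in $\sw$. The \eclass $C_{\varphi_i}$ contains only terms $g_i(\ell(\cdot))$, one for each literal $\ell$ of $\varphi_i$, so $p_i = g_i(q)$ with $q$ an \effectful term whose head is $x_j$ or $\neg x_j$ for some literal of $\varphi_i$. Because $q \in \sw$, the walk's choice for variable $j$ matches that literal, so $\alpha$ makes the literal true and $\varphi_i$ holds.

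\textbf{Main obstacle.} The delicate step is the \eclass analysis: I must rule out that congruence closure creates unexpected merges. For example, equating $t_{x_{i}}$ with $t_{\neg x_{i}}$ makes $x_{i+1}(t_{x_i})$ congruent to $x_{i+1}(t_{\neg x_i})$. I must check that $q$ cannot be a term whose head is the literal chosen for $x_j$ while its child walk is the \emph{other} branch. I would handle this by using that $q$ itself sits in $\sw$ and that $\sw$ is a linear chain. At each level the chain passes through exactly one of the $x_j$ or $\neg x_j$ heads, and the extra $s$ wrapper in the negative branch keeps the two depths distinct. So the head of $q$ determines the branch, and this is all the clause argument needs.
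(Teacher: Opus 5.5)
Your proof is correct and follows essentially the paper's argument. You read $\alpha$ off the walk of $t$, note that the walk passes through exactly one $x_i$- or $\neg x_i$-headed term per variable, and use effect-safety to put the effectful child of each $g_i$ term on that walk.

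Defining $\alpha$ by head symbols is a bit more precise than the paper's test $t_{x_i}\in\sw$, since the walk may contain a congruent variant of $t_{x_i}$ rather than that literal term. It also makes the worry in your last paragraph moot: $q$ lying on the walk with head $x_j$ already gives $\alpha(x_j)=\top$, with no appeal to the extra $s$ wrapper. That wrapper appears to be a typo in step~2, since the figure and the converse lemma use $s(\neg x_i(t_{\neg x_{i-1}}))$.
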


\begin{proof}
Let $t$ be an \effectsafe extraction under \statewalk $\sw$.
$t$ corresponds to a valid assignment satisfying $\phi$.
Define the assignment $\alpha$ that assigns $\top$ to $x_i$ if $t_{x_i}\in \sw$ and $\bot$ if $t_{\neg x_i}\in \sw$.
For each variable $x_i$, either $t_{x_i}\in \sw$ or $t_{\neg x_i}\in \sw$,
 but not both, since neither is a subterm of the other.
$t$ must contain a term with head symbol $g_i$ for $i \in [1, k]$,
  and by \effectsafety,
  the child of each of these terms is contained in $\sw$,
  giving witness that $\varphi_i$ is satisfiable by $\alpha$.


%
\end{proof}

\begin{lemma}
  \label{lem:no-extraction-to-unsat}
  If $\phi$ is satisfiable, $C_R$ has an \effectsafe extraction.
\end{lemma}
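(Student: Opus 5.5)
Given a satisfying assignment $\alpha$ of $\phi$, I will build an explicit statewalk $\sw_\alpha$ whose final term lies in $C_R$, and then an \effectsafe term under it.

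\textbf{Step 1: the state spine.} Walk through the variables in order. For $i = 0$, take $t_{x_0}$ if $\alpha(x_0)=\top$ and $t_{\neg x_0}$ otherwise. For $i \ge 1$, suppose the previous term $u_{i-1}$ lies in the \eclass of $t_{x_{i-1}}$ (which equals that of $t_{\neg x_{i-1}}$ by step 3 of the construction). Let $u_i$ be $s(x_i(u_{i-1}))$ if $\alpha(x_i)=\top$, and $s(\neg x_i(s(u_{i-1})))$ otherwise.

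By congruence, and because $t_{x_{i-1}} \approxeq_{\mathcal{G}} t_{\neg x_{i-1}}$, we get $x_i(u_{i-1}) \approxeq_{\mathcal{G}} x_i(t_{x_{i-1}})$. The analogous fact holds for the $\neg x_i$ branch. So $u_i$ lies in the common \eclass of $t_{x_i}$ and $t_{\neg x_i}$. The statewalk $\sw_\alpha$ is the chain of all effectful subterms of $u_n$, together with the index terms omitted in the footnote. It starts at $\Arg$, each element is the unique effectful child of the next, and the spine has no pure children. Hence all four conditions of \autoref{def:statewalk} hold trivially.

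\textbf{Step 2: the clause terms.} For each clause $\varphi_i$, pick a literal $\ell_j$ made true by $\alpha$. Say $\ell_j = x_k$; the case $\neg x_k$ is symmetric. Then $x_k(u_{k-1})$ occurs in $\sw_\alpha$. The pure term $g_i(x_k(u_{k-1}))$ is congruent to $t_{\varphi_{i,j}}$, because $u_{k-1}\approxeq_{\mathcal{G}} t_{x_{k-1}}$. By step 5 of the construction, $t_{\varphi_{i,j}}$ lies in the \eclass of $t_{\varphi_{i,1}}$. Its only effectful subterms are $x_k(u_{k-1})$ and that term's own spine prefix, all of which lie in $\sw_\alpha$. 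So it is \effectsafe under $\sw_\alpha$.

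\textbf{Step 3: the root.} Form $R(g_1(\ldots),\ldots,g_k(\ldots),u_n)$ from the clause terms of Step 2 and the spine term $u_n$. By congruence this term is in $C_R$, since every argument is equivalent to the corresponding argument of the original $R$ term. Appending it to $\sw_\alpha$ yields a statewalk. Its effectful child is $u_n$, and every effectful subterm of its pure children lies in the earlier prefix, which is exactly the fourth condition of \autoref{def:statewalk}. Hence the term is \effectsafe.

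The main obstacle is Step 1. There I must verify that the mismatched shapes ($\neg x_i$ inserts an extra $s$) still give terms represented in $\mathcal{G}$. This needs congruence to be applied to the representatives actually added in step 2 of the construction. I also need the variable \eclasses to be well defined through the chain of equalities $t_{x_i}\approxeq_{\mathcal{G}} t_{\neg x_i}$. A secondary check is that the extra $s$ wrappers do not let two different literals for one variable both appear in $\sw_\alpha$. That holds by construction, since each step picks exactly one branch.
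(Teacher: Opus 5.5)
Your strategy matches the paper's: build the spine from $\alpha$, hang one $g_i$ term off a true literal on the spine for each clause, put $R$ on top, and get membership in $C_R$ by congruence. The problem is the step you wave through as ``symmetric''; it fails. Your Step~1 follows construction step~2 literally. So for $\alpha(x_k)=\bot$ with $k\ge 1$, the spine contains $\neg x_k(s(u_{k-1}))$. The clause representative from step~4, however, is $g_i(\neg x_k(t_{\neg x_{k-1}}))$, with no inner $s$. This leaves you two candidates, and neither works:
\begin{itemize}
\item $g_i(\neg x_k(u_{k-1}))$ lies in the clause e-class, but its effectful subterm $\neg x_k(u_{k-1})$ is not in $\sw_\alpha$.
\item $g_i(\neg x_k(s(u_{k-1})))$ has all its effectful subterms in $\sw_\alpha$, but it is not represented in $\mathcal{G}$. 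This is because $s(u_{k-1})$ lies in the e-class of $s(t_{\neg x_{k-1}})$, not in that of $t_{\neg x_{k-1}}$.
\end{itemize}
Choosing a different walk does not help. The e-class of $\neg x_k(t_{\neg x_{k-1}})$ has no effectful parent in $\mathcal{G}$, so a term from it can never be followed by a later statewalk element, in particular not by the root. Under the construction as printed, the lemma is in fact false. A counterexample is $\phi=\neg x_1$ over $x_0,x_1$. Another is the paper's running example: clause~1 forces $x_0=\bot$, so clause~2 can only be witnessed through $\neg x_1$.

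The paper's proof avoids this because its walk $s(\alpha_n(s(\alpha_{n-1}(\ldots(\alpha_0(\Arg))))))$ never inserts the inner $s$, just like the example statewalk $s(x_2(s(\neg x_1(s(\neg x_0(\Arg))))))$. In effect it works with $t_{\neg x_i}=s(\neg x_i(t_{\neg x_{i-1}}))$, which is consistent with step~4. Under that reading, your Step~1 term $s(\neg x_i(s(u_{i-1})))$ is no longer in the e-class of $t_{x_i}$, so you must take $u_i=s(\neg x_i(u_{i-1}))$. Then $\neg x_k(u_{k-1})$ is on the spine, and $g_i(\neg x_k(u_{k-1}))\approxeq_{\mathcal{G}} t_{\varphi_{i,j}}$ by congruence. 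With that change, Step~2 really is the mirror image of the positive case. Alternatively, you can keep the inner $s$ and change the step-4 representative to $g_i(\neg x_k(s(t_{\neg x_{k-1}})))$.

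So the real obstacle was not the congruence checks in Step~1, which are fine. It was the mismatch between the spine and the clause representatives for negated literals. Your ``secondary check'', that only one literal per variable appears, is needed only for the converse direction.
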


\begin{proof}
Suppose there exists a solution to $\phi$, written as a sequence
  $\alpha = \langle \alpha_0, \ldots, \alpha_n \rangle$,
  where each $\alpha_i$ is either $x_i$ or $\neg x_i$,
  corresponding to $x_i = \top$ or $x_i = \bot$, respectively.
This solution corresponds to a \statewalk $\sw$ ending with
 $t_{\alpha} = s(\alpha_n(s(\alpha_{n-1}(s(\ldots(\alpha_0(\text{Arg})))))))$.
Additionally, for each clause $\varphi_i= \ell_1 \lor \ldots \lor \ell_m$
 where $\ell_j$ satisfies $\varphi_i$,
 let $t_{\ell_j}$ be the unique term
 with head symbol $\ell_j$ in $\sw$ and
 let $t_{\alpha,\varphi_i}$ be $g_i(t_{\ell_j})$,
 the term that is chosen for the equivalence class of $g_i$.
We claim $R(t_{\alpha,\varphi_1},\ldots, t_{\alpha,\varphi_k}, t_a)$ is an \effectsafe extraction
 of $C_R$.
First, it is a term represented by $C_R$:
 we can inductively show that
 $t_{\alpha}$ is equivalent to $t_{x_n}$
 and
 each $t_{\alpha,\varphi_{i}}$ is equivalent to $t_{\varphi_{i,1}}$,
 so by congruence $R(t_{\alpha,\varphi_1},\ldots, t_{\alpha,\varphi_k}, t_a)$
 is equivalent to $R(t_{\varphi_1},\ldots, t_{\varphi_k}, t_{x_n})$.
Second, it is \effectsafe,
 as by our construction the only effectful subterms are
 those in $\sw$.

\end{proof}

Taken together, the lemmas show that $\phi$ is satisfiable iff there exists an \effectsafe extraction for the root \eclass.
The encoding is linear in the size of $\phi$.
Therefore, finding an \effectsafe extraction from an
  effectful dataflow \egraph is NP-hard.
This implies the optimal variant of the problem
  is also NP-Hard.



\subsection{NP-Easiness}
\label{app:np-easy}

To prove the NP-Easiness of \effectsafe extraction,
  we show a poly-sized reduction to SAT.

We start with a simple encoding
  of finding any extraction
  without considering effect safety.
We can encode the problem in $O(n^3)$ variables and clauses,
  where $n$ is the size of the \egraph.
\begin{itemize}
  \item Each \eclass can be included in the extraction or not ($O(n)$ variables).
	\item For each representative term in the \egraph,
    either its head is included in the extraction or it is not ($O(n)$ variables).
	\item An \eclass is included if and only if there exists at least
  one representative whose head is included ($O(n)$ clauses).
	\item A head can only be included if all of its children \eclasses
	are included ($O(n)$ clauses).
	\item The acyclicity requirement on the extraction
    can be encoded
	  by enforcing a total order
	  on the extracted terms with $O(n^2)$ additional variables and $O(n^3)$ additional clauses.
\end{itemize}

To guarantee effect safety,
  we need to find a \statewalk
  (i.e., a list of effectful terms)
  that validates the final extraction.
We make a copy of the simple encoding for each term in the \statewalk,
  and add additional constraints to each copy:
  (1) The extraction may only use \eclasses that are extractable
      with the \statewalk up to the previous term
      or the \eclass of the previous term itself
  and (2) The current term is an extension of the \statewalk ending with the
      previous term.

Our critical insight is that
  it is complete to only consider \statewalks up to length $n^2$.
The extractable set of a \statewalk is non-decreasing
  as the \statewalk extends,
	so there can only be up to $n$ different extractable sets
	for a \statewalk's non-empty prefixes.
Each term of the \statewalk is in one of $n$ \eclasses,
so for any \statewalk with a length greater than $n^2$,
  by the pigeonhole principle,
  there must exist two prefixes that have the same extractable set and \eclass pair,
	meaning they are equivalent under $\equiv_{\sw}$.
Thus, there exists a strictly shorter sub-\statewalk that also admits an \effectsafe extraction,
	which can be constructed by skipping over all terms between the two appearances,
	substituting the later appearance with the earlier one in all places.
Its soundness is guaranteed by substitutability.

As a result, we only need to copy the simple encoding $n^2$ times
  (i.e., search for a \statewalk up to length $n^2$).
This constitutes an encoding of the \effectsafe extraction problem
	in SAT using $O(n^5)$ variables and clauses,
	proving its NP-easiness.

Next, we prove the NP-Easiness of the optimal \effectsafe extraction problem.
The decision version of the optimal \effectsafe extraction problem requires more constraints to encode the cost model.
These constraints are analogous to those required for optimal pure extraction~\cite{stepp-thesis}.
  to encode the cost model.
Importantly, our insight of making at most $n^2$ copies still applies
  because a sub-\statewalk always leads to a cheaper extraction.
Thus, the reduction remains $O(\operatorname{poly}(n))$ in the optimal case.

\section{\eggcc's ILP Encoding Details}
\label{app:ilp-details}

As defined in \autoref{subsec:backgroundegraphs},
  an extraction is a term $t$ in an equivalence class
  $C$
  of the partial equivalence relation (PER) $\approxeq_\mathcal{G}$
  that the \egraph represents.
Unfortunately, $C$ can be infinite (see \autoref{sec:overview}).
To make the problem tractable,
  we assume that each term in the \egraph is
  used at most once in the extraction.
In practice, we found two instances in which
  this approximation fails, with a statewalk
  reusing the same term multiple times.
Since \tiger considers all statewalks up to
  \extractionset equivalence, it does not
  suffer from this limitation.

\myparagraph{ILP Variables}

The ILP constructs an extraction by selecting a set of
  terms from the regionalized \egraph and a set of edges describing how those terms are connected.
For each child of a term $t$, the ILP is free to choose
  any candidate term from the corresponding child's \eclass.
We introduce three families of decision variables:
\begin{itemize}
  \item \textbf{Selection variables.} For each \eclass $E$ and term $t \in E$,
    $p_t \in \{0,1\}$ indicates whether the extraction chooses $t$.
  \item \textbf{Child-choice variables.} For every child position $i$ of $t$ and
    every candidate child term $u$ in the corresponding child \eclass,
    $s_{t,i,u} \in \{0,1\}$ records which concrete child is attached when $t$ is selected.
  \item \textbf{Order variables.} Each term receives an integer $o_t \in [0, N)$,
    where $N$ is the total number of terms under consideration, to encode a topological order that rules out cycles.
\end{itemize}

All $p$ and $s$ variables are binary, while the $o$ variables are bounded integers.

\myparagraph{Constraints}
The variables are coupled by the following linear constraints:
\begin{enumerate}
  \item The encoding enforces that at least one term is selected from the root
  \eclass $E_{root}$:
  {
    \abovedisplayskip=3pt
    \belowdisplayskip=3pt
    \[ \sum_{t \in E_{\mathit{root}}} p_t \geq 1 \]
  }
  \item \textbf{Child completion.}  Whenever a term $t$ is selected,
    every child position must pick a concrete successor.
    For each child index $i$, the ILP encoding adds:
    {
      \abovedisplayskip=3pt
      \belowdisplayskip=3pt
      \[ \sum_u s_{t,i,u} \ge p_t \]
    }
  \item \textbf{Edge activation.}  If a choice variable is active, the
    corresponding child must also be active.
    We ensure the extraction never references an un-picked term with the constraints:
    {
      \abovedisplayskip=3pt
      \belowdisplayskip=3pt
      \[ s_{t,i,u} \le p_u \]
    }
  \item \textbf{\Effectsafe\ fan-in.}  This constraint ensures an \effectsafe extraction by
    preventing two effectful edges from targeting the same term.
    For each effectful child $u$ we add:
    {
      \abovedisplayskip=3pt
      \belowdisplayskip=3pt
      \[\sum_{(t',i) : \mathit{effectful}(t')} s_{t',i,u} \le 1\]
    }
  \item \textbf{Acyclicity.}  The order variables ensure a grounded term can be reconstructed by preventing cycles.
    The following constraints ensure each child has a lower index than its parents:
    {
      \abovedisplayskip=3pt
      \belowdisplayskip=3pt
      \[ o_u - o_t + N \cdot s_{t,i,u} \le N-1 \]
    }
\end{enumerate}

\myparagraph{Objective Function}

We define an objective function using the same cost model as
  \tiger (\autoref{sec:cost-model}).
Each term $t$ contributes to $\mathsf{cost}(t)$
  if it is selected in the extraction.
The ILP minimizes the total cost of selected terms:
\[\text{minimize} \sum_{t} p_t \cdot \mathsf{cost}(t)\]

\subsection{Recovering the Extracted Term}
For each $p_t = 1$, we construct a term $t'$ recursively by:
  \begin{enumerate}
    \item Choosing the same operator as $t$.
    \item For each child position $i$ of $t$, find a variable $s_{t,i,u} = 1$ and construct $u'$ recursively.
  \end{enumerate}

This process constructs a finite tree because of the acyclicity constraints.
The \effectsafe constraints ensure effectful operations are limited to a single \statewalk.

\section{\tiger Worked Example}
\label{app:tiger-example}

\autoref{fig:tiger-worked-example} traces \tiger
  (\autoref{alg:tiger-extract}) on a small \egraph in which
  load elimination has made two \statewalks available,
  showing the DP table entry added at each iteration
  and the \effectsafe term finally extracted.

\begin{figure}
  \includegraphics[width=0.85\textwidth]{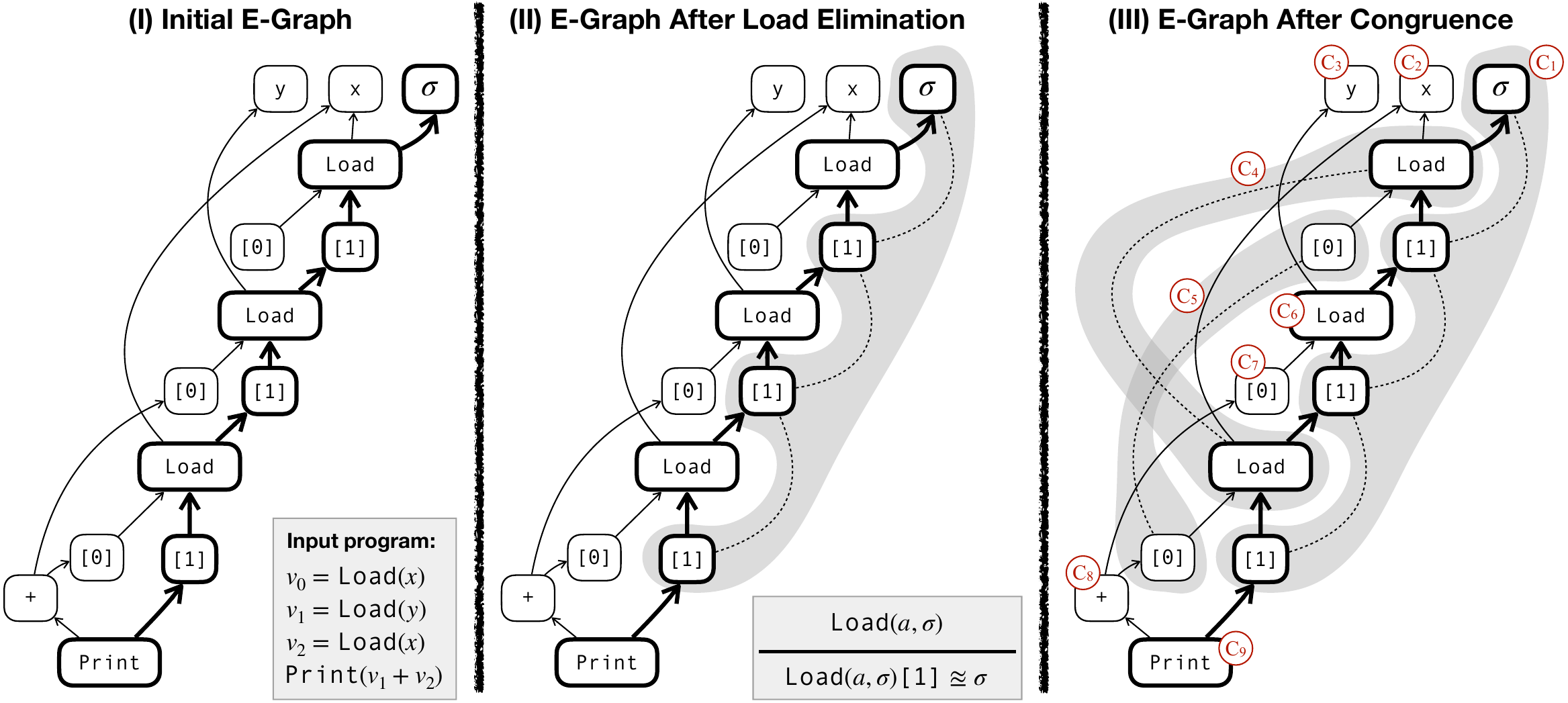}

  \vspace{0.05cm}

  \footnotesize
  \begin{tabular}{llll}
  Iteration & $(c, es)$ & $DP[c][es]$ & Added By \\ \hline
  1 & $(C_1, \{C_2, C_3\})$ & \begin{tabular}{l}
                                    $\langle \argt \rangle$ \\
                                  \end{tabular} & Line 6 \\ \hline
  2 & $(C_4, \{C_2, C_3, C_5\})$ & \begin{tabular}{l}
                                    $\langle \argt, \texttt{Load}(x, \argt) \rangle$ \\
                                  \end{tabular} & Iter 1 \\ \hline
  3 & $(C_6, \{C_2, C_3, C_7\})$ & \begin{tabular}{l}
                                    $\langle \argt, \texttt{Load}(y, \argt) \rangle$ \\
                                  \end{tabular} & Iter 1 \\ \hline
  4 & $(C_1, \{C_2, C_3, C_5\})$ & \begin{tabular}{l}
                                    $\langle \argt, \texttt{Load}(x, \argt), \texttt{Load}(x, \argt)[1] \rangle$ \\
                                  \end{tabular} & Iter 2 \\ \hline
  5 & $(C_1, \{C_2, C_3, C_7\})$ & \begin{tabular}{l}
                                    $\langle \argt, \texttt{Load}(y, \argt), \texttt{Load}(y, \argt)[1] \rangle$ \\
                                  \end{tabular} & Iter 3 \\ \hline
  6 & $(C_6, \{C_2, C_3, C_5, C_7, C_8\})$ & \begin{tabular}{l}
                                    $\langle\argt,$ \\
                                    $\phantom{\langle}\texttt{Load}(x, \argt),$ \\
                                    $\phantom{\langle}\texttt{Load}(x, \argt)[1],$ \\
                                    $\phantom{\langle}\texttt{Load}(y, \texttt{Load}(x, \argt)[1])\rangle$ \\
                                  \end{tabular} & Iter 4 \\ \hline
  7 & $(C_4, \{C_2, C_3, C_5, C_7, C_8\})$ & \begin{tabular}{l}
                                    $\langle\argt,$ \\
                                    $\phantom{\langle}\texttt{Load}(y, \argt),$ \\
                                    $\phantom{\langle}\texttt{Load}(y, \argt)[1],$ \\
                                    $\phantom{\langle}\texttt{Load}(x, \texttt{Load}(y, \argt)[1])\rangle$ \\
                                  \end{tabular} & Iter 5 \\ \hline
  8 & $(C_1, \{C_2, C_3, C_5, C_7, C_8\})$ & \begin{tabular}{l}
                                    $\langle\argt,$ \\
                                    $\phantom{\langle}\texttt{Load}(x, \argt),$ \\
                                    $\phantom{\langle}\texttt{Load}(x, \argt)[1],$ \\
                                    $\phantom{\langle}\texttt{Load}(y, \texttt{Load}(x, \argt)[1]),$ \\
                                    $\phantom{\langle}\texttt{Load}(y, \texttt{Load}(x, \argt)[1])[1]\rangle$ \\
                                  \end{tabular} & Iter 6 \\ \hline
  9 & $(C_9, \{C_2, C_3, C_5, C_7, C_8\})$ & \begin{tabular}{l}
                                    $\langle\argt,$ \\
                                    $\phantom{\langle}\texttt{Load}(x, \argt),$ \\
                                    $\phantom{\langle}\texttt{Load}(x, \argt)[1],$ \\
                                    $\phantom{\langle}\texttt{Load}(y, \texttt{Load}(x, \argt)[1]),$ \\
                                    $\phantom{\langle}\texttt{Load}(y, \texttt{Load}(x, \argt)[1])[1],$ \\
                                    $\phantom{\langle}\texttt{Print}(\texttt{Add}(\texttt{Load}(x, \argt)[0],$ \\
                                    $\phantom{\langle\texttt{Print}(\texttt{Add}(}\texttt{Load}(y, \texttt{Load}(x, \argt)[1])[0]),$ \\
                                    $\phantom{\langle\texttt{Print}(}\texttt{Load}(y, \texttt{Load}(x, \argt)[1])[1])\rangle$ \\
                                  \end{tabular} & Iter 8 \\ \hline
  \end{tabular}%

  \caption{A worked example of \tiger extraction over an \egraph.
  The panels show the \egraph as initially built (left),
  after load elimination (middle),
  and after the congruence invariant is restored (right);
  \tiger then returns an \effectsafe extraction,
  in essence choosing which load operations to perform.
  Below is the DP table computed with root \eclass $C_9$, whose columns give
  the iteration of \autoref{alg:tiger-extract}'s outer loop,
  the \eclass and extractable set processed,
  the \statewalk $\langle e_0, \ldots, e_k \rangle$ then stored in the table, and
  the iteration that added $(c, es)$ to the worklist.
  The final extracted term is the last term of the last row:
    $\texttt{Print}(\texttt{Add}(\texttt{Load}(x, \argt)[0], \texttt{Load}(y, \texttt{Load}(x, \argt)[1])[0]), \texttt{Load}(y, \texttt{Load}(x, \argt)[1])[1])$.
  }
  \Description{A three-panel e-graph diagram above a nine-row table. The panels
    show the same small e-graph for a program that loads from two addresses and
    prints their sum: first as initially built, then after load elimination adds
    an equality, then after congruence closes that equality, which creates a
    second possible statewalk. The table below has one row per iteration of the
    algorithm, giving the e-class and extractable set processed, the statewalk
    stored in the dynamic programming table, and the iteration that enqueued it.
    The statewalks grow one term per row until the final row holds the complete
    extraction.}
  \label{fig:tiger-worked-example}
\end{figure}

\FloatBarrier

\section{Absolute Running Times}
\label{app:absolute-runtimes}

\autoref{fig:eggcc-vs-llvm-bril} and \autoref{fig:eggcc-vs-llvm-polybench}
  report runtimes normalized to \texttt{LLVM-O0}.
For reference, this appendix gives the corresponding
  \emph{absolute} runtimes together with their standard deviations.

Each binary was executed 200 times, preceded by 10 warmup runs.
We count CPU cycles with the \texttt{rdtsc} instruction
  and convert cycles to time using the
  measured time-stamp-counter rate of the benchmark machine
  (1999.906 MHz).
Because the counter is invariant,
  this rate is independent of the core's operating frequency.
All ratios reported elsewhere in the paper
  are computed directly from cycle counts and
  are therefore unaffected by this conversion.
The standard deviations are small relative to the means,
  which is what makes the normalized comparisons
  in \autoref{subsec:outputquality} meaningful.

\subsection{Bril Benchmarks}
{\catcode`\_=12 \begin{longtable}{lrrr}
\toprule
 & LLVM-O0 & EGGCC-DP-O0 & LLVM-O3-O0 \\
\midrule
\endfirsthead
\toprule
 & LLVM-O0 & EGGCC-DP-O0 & LLVM-O3-O0 \\
\midrule
\endhead
\midrule \multicolumn{4}{r}{\textit{continued on next page}} \\
\endfoot
\bottomrule
\endlastfoot
ackermann & 1.81 ms $\pm$ 0.03 & 1.2 ms $\pm$ 0.06 & 1.46 ms $\pm$ 0.02 \\
adj2csr & 9.8 ms $\pm$ 0.78 & 9.51 ms $\pm$ 0.75 & 8.72 ms $\pm$ 0.61 \\
adler32 & 155.67 $\mu$s $\pm$ 10.68 & 152.07 $\mu$s $\pm$ 16.3 & 151.18 $\mu$s $\pm$ 8.97 \\
armstrong & 0.73 $\mu$s $\pm$ 0.02 & 0.89 $\mu$s $\pm$ 1.04 & 0.41 $\mu$s $\pm$ 0.02 \\
binary-fmt & 0.82 $\mu$s $\pm$ 1.28 & 0.57 $\mu$s $\pm$ 0.05 & 0.57 $\mu$s $\pm$ 0.01 \\
binary-search & 25.94 $\mu$s $\pm$ 2.92 & 25.7 $\mu$s $\pm$ 2.3 & 25.93 $\mu$s $\pm$ 6.43 \\
birthday & 31.45 $\mu$s $\pm$ 3.62 & 32.33 $\mu$s $\pm$ 3.59 & 32.16 $\mu$s $\pm$ 3.85 \\
bitshift & 0.35 $\mu$s $\pm$ 0.05 & 0.25 $\mu$s $\pm$ 0.03 & 0.24 $\mu$s $\pm$ 0.02 \\
bitwise-ops & 20.66 ms $\pm$ 0.53 & 29.72 ms $\pm$ 4.76 & 20.34 ms $\pm$ 0.85 \\
bubblesort & 25.86 $\mu$s $\pm$ 3.87 & 25.4 $\mu$s $\pm$ 3.01 & 24.7 $\mu$s $\pm$ 4.99 \\
catalan & 21.28 ms $\pm$ 2.09 & 17.91 ms $\pm$ 1.06 & 18.2 ms $\pm$ 0.98 \\
check-primes & 4.23 $\mu$s $\pm$ 0.22 & 4.15 $\mu$s $\pm$ 2.31 & 3.8 $\mu$s $\pm$ 0.26 \\
cholesky\_decomp & 33.97 $\mu$s $\pm$ 4.12 & 32.4 $\mu$s $\pm$ 3.88 & 32.93 $\mu$s $\pm$ 4.29 \\
collatz & 3.76 $\mu$s $\pm$ 0.21 & 3.74 $\mu$s $\pm$ 1.38 & 1.91 $\mu$s $\pm$ 0.04 \\
conjugate-gradient & 2.48 ms $\pm$ 0.11 & 2.61 ms $\pm$ 0.08 & 2.15 ms $\pm$ 0.02 \\
cordic & 0.63 $\mu$s $\pm$ 0.02 & 0.38 $\mu$s $\pm$ 0.03 & 0.27 $\mu$s $\pm$ 0.02 \\
csrmv & 6.37 ms $\pm$ 0.88 & 6.56 ms $\pm$ 0.62 & 344.48 $\mu$s $\pm$ 9.31 \\
dead-branch & 0.29 $\mu$s $\pm$ 0.03 & 0.16 $\mu$s $\pm$ 0.03 & 0.03 $\mu$s $\pm$ 0.01 \\
digital-root & 0.26 $\mu$s $\pm$ 0.03 & 0.3 $\mu$s $\pm$ 0.03 & 0.24 $\mu$s $\pm$ 0.01 \\
dot-product & 25.74 $\mu$s $\pm$ 3.24 & 26.6 $\mu$s $\pm$ 4.41 & 25.38 $\mu$s $\pm$ 3.11 \\
euclid & 0.19 $\mu$s $\pm$ 0.02 & 0.14 $\mu$s $\pm$ 0.01 & 0.12 $\mu$s $\pm$ 0.01 \\
euler & 81.79 ms $\pm$ 5.13 & 63.45 ms $\pm$ 4.87 & 65.21 ms $\pm$ 4.77 \\
fact & 16.25 $\mu$s $\pm$ 6.74 & 8.98 $\mu$s $\pm$ 2.51 & 1.89 $\mu$s $\pm$ 0.0 \\
factors & 0.52 $\mu$s $\pm$ 0.03 & 0.55 $\mu$s $\pm$ 0.01 & 0.43 $\mu$s $\pm$ 0.01 \\
fib & 191.51 $\mu$s $\pm$ 8.79 & 195.86 $\mu$s $\pm$ 13.8 & 157.37 $\mu$s $\pm$ 7.94 \\
fitsinside & 0.06 $\mu$s $\pm$ 0.0 & 0.04 $\mu$s $\pm$ 0.01 & 0.04 $\mu$s $\pm$ 0.0 \\
fizz-buzz & 267.23 $\mu$s $\pm$ 16.71 & 260.29 $\mu$s $\pm$ 12.42 & 261.12 $\mu$s $\pm$ 7.56 \\
gcd & 0.43 $\mu$s $\pm$ 0.01 & 0.49 $\mu$s $\pm$ 0.04 & 0.23 $\mu$s $\pm$ 0.01 \\
hanoi & 0.31 $\mu$s $\pm$ 0.06 & 0.17 $\mu$s $\pm$ 0.02 & 0.27 $\mu$s $\pm$ 0.03 \\
is-decreasing & 0.25 $\mu$s $\pm$ 0.0 & 0.19 $\mu$s $\pm$ 0.01 & 0.21 $\mu$s $\pm$ 0.0 \\
lcm & 898.31 $\mu$s $\pm$ 6.34 & 901.36 $\mu$s $\pm$ 46.46 & 874.03 $\mu$s $\pm$ 4.02 \\
leibniz & 64.34 ms $\pm$ 4.54 & 62.79 ms $\pm$ 5.36 & 63.27 ms $\pm$ 4.97 \\
loopfact & 27.78 $\mu$s $\pm$ 3.19 & 27.54 $\mu$s $\pm$ 2.76 & 27.85 $\mu$s $\pm$ 3.48 \\
major-elm & 26.0 $\mu$s $\pm$ 3.43 & 25.64 $\mu$s $\pm$ 2.8 & 25.45 $\mu$s $\pm$ 3.41 \\
mandelbrot & 108.65 ms $\pm$ 4.35 & 96.53 ms $\pm$ 4.68 & 90.07 ms $\pm$ 3.28 \\
mat-inv & 27.47 $\mu$s $\pm$ 4.26 & 26.68 $\mu$s $\pm$ 3.26 & 25.55 $\mu$s $\pm$ 3.51 \\
mat-mul & 8.82 ms $\pm$ 0.78 & 7.68 ms $\pm$ 0.6 & 5.86 ms $\pm$ 0.37 \\
max-subarray & 25.69 $\mu$s $\pm$ 2.43 & 26.38 $\mu$s $\pm$ 2.99 & 25.22 $\mu$s $\pm$ 3.27 \\
mod\_inv & 0.78 $\mu$s $\pm$ 0.03 & 0.62 $\mu$s $\pm$ 1.24 & 0.08 $\mu$s $\pm$ 0.0 \\
n\_root & 0.77 $\mu$s $\pm$ 0.08 & 0.5 $\mu$s $\pm$ 0.05 & 1.43 $\mu$s $\pm$ 0.24 \\
newton & 0.37 $\mu$s $\pm$ 0.06 & 0.32 $\mu$s $\pm$ 0.03 & 0.32 $\mu$s $\pm$ 0.01 \\
norm & 26.53 $\mu$s $\pm$ 3.2 & 26.75 $\mu$s $\pm$ 3.22 & 25.25 $\mu$s $\pm$ 3.97 \\
orders & 6.57 $\mu$s $\pm$ 1.07 & 7.3 $\mu$s $\pm$ 2.66 & 6.42 $\mu$s $\pm$ 1.92 \\
palindrome & 1.08 $\mu$s $\pm$ 0.03 & 1.12 $\mu$s $\pm$ 1.02 & 0.69 $\mu$s $\pm$ 0.02 \\
pascals-row & 24.48 $\mu$s $\pm$ 2.61 & 24.77 $\mu$s $\pm$ 3.77 & 24.2 $\mu$s $\pm$ 1.46 \\
perfect & 7.33 $\mu$s $\pm$ 1.98 & 7.41 $\mu$s $\pm$ 2.57 & 6.83 $\mu$s $\pm$ 2.03 \\
primes-between & 10.78 ms $\pm$ 0.52 & 10.73 ms $\pm$ 0.56 & 5.24 ms $\pm$ 0.11 \\
pythagorean\_triple & 17.11 $\mu$s $\pm$ 2.48 & 19.72 $\mu$s $\pm$ 3.01 & 17.38 $\mu$s $\pm$ 1.54 \\
quadratic & 0.35 $\mu$s $\pm$ 0.02 & 0.23 $\mu$s $\pm$ 0.01 & 0.17 $\mu$s $\pm$ 0.0 \\
quickselect & 25.99 $\mu$s $\pm$ 2.63 & 25.84 $\mu$s $\pm$ 2.83 & 25.56 $\mu$s $\pm$ 3.36 \\
quicksort & 25.51 $\mu$s $\pm$ 3.22 & 25.3 $\mu$s $\pm$ 3.24 & 25.27 $\mu$s $\pm$ 3.75 \\
ray-sphere-intersection & 0.14 $\mu$s $\pm$ 0.02 & 0.04 $\mu$s $\pm$ 0.0 & 0.04 $\mu$s $\pm$ 0.0 \\
recfact & 239.69 $\mu$s $\pm$ 45.57 & 142.31 $\mu$s $\pm$ 19.39 & 18.71 $\mu$s $\pm$ 0.49 \\
relative-primes & 334.35 $\mu$s $\pm$ 8.68 & 292.1 $\mu$s $\pm$ 7.76 & 310.33 $\mu$s $\pm$ 8.75 \\
reverse & 0.19 $\mu$s $\pm$ 0.0 & 0.14 $\mu$s $\pm$ 0.0 & 0.12 $\mu$s $\pm$ 0.01 \\
riemann & 165.38 ms $\pm$ 3.54 & 76.71 ms $\pm$ 4.68 & 59.47 ms $\pm$ 4.68 \\
sqrt & 0.32 $\mu$s $\pm$ 0.09 & 0.31 $\mu$s $\pm$ 0.06 & 0.29 $\mu$s $\pm$ 0.05 \\
sum-bits & 0.64 $\mu$s $\pm$ 0.04 & 0.48 $\mu$s $\pm$ 1.52 & 0.31 $\mu$s $\pm$ 0.01 \\
sum-check & 10.8 $\mu$s $\pm$ 4.13 & 10.19 $\mu$s $\pm$ 2.2 & 19.62 $\mu$s $\pm$ 1.95 \\
sum-divisors & 8.77 $\mu$s $\pm$ 1.21 & 9.01 $\mu$s $\pm$ 2.32 & 16.61 $\mu$s $\pm$ 2.75 \\
sum-sq-diff & 19.92 $\mu$s $\pm$ 2.0 & 24.2 $\mu$s $\pm$ 3.35 & 19.89 $\mu$s $\pm$ 2.37 \\
totient & 0.6 $\mu$s $\pm$ 0.03 & 0.57 $\mu$s $\pm$ 0.01 & 0.51 $\mu$s $\pm$ 0.01 \\
up-arrow & 95.77 $\mu$s $\pm$ 9.79 & 95.24 $\mu$s $\pm$ 5.95 & 95.04 $\mu$s $\pm$ 4.67 \\
vsmul & 437.53 $\mu$s $\pm$ 9.73 & 447.41 $\mu$s $\pm$ 18.76 & 436.46 $\mu$s $\pm$ 8.9 \\
\end{longtable}
}

\subsection{PolyBench Benchmarks}
{\catcode`\_=12 \begin{longtable}{lrrr}
\toprule
 & LLVM-O0 & EGGCC-DP-O0 & LLVM-O3-O0 \\
\midrule
\endfirsthead
\toprule
 & LLVM-O0 & EGGCC-DP-O0 & LLVM-O3-O0 \\
\midrule
\endhead
\midrule \multicolumn{4}{r}{\textit{continued on next page}} \\
\endfoot
\bottomrule
\endlastfoot
2mm & 138.08 ms $\pm$ 3.6 & 61.4 ms $\pm$ 3.84 & 26.02 ms $\pm$ 3.41 \\
3mm & 218.46 ms $\pm$ 4.14 & 88.1 ms $\pm$ 3.61 & 30.52 ms $\pm$ 5.49 \\
adi & 2.62 ms $\pm$ 0.03 & 2.54 ms $\pm$ 0.13 & 2.49 ms $\pm$ 0.02 \\
atax & 8.09 ms $\pm$ 0.75 & 4.58 ms $\pm$ 0.35 & 2.62 ms $\pm$ 0.48 \\
bicg & 15.48 ms $\pm$ 0.56 & 10.62 ms $\pm$ 0.59 & 8.21 ms $\pm$ 0.62 \\
cholesky & 623.12 ms $\pm$ 3.03 & 376.0 ms $\pm$ 4.09 & 66.91 ms $\pm$ 3.94 \\
correlation & 3.25 ms $\pm$ 0.04 & 3.09 ms $\pm$ 0.17 & 2.47 ms $\pm$ 0.01 \\
covariance & 2.64 ms $\pm$ 0.03 & 2.47 ms $\pm$ 0.07 & 821.08 $\mu$s $\pm$ 8.16 \\
deriche & 2.88 ms $\pm$ 0.17 & 2.68 ms $\pm$ 0.18 & 2.45 ms $\pm$ 0.29 \\
doitgen & 60.92 ms $\pm$ 4.46 & 29.28 ms $\pm$ 5.35 & 11.28 ms $\pm$ 1.46 \\
durbin & 1.93 ms $\pm$ 0.11 & 389.14 $\mu$s $\pm$ 39.0 & 279.88 $\mu$s $\pm$ 40.76 \\
fdtd-2d & 2.28 ms $\pm$ 0.23 & 2.21 ms $\pm$ 0.04 & 1.97 ms $\pm$ 0.02 \\
floyd-warshall & 20.1 ms $\pm$ 1.54 & 13.12 ms $\pm$ 0.9 & 9.92 ms $\pm$ 0.67 \\
gemm & 100.53 ms $\pm$ 4.57 & 26.13 ms $\pm$ 4.78 & 11.74 ms $\pm$ 0.71 \\
gemver & 17.63 ms $\pm$ 2.99 & 11.46 ms $\pm$ 1.32 & 8.07 ms $\pm$ 0.79 \\
gesummv & 9.29 ms $\pm$ 1.07 & 7.02 ms $\pm$ 0.69 & 5.53 ms $\pm$ 0.48 \\
gramschmidt & 506.83 $\mu$s $\pm$ 40.61 & 185.65 $\mu$s $\pm$ 15.11 & 100.93 $\mu$s $\pm$ 22.22 \\
heat-3d & 6.68 ms $\pm$ 0.56 & 4.18 ms $\pm$ 0.18 & 4.06 ms $\pm$ 0.03 \\
jacobi-1d & 60.91 $\mu$s $\pm$ 6.43 & 53.47 $\mu$s $\pm$ 5.76 & 42.9 $\mu$s $\pm$ 4.94 \\
jacobi-2d & 3.22 ms $\pm$ 0.14 & 2.73 ms $\pm$ 0.19 & 1.17 ms $\pm$ 0.07 \\
lu & 690.21 ms $\pm$ 3.2 & 370.43 ms $\pm$ 3.33 & 75.92 ms $\pm$ 4.55 \\
ludcmp & 705.09 ms $\pm$ 3.28 & 325.07 ms $\pm$ 3.49 & 74.62 ms $\pm$ 4.53 \\
mvt & 15.23 ms $\pm$ 0.54 & 11.2 ms $\pm$ 0.65 & 7.82 ms $\pm$ 0.64 \\
nussinov & 6.38 ms $\pm$ 0.4 & 5.25 ms $\pm$ 0.12 & 2.79 ms $\pm$ 0.03 \\
seidel-2d & 13.47 ms $\pm$ 0.5 & 12.93 ms $\pm$ 0.6 & 10.51 ms $\pm$ 0.59 \\
symm & 77.65 ms $\pm$ 3.93 & 17.48 ms $\pm$ 1.69 & 18.35 ms $\pm$ 0.92 \\
syr2k & 92.07 ms $\pm$ 3.83 & 23.03 ms $\pm$ 4.77 & 18.81 ms $\pm$ 1.12 \\
syrk & 60.25 ms $\pm$ 4.68 & 20.03 ms $\pm$ 1.01 & 10.75 ms $\pm$ 1.36 \\
trisolv & 3.39 ms $\pm$ 0.14 & 2.89 ms $\pm$ 0.09 & 2.13 ms $\pm$ 0.05 \\
trmm & 49.73 ms $\pm$ 4.66 & 24.0 ms $\pm$ 4.98 & 10.75 ms $\pm$ 0.63 \\
\end{longtable}
}

\section{Regionalized \Egraphs}
\label{app:regionalized}

\autoref{fig:regionalized-egraph} illustrates the decomposition described in
  \autoref{sec:controlflows}.
The example applies a rule that swaps the two branches of a conditional,
  relating terms in different regions,
  and then shows the three regionalized \egraphs
  that the resulting \egraph decomposes into for extraction.

\begin{figure}[h]
  \centering
  \includegraphics[width=\textwidth]{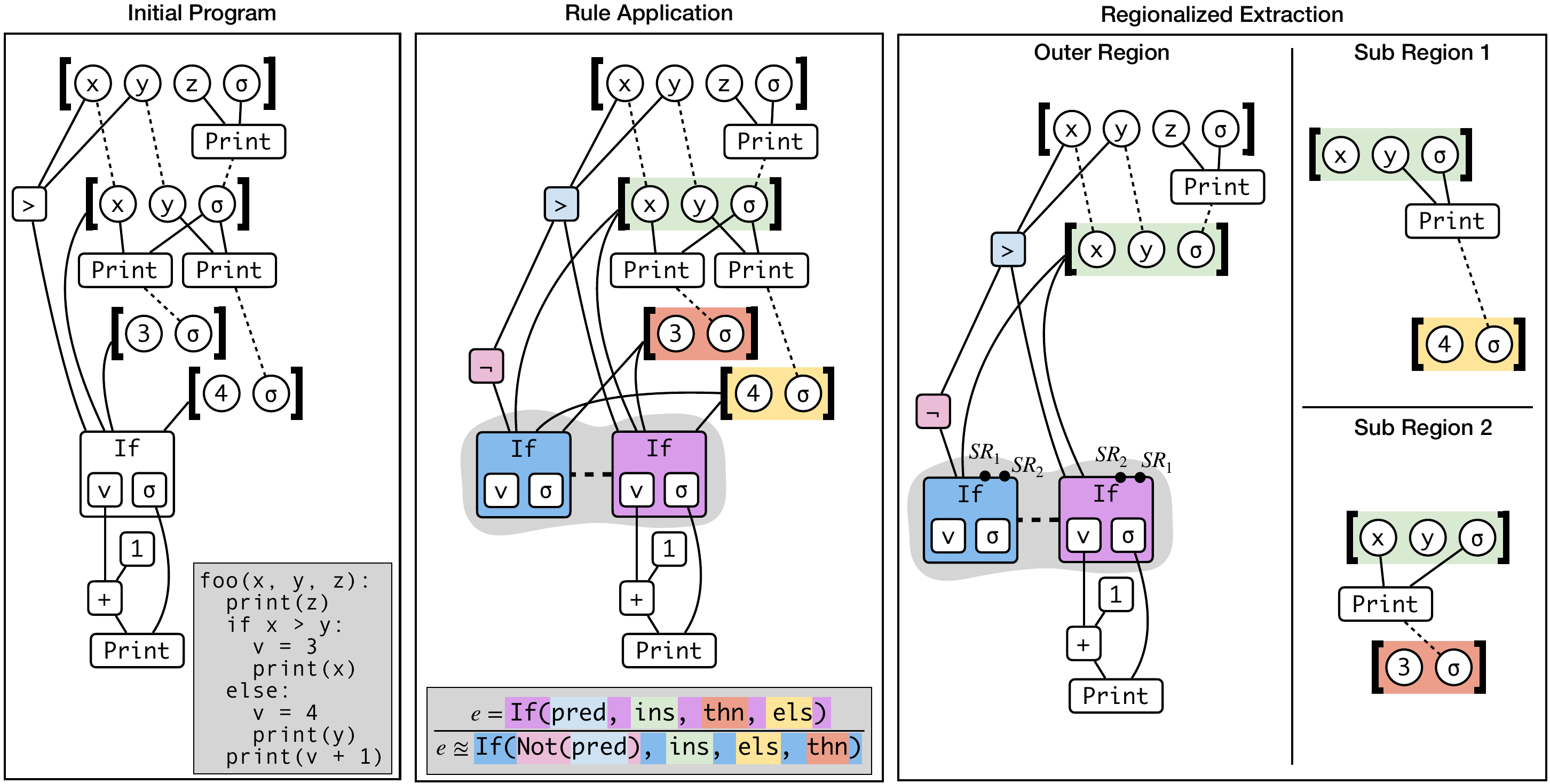}
  \caption{An example program and a rule application that switches the two
    branches of a conditional.
    The resulting \egraph contains three regionalized \egraphs, shown in the
    third pane, each of which forms a separate \effectsafe extraction problem.}
  \Description{An example program, a rule application switching the branches of
    a conditional, and the three regionalized e-graphs the result decomposes
    into.}
  \label{fig:regionalized-egraph}
\end{figure}

\section{Rules for the Lowbit Optimization}
\label{app:lowbit}

\autoref{sec:Fenwick} implements the \texttt{lowbit} optimization
  with the three \egglog rules given below as inference rules.
Each is straightforward syntactic pattern matching over \eggcc's IR,
  in which a source-level \texttt{while} loop becomes an
  \texttt{If} whose then-branch holds a \texttt{DoWhile};
  the loop's first output is its continuation predicate,
  written here as $[\mathit{pred}] \mathbin{+\!\!+} \mathit{outs}$.
Throughout, $p$ is the position of the halved value $n$
  in the \texttt{If}'s input tuple,
  $i$ is its position in the loop's tuples ---
  the index \texttt{NTZIterations} records ---
  and $j$ is the position of the accumulator.
Within the loop body, $\argt$ is the body's own argument tuple,
  so $\argt[i]$ is the halved value on the current iteration,
  as distinct from $\mathit{lp\_inputs}[i]$, the value entering the loop.

The first rule records that a predicate $e$ tests whether $x$ is even.
The relation it builds, \texttt{IsIsEven}, reads
  ``$e$ is an is-even test for $x$''.

\[
  \frac{
    e = \texttt{Eq}(x,\ \texttt{Mul}(\texttt{Div}(x,\ 2),\ 2))
  }{
    \texttt{IsIsEven}(e,\ x)
  }
\]

The second rule identifies a loop whose iteration count is the number of
  trailing zeros of an input $n$: it tests whether $n$ is even and halves $n$
  on each iteration, returning $n$ unchanged when $n$ is odd.

\[
  \frac{
    \begin{array}{c}
      e = \texttt{If}(\mathit{cond},\ \mathit{inputs},\
        \texttt{DoWhile}(\mathit{lp\_inputs},\
        [\mathit{pred}] \mathbin{+\!\!+} \mathit{outs}),\ \mathit{oddbr}) \\
      n = \mathit{inputs}[p] = \mathit{lp\_inputs}[i] = \mathit{oddbr}[i]
      \qquad
      \mathit{outs}[i] = \texttt{Div}(\argt[i],\ 2) \\
      \texttt{IsIsEven}(\mathit{cond},\ n)
      \qquad
      \texttt{IsIsEven}(\mathit{pred},\ \texttt{Div}(\argt[i],\ 2))
    \end{array}
  }{
    \texttt{NTZIterations}(e,\ n,\ i)
  }
\]

The third rule fires when such a loop also carries an accumulator that starts
  at $1$ and doubles every iteration;
  that accumulator therefore holds $\texttt{lowbit}(n)$ on exit.
It equates that output with the one-line form and, since the loop then has no
  remaining use, equates the halved value with $n / \texttt{lowbit}(n)$.

\[
  \frac{
    \begin{array}{c}
      e = \texttt{If}(\mathit{cond},\ \mathit{inputs},\
        \texttt{DoWhile}(\mathit{lp\_inputs},\
        [\mathit{pred}] \mathbin{+\!\!+} \mathit{outs}),\ \mathit{oddbr})
      \qquad
      \texttt{NTZIterations}(e,\ n,\ i) \\
      \mathit{lp\_inputs}[j] = \mathit{oddbr}[j] = 1
      \qquad
      \mathit{outs}[j] = \texttt{Mul}(\argt[j],\ 2)
    \end{array}
  }{
    e[j] \approxeq_{\mathcal{G}} \texttt{Bitand}(n,\ -n)
    \qquad
    e[i] \approxeq_{\mathcal{G}} \texttt{Div}(n,\ \texttt{Bitand}(n,\ -n))
  }
\]

Nothing forces the optimization to be split this way; a single rule would
  express it just as well.
We use three because the intermediate relations make each step readable,
  and because \texttt{IsIsEven} is reusable by other rules.



\end{document}